\let\oldsqrt\sqrt
\def\sqrt{\mathpalette\DHLhksqrt}
\def\DHLhksqrt#1#2{%
\setbox0=\hbox{$#1\oldsqrt{#2\,}$}\dimen0=\ht0
\advance\dimen0-0.2\ht0
\setbox2=\hbox{\vrule height\ht0 depth -\dimen0}%
{\box0\lower0.4pt\box2}}
\newcommand{\tr}{\operatorname{tr}}
\newcommand{\bea}{\begin{eqnarray}}
\newcommand{\eea}{\end{eqnarray}}
\newcommand{\acro}[1]{#1}
\newcommand{\Bcal}{\mathcal{B}}
\newcommand{\Ecal}{\mathcal{E}}
\newcommand{\Fcal}{\mathcal{F}}
\newcommand{\Gcal}{\mathcal{G}}
\newcommand{\Kcal}{\mathcal{K}}
\newcommand{\Hcal}{\mathcal{H}}
\newcommand{\Ical}{\mathcal{I}}
\newcommand{\Mcal}{\mathcal{M}}
\newcommand{\Tcal}{\mathcal{T}}
\newcommand{\Vcal}{\mathcal{V}}
\newcommand{\Lcal}{\mathcal{L}}
\newcommand{\Scal}{\mathcal{S}}
\newcommand{\Ncal}{\mathcal{N}}
\newcommand{\Jcal}{\mathcal{J}}
\newcommand{\Pcal}{\mathcal{P}}
\newcommand{\Pprob}{\mathbb{P}}
\newcommand{\Rcal}{\mathcal{R}}
\newcommand{\ident}{\mathbbm{1}}
\newcommand{\inp}{\text{\normalfont \texttt{i}}}
\newcommand{\out}{\text{\normalfont \texttt{o}}}
\newcommand{\iu}{{i\mkern1mu}}
\newcommand{\kb}[2]{\ket{#1}\bra{#2}}
\newcommand{\one}{\mathbbm{1}}
\newcommand{\Cn}{\mathbbm{C}}
\newcommand{\sbt}{\,\begin{picture}(-1,1)(-1,-3)\circle*{2.5}\end{picture}\ }
\newcommand*\xbar[1]{%
  \hbox{%
     \vbox{%
      \hrule height 0.5pt 
      \kern0.5ex
      \hbox{%
         \kern-0.2em
         \ensuremath{#1}%
         \kern-0.0em
      }%
     }%
  }%
} 
\def\bra#1{\mathinner{\langle{#1}|}}
\def\ket#1{\mathinner{|{#1}\rangle}}
\def\braket#1{\mathinner{\langle{#1}\rangle}}
\def\ketbra#1#2{\ket{#1\vphantom{#2}}\!\bra{#2\vphantom{#1}}}
\newtheorem{theorem}{Theorem}
\newtheorem{proposition}[theorem]{Proposition}
\newtheorem{lemma}[theorem]{Lemma}
\begin{document}

\begin{center}{\Large \textbf{
Genuine Multipartite Entanglement in Time
}}\end{center}

\begin{center}
S. Milz\textsuperscript{1,2,*},
C. Spee\textsuperscript{1,3},
Z.-P. Xu\textsuperscript{3,$\dagger$},
F. A. Pollock\textsuperscript{2},
K. Modi\textsuperscript{2},
O. G{\"u}hne\textsuperscript{3}

\end{center}

\begin{center}
{\bf 1} Institute for Quantum Optics and Quantum Information, Austrian Academy of Sciences, Boltzmanngasse 3, 1090 Vienna, Austria
\\
{\bf 2} School of Physics and Astronomy, Monash University, Clayton, Victoria 3800, Australia
\\
{\bf 3} Naturwissenschaftlich-Technische  Fakult{\"a}t,  Universit{\"a}t Siegen,  Walter-Flex-Straße  3,  57068  Siegen,  Germany
\\
* simon.milz@oeaw.ac.at, 
$\dagger$ zhen-peng.xu@uni-siegen.de
\end{center}

\begin{center}
\today
\end{center}


\section*{Abstract}
{\bf
While spatial quantum correlations have been studied in great detail, much less is known about the genuine quantum correlations that can be exhibited by \textit{temporal} processes. Employing the quantum comb formalism, processes in time can be mapped onto quantum states, with the crucial difference that temporal correlations have to satisfy causal ordering, while their spatial counterpart is not constrained in the same way. Here, we exploit this equivalence and use the tools of multipartite entanglement theory to provide a comprehensive picture of the structure of correlations that (causally ordered) temporal quantum processes can display. First, focusing on the case of a process that is probed at two points in time -- which can equivalently be described by a tripartite quantum state -- we provide necessary as well as sufficient conditions for the presence of bipartite entanglement in different splittings. Next, we connect these scenarios to the previously studied concepts of quantum memory, entanglement breaking superchannels, and quantum steering, thus providing both a physical interpretation for entanglement in temporal quantum processes, and a determination of the resources required for its creation. Additionally, we construct explicit examples of W-type and GHZ-type genuinely multipartite entangled two-time processes and prove that genuine multipartite entanglement in temporal processes can be an emergent phenomenon. Finally, we show that genuinely entangled processes across multiple times exist for any number of probing times. 
}

\vspace{10pt}
\noindent\rule{\textwidth}{1pt}
\tableofcontents\thispagestyle{fancy}
\noindent\rule{\textwidth}{1pt}
\vspace{10pt}

\section{Introduction}
Correlations form the basis for scientific inferences about the world. They are used, amongst others, to detect (and discern different types of) causal relations~\cite{pearl_causality_2009, 1367-2630-18-6-063032, allen_quantum_2017}, to distinguish theories that abide by local realism from those that do not~\cite{bell_einstein_1964,clauser_proposed_1969,clauser_proposed_1970}, and to test if quantum mechanics satisfies the assumptions of non-invasive measurements and realism \textit{per se}~\cite{leggett_quantum_1985,leggett_realism_2008,emary_leggettgarg_2014}. The most striking type of genuine quantum correlations is entanglement~\cite{horodecki_quantum_2009, amico_entanglement_2008}, which is a prerequisite for the violation of Bell inequalities~\cite{brunner_bell_2014} and quantum steering~\cite{wiseman_steering_2007, cavalcanti_quantum_2016, uola_quantum_2020}, two phenomena that lie outside of what is possible by means of classical correlations. Additionally, entanglement provides an advantage in information processing tasks and is a fundamental resource in many quantum information protocols~\cite{nielsen_quantum_2000}.

There have been many attempts to import various facets of spatial quantum correlations to the temporal domain. Most notably, the violation of Leggett-Garg type inequalities~\cite{leggett_quantum_1985,leggett_realism_2008,emary_leggettgarg_2014} were introduced to capture quantum correlations in time in analogy to a Bell-type setup. That is, when a single quantum system is probed at different points \textit{in time} the resulting correlations can also go beyond what is possible in classical theories. However, Bell-type setups in time are fraught with difficulties as it is easy to construct fully classical, but invasive, setups that too can violate these inequalities maximally~\cite{budroni_temporal_2014,le_divisible_2017}.

The interpretational issues are less problematic for the case of `entanglement in time' and thus it is possible to classify and attribute operational meaning to genuinely quantum temporal correlations. In particular, the quantum comb formalism~\cite{chiribella_theoretical_2009, chiribella_quantum_2008} allows one to express \textit{any} temporal quantum process in terms of a multipartite quantum state -- called a quantum comb -- where each time the process is probed at corresponds to two Hilbert spaces. This is most transparent by means of the so-called Choi-Jamio{\l}kowski isomorphism (\acro{CJI}), which maps any (multi-time) quantum process to a (many-body) quantum state. Consequently, both spatial and temporal correlations can be analysed in one common framework, enabling the study of temporal correlations on the same mathematical footing as the analysis of spatial ones. In addition, as the respective correlations have a direct interpretation in terms of the properties of the underlying process, it allows one to provide a clear-cut meaning to statements like `different points in time are entangled with each other'\footnote{Within the related framework of consistent histories (CH)~\cite{griffiths_consistent_2001}, correlations as they are displayed in the violation of Leggett-Garg inequalities have consequently been dubbed `entanglement in time'~\cite{nowakowski_quantum_2017}. As the CH approach as used in~\cite{nowakowski_quantum_2017} only ascribes one Hilbert space to each point in time, the results presented in this paper, as well as their interpretation fundamentally differ from the ones provided in~\cite{nowakowski_quantum_2017}.}.

The quantum combs framework is naturally suited for describing multi-time quantum stochastic processes~\cite{pollock_non-markovian_2018, 1367-2630-18-6-063032}, as well as more exotic processes that lack global causal ordering~\cite{OreshkovETAL2012, oreshkov_causal_2016}. In each case quantifying quantum resources has significant operational value. For instance, for the former, the requisite entangling resources naturally relate to the quantum complexity of a stochastic process. Alternatively, having access to naturally occurring or engineered entanglement in time within quantum devices could help to enhance their performance~\cite{bylicka_non-markovianity_2014}. For the latter case, simulating causally indefinite processes requires spatial entanglement, tying together quantum correlations and exotic temporal phenomena~\cite{milz_entanglement_2018, milz_quantum_2020}.

While the isomorphism between quantum states and processes enables the systematic study of temporal correlations, it comes with two caveats; on the one hand, quantum states that correspond to quantum processes have to encapsulate the causal ordering of the process they describe. This means that measurements made at a later time cannot influence the statistics at earlier times, a requirement that imposes a hierarchy of trace conditions on quantum combs~\cite{chiribella_theoretical_2009}. Accordingly, known results on the existence of multipartite quantum states that satisfy desired entanglement properties cannot straightforwardly be applied to quantum combs, as the set of states that describe spatial scenarios does not coincide with the set of states that correspond to temporal processes~\cite{Costa2018}. Given these additional constraints, it is then natural to ask, what types of entanglement can exist in combs, and if there are genuinely multipartite entangled combs for any number of times. 

On the other hand, the respective interpretations of the observed correlations fundamentally differ in the spatial and the temporal case. While quantum states `only' describe the correlations between measurements on spatially separated parties, in a quantum comb correlations between different sets of parties have different interpretations. For example, in the simplest case, depending on the involved parties, entanglement can mean that two parties share a quantum state, share a non-entanglement breaking (EB) channel, or possess the ability to transmit quantum memory. Each of these cases has been analysed individually in the literature; phrased in the language of quantum causal modelling, the former two cases amount to a quantum common cause and a quantum direct cause, respectively~\cite{1367-2630-18-6-063032}. In Refs.~\cite{feix_quantum_2017,maclean_quantum-coherent_2017} the authors provided an example of a process that constitutes a superposition of common cause and direct cause scenarios, something, that is unattainable in quantum mechanics. The idea of a quantum memory and its connection to entanglement properties of the underlying comb was introduced in~\cite{giarmatzi_witnessing_2018}. 

Here, we provide a systematic study of the entanglement features -- both bipartite and multipartite -- a quantum comb can display, analyse in detail the respectively necessary and sufficient properties of the underlying dynamics for the presence of different types of entanglement in combs. Put differently, we analyze both what it means in a physical sense for a comb to be entangled, and work out the respective resources that would be required in order to create such different types of entanglement in temporal processes. Specifically, after analysing the bipartite entanglement properties of combs on two times (i.e., defined on three Hilbert spaces), we show that there exist genuinely multipartite entangled combs for any number of times, and provide explicit examples of both of W- and GHZ-type entangled combs on two times. Additionally, we show that in the temporal case -- in analogy to its spatial counterpart~\cite{miklin_multiparticle_2016} -- there exist processes with entanglement as an emergent quality, i.e., genuinely multipartite entangled states that do not display entanglement in their marginals even when one allows for conditioning. Along the way, we provide explicit circuits for each of the discussed cases, and relate them to existing phenomena discussed in the literature, like entanglement breaking superchannels~\cite{chen_entanglement-breaking_2019}, channel steering~\cite{piani_channel_2015}, as well as the aforementioned concepts of genuine quantum memory and the superposition of direct and common causes. In this way we provide a comprehensive picture of the entanglement properties of temporal processes.

\section{Preliminaries}
\subsection{Preliminaries: Quantum combs}
Throughout this article, we envision the following setup: An experimenter has access to a system -- considered to be finite dimensional -- which they can manipulate (i.e., transform, measure, discard, etc.) at successive points in time $t_1 < t_2 < \cdots <t_{n+1}$. In between these points, the system evolves freely, potentially interacting with degrees of freedom (henceforth dubbed \textit{environment}) that are out of the control of the experimenter. In the most general case, this free evolution is described by a quantum channel, that is, a completely positive trace preserving (CPTP) map (see Fig.~\ref{fig::Process}). Due to the interaction with the environment, the resulting multi-time statistics, or, equivalently, the quantum stochastic process that the experimenter probes, can display complex memory effects that can go beyond what is possible in classical physics~\cite{feix_quantum_2017,maclean_quantum-coherent_2017,giarmatzi_witnessing_2018}.
\begin{figure}
    \centering
    \includegraphics[width=0.75\linewidth]{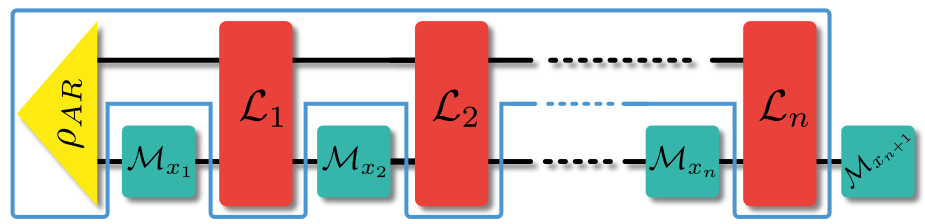}
    \caption{\textbf{Temporal quantum process.} A system of interest is probed sequentially at times $t_1,t_2,\dots$. Initially, the system can be in a state $\rho_{AR}$ that is correlated with degrees of freedom (labelled by $R$) that are outside the experimenter's control. Each measurement corresponds to a map $\Mcal_i$. In between measurements, the system and the environment together undergo a free evolution given by CPTP maps $\Lcal_1, \dots, \Lcal_n$. The resulting process is fully described by a comb $\Tcal_{n+1}$ (depicted by the blue outline).}
    \label{fig::Process}
\end{figure}

Mathematically, every manipulation the experimenter can perform on the system corresponds to a trace non-increasing completely positive (CP) map. For example, at each time $t_j$, the system of interest could be measured in the computational basis, yielding outcomes $\{x_j\}$. This measurement leaves the original state of the system in the state $\ketbra{x_j}{x_j}$ and is described by the CP map $\Pcal_{x_j}[\rho] = \braket{x_j|\rho|x_j} \ketbra{x_j}{x_j}$, where the probability to actually obtain outcome $x_j$ is given by $\tr(\Pcal_{x_j}[\rho])$. More generally, in order to gain different information about the underlying quantum stochastic process, the experimenter could choose to probe it in different ways. For example, instead of measuring in the computational basis, the experimenter could use a \textit{positive operator valued measure} (POVM), and, upon observing outcome $x_j$, feed forward a quantum state $\eta_{x_j}$. In this case, the map corresponding to the experimental manipulation would be given by $\Mcal_{x_j}[\rho] = \tr(E_{x_j}\rho) \eta_{x_j}$, where $E_{x_j}$ is the POVM element corresponding to the outcome $x_j$. 

In the most general case, at each time $t_j$ the experimenter could choose a general instrument $\Ical_j = \{\Mcal_{x_j}\}$, i.e., a collection of CP maps that add up to a CPTP map $\Mcal_j$, such that every outcome $x_j$ they observe corresponds to a transformation of the system given by $\Mcal_{x_j}$. Then, denoting the free system-environment evolution between times $t_j$ and $t_{j-1}$ by $\Lcal_{j}$, the probability  for the experimenter to observe outcomes $x_{n+1},\dots, x_1$ having used the instruments $\Jcal_{n+1},\dots, \Jcal_1$ at times $t_{n+1},\dots,t_1$ is given by 
\begin{gather}
\label{eqn::Probabilities}
\begin{split}
\Pprob(x_{n+1},\dots,x_1|\Jcal_{n+1},\dots,\Jcal_1) &= \tr(\Mcal_{x_{n+1}}\circ \Lcal_n\circ \cdots \circ \Lcal_1\circ \Mcal_{x_1} [\rho_{AR}]) \\
&=: \tr\{\Tcal_{n+1}[\Mcal_{x_{n+1}},\dots,\Mcal_{x_1}]\}\,,
\end{split}
\end{gather}
where the maps $\{\Lcal_n,\dots,\Lcal_1\}$ describe the free system-environment evolution in between measurements and we have defined the multilinear functional $\Tcal_{n+1}$ (see Fig.~\ref{fig::Process}). This functional (or slight variations thereof) appears under varying names in various fields of quantum information theory and beyond. Depending on the context, it is called a \textit{quantum comb}~\cite{chiribella_transforming_2008, chiribella_quantum_2008, chiribella_theoretical_2009} (in the study of higher order quantum maps), \textit{process tensor}~\cite{modi_operational_2012, pollock_non-markovian_2018, milz_introduction_2017} and \emph{causal automata/non-anticipatory channels}~\cite{Kretschmann2005, caruso2014} (when concerned with general open quantum processes with memory), \textit{causal box}~\cite{portmann_causal_2015} (when quantum networks with modular elements are investigated), \textit{operator tensor}~\cite{hardy_operational_2016, hardy_operator_2012} and \textit{superdensity matrix}~\cite{cotler_superdensity_2017} (in the field of  quantum information in general relativistic space-time), \textit{process matrix}~\cite{OreshkovETAL2012, 1367-2630-18-6-063032, oreshkov_causal_2016, allen_quantum_2017} (when used for quantum causal modelling), and \textit{quantum strategy} (in the context of quantum games~\cite{gutoski2007toward}). Following Refs.~\cite{chiribella_transforming_2008, chiribella_quantum_2008, chiribella_theoretical_2009}, we will call $\Tcal_{n+1}$ a quantum comb (or simply comb). Importantly, combs can encapsulate \textit{any} causally ordered quantum process~\cite{milz_kolmogorov_2020}, and are thus the mathematical framework for the field of quantum causal modelling~\cite{1367-2630-18-6-063032, allen_quantum_2017}.

As can be seen from Eq.~\eqref{eqn::Probabilities}, $\Tcal_{n+1}$ depends on the initial system-environment state $\rho_{AR}$ as well as the intermediate maps $\{\Lcal_j\}$ and contains all statistical information that can be inferred from an underlying process when probing it at times $t_1, \dots, t_{n+1}$. As such, it contains all spatio-temporal correlations -- and thus all causal relations -- that the quantum stochastic process of interest can exhibit. Additionally, due to the linearity of quantum mechanics, $\Tcal_{n+1}$ can be reconstructed with a finite number of measurements~\cite{pollock_non-markovian_2018}. These latter two properties of combs are analogous to those of \textit{quantum states}, with the difference that the latter only contain all inferrable \textit{spatial} joint probabilities, while the former contains all \textit{spatio-temporal} correlations for measurements that can be separated both in space \text{and} time.

Employing the Choi-Jamio{\l}kowski isomorphism~\cite{de_pillis_linear_1967,jamiolkowski_linear_1972,Choi1975} we can make this analogy more transparent. Specifically, each of the CP maps $\Mcal_{x_j}$ transforms input ($\inp$) states $\rho \in \Bcal(\Hcal_j^\inp)$ to output ($\out$) states $\rho' = \Mcal_{x_j}[\rho] \in \Bcal(\Hcal_j^\out)$, where $\Bcal(\Hcal_j^{\texttt{y}})$ denotes the space of bounded linear operators on $\Hcal_j^{\texttt{y}}$, with $\texttt{y} \in \{\inp,\out\}$. Via the CJI, any such map corresponds to a matrix $M_{x_j} \in  \Bcal(\Hcal_j^\out \otimes \Hcal_j^\inp)$ (see App.~\ref{app::Choi} for details on the CJI), where we adapt the convention that maps are denoted by calligraphic letters, and their Choi matrices by upright ones.

Importantly, the respective output space can also be trivial, i.e., $\Hcal_j^\out \cong \mathbbm{C}$, which is the case when the system of interest is discarded after the measurement (see below). A map $\Mcal_{x_j}$ is CP iff its Choi matrix is positive, while it is trace preserving (TP) iff its Choi matrix satisfies $\tr_{j^\out}(M_{x_j}) = \ident_{j^\inp}$. Here, $\tr_{x}^\texttt{y}$ ($\ident_{x}^\texttt{y}$) denotes the partial trace over (identity matrix on) the Hilbert space $\Hcal_{x}^\texttt{y}$. 

While we will mostly consider cases where the input and output dimensions of the respective maps coincide (except for the last time $t_{n+1}$, see below), and where the size of the considered system does not vary with time, for better bookkeeping, we always distinguish between the input and output space, and additionally label the respective Hilbert spaces with the time $t_j$ they correspond to. Employing the CJI, Eq.~\eqref{eqn::Probabilities} can be rewritten as
\begin{gather}
\label{eqn::tempBorn}
\Pprob(x_{n+1},\dots,x_1|\Jcal_{n+1},\dots,\Jcal_1) = \tr[(M^\mathrm{T}_{x_{n+1}}\otimes \cdots \otimes M^\mathrm{T}_{x_1}) \, \Upsilon_{n+1} ]\,,
\end{gather}
where $\sbt^\mathrm{T}$  denotes the transposition.  $\Upsilon_{n+1}\in \Bcal(\Hcal_{n+1}^\inp \otimes \Hcal_{n}^\out \otimes \cdots \otimes \Hcal_{1}^\out \otimes \Hcal_{1}^\inp)$ is the Choi matrix of $\Tcal_{n+1}$, deviating from our normal naming convention for maps and their Choi states to avoid confusion with the transposition and to conform with the notation used in the literature~\cite{pollock_non-markovian_2018,pollock_operational_2018}. As the evolution after the final time $t_{n+1}$ is not of interest, without loss of generality, the final instrument $\Jcal_{n+1}$ is a POVM (i.e., it has a trivial output space), implying $\sum_{x_{n+1}} M_{x_{n+1}} = \ident_{n+1}^\inp$. In slight abuse of notation, in what follows, whenever there is no risk of confusion, we will call both $\Tcal_{n+1}$ and its Choi matrix $\Upsilon_{n+1}$ the comb of the quantum process at hand. 

It is important to stress the similarity of Eq.~\eqref{eqn::tempBorn} to the Born rule. The joint probability distributions for measuring a spatially separated multipartite quantum state $\rho$ with POVM elements $E_{x_1}, \dots, E_{x_{n+1}}$ corresponding to the outcomes of spatially separated parties $1, \dots, n+1$ is given by
\begin{gather}
\Pprob(x_{n+1},\dots,x_1|\Jcal_{n+1},\dots,\Jcal_1) = \tr[(E_{x_{n+1}}\otimes \cdots \otimes E_{x_1}) \, \rho ]\,.
\end{gather}
This is akin to Eq.~\eqref{eqn::tempBorn}, which thus has been dubbed generalized Born rule for temporal processes~\cite{chiribella_memory_2008,shrapnel_updating_2017}. $\Upsilon_{n+1}$ can hence be considered a quantum state \textit{in time}, and it is natural to ask what kinds of correlations it can display, and what kinds of resources are necessary for their creation.

Just like a quantum state, $\Upsilon_{n+1}$ is a positive matrix. However, in contrast to quantum states, $\Upsilon_{n+1}$ has to encapsulate the causal ordering of sequential measurements~\cite{chiribella_theoretical_2009, Kretschmann2005}. Specifically, the structure of $\Upsilon_{n+1}$ has to be such that the choice of instrument at time $t_j$ cannot influence the statistics observed at any earlier time $t_i < t_j$. This requirement imposes a hierarchy of trace conditions~\cite{chiribella_theoretical_2009}:
\begin{gather}
\label{eqn::Causality}
    \begin{split}
    \tr_{n+1}^\inp (\Upsilon_{n+1}) &= \ident_{{n}}^\out \otimes \Upsilon_n \\
    \tr_{{n}}^\inp (\Upsilon_{n}) &= \ident_{n-1}^\out \otimes \Upsilon_{n-1} \\
    &\; \; \vdots \\
    \tr_{2}^\inp (\Upsilon_2) &= \ident_{1}^\out \otimes \Upsilon_1\, ,
    \end{split}
\end{gather}
where $\Upsilon_1 \in \Bcal(\Hcal_1^\inp)$ is a quantum state\footnote{Note that the role of input and output spaces for the comb and the CP maps it acts on is interchanged (outputs of the CP maps are inputs of the comb, and vice versa), such that all the above trace conditions are with respect to spaces that are labelled by $\inp$, not $\out$.}. These equations also fix the overall trace of $\Upsilon_{n+1}$ to be $\tr(\Upsilon_{n+1}) = d_{n}^\out\cdot d_{n-1}^\out\cdots d_{1}^\out:= d^\out$, with $d_{x}^\out = \dim(\Hcal_{x}^\out)$. Vice versa, \textit{any} positive matrix that satisfies the above trace conditions can be considered the Choi matrix of an underlying quantum stochastic process, or, equivalently, of an underlying quantum causal model~\cite{chiribella_theoretical_2009}.

To see why the above conditions ensure causal ordering, consider, for example, the case of three times $\{t_1,t_2,t_3\}$. Statistics at time $t_1$ should not depend on the choice of instruments $\Jcal_2$ and $\Jcal_3$ at times $t_2$ and $t_3$. Any given choice of these latter two instruments implies that -- on average -- at times $t_2$ and $t_3$, the experimenter performs CPTP maps with Choi matrices $M_{2}$ and $M_{3}$ respectively. As the output space of $\Jcal_3$ is trivial, it is a POVM, implying $M_3 = \sum_{x_3} M_{x_3} = \ident_{3}^\inp$. With this, we see that
\begin{gather}
\begin{split}
\sum_{x_2x_3}\Pprob(x_3, x_2, x_1|\Jcal_3,\Jcal_2,\Jcal_1) &= \tr[(M^\mathrm{T}_3 \otimes M^\mathrm{T}_2 \otimes M^\mathrm{T}_{x_1})\,\Upsilon_3] = \tr[(M^\mathrm{T}_2 \otimes M^\mathrm{T}_{x_1} (\ident_{2}^\out \otimes \Upsilon_2)] \\
&= \tr[M^\mathrm{T}_{x_1} (\ident_{1}^\out \otimes \Upsilon_{1})] = \Pprob(x_1|\Jcal_1)\, ,
\end{split}
\end{gather}
where we have alternatingly used the property $\tr_{j}^\out (M_{j}) = \ident_{j}^\inp$ of CPTP maps and the causality conditions of Eqs.~\eqref{eqn::Causality}. As $\Upsilon_{1}$ is independent of the choice of $\Jcal_3$ and $\Jcal_2$, so is $\Pprob(x_1|\Jcal_3,\Jcal_2,\Jcal_1)$.

In order to investigate the structural properties of combs and to see how they stem from the underlying dynamical `building blocks' (i.e., the initial state $\rho_{AR}$, as well as the intermediate maps $\{\Lcal_j\}$), it is convenient to introduce the \textit{link product}~\cite{chiribella_theoretical_2009}. For example, $\Upsilon_{n+1}$ can be straightforwardly computed as
\begin{gather}
    \Upsilon_{n+1} = \rho_{AR} \star L_1 \star \cdots \star L_n\, ,
\end{gather}
where the link product $\star$ between two matrices $F\in \Bcal(\Hcal_x \otimes \Hcal_y)$ and $G\in \Bcal(\Hcal_y \otimes \Hcal_z)$ is given by 
\begin{gather}
\label{eqn::link}
F\star G = \tr_y[(F\otimes \ident_z)(G^{\mathrm{T}_y} \otimes \ident_{x})]\, .
\end{gather}
Put shortly, the link product traces two Choi matrices over the spaces they share, and corresponds to a tensor product on the remaining spaces (importantly, $F\star G = F\otimes G$ if $F$ and $G$ are defined on disjoint spaces). Intuitively, ``$\star$" expresses the concatenation ``$\circ$" of maps to the case of Choi matrices, i.e., the Choi matrix of $\Fcal \circ \Gcal$ is given by $F \star G$. For example, the action of a map $\Lcal$ on a state $\rho$ can be written as $\Lcal[\rho] = L\star \rho$. Importantly, the link product satisfies $F\star (G \star H) = (F\star G) \star H = F\star G \star H$, it is commutative for all cases we consider, and the link product of positive matrices is itself a positive matrix. To keep better track of the involved spaces, we will often additionally label Choi matrices with the spaces they are defined on. While we provide a more detailed discussion of the link product in App.~\ref{app::Choi} (see~\cite{chiribella_theoretical_2009} for thorough derivations), the above definition is sufficient for our purposes. We will make use of it frequently in what follows to derive the properties of combs from their underlying building blocks.

Mapping the somewhat abstract linear functional $\Tcal_{n+1}$ onto its Choi matrix $\Upsilon_{n+1}$ has the advantage that the latter is -- up to normalization -- a quantum state, and all temporal correlations that the given process $\Tcal_{n+1}$ can display are now encoded in the spatial correlations of the (unnormalized) quantum state $\Upsilon_{n+1}$. Consequently, the vast machinery that has been developed for the analysis of bi- and multipartite entanglement in quantum states~\cite{guhne_entanglement_2009, horodecki_quantum_2009,friis_entanglement_2019} can be used to analyse temporal correlations that are genuinely quantum. Such a program has recently led to the definition and investigation of genuine quantum memory in temporal processes~\cite{giarmatzi_witnessing_2018}. Naturally, as combs are not normalized to unity, in what follows, when we speak of entanglement, we will always understand it \textit{up to normalization}; then, for example, a comb $\Upsilon_{ABC}$ is separable in the splitting $A:BC$, if it can be written in the form $\sum_\alpha \rho_A^{(\alpha)} \otimes D^{(\alpha)}_{BC}$, where $\rho_A^{(\alpha)}, D^{(\alpha)}_{BC} \geq 0$ for all $\alpha$.

Throughout, we will predominantly analyse the three-party case, both investigating the types of genuinely tripartite entanglement in temporal processes that can persist, as well as the necessary and sufficient conditions on the underlying dynamics for their occurrence. This case is simple enough to allow for explicit results, yet already displays genuine quantum effects~\cite{giarmatzi_witnessing_2018}. To simplify notation, we denote the involved Hilbert spaces by $\Hcal_A, \Hcal_B,$ and $\Hcal_C$ instead of $\Hcal_{1^\inp}, \Hcal_{1^\out}$, and $\Hcal_{2^\inp}$, and the corresponding comb by $\Upsilon_{ABC}$. Consequently, the combs we will consider satisfy
\begin{gather}
\label{eqn::Simplified_Caus}
    \Upsilon_{ABC} \geq 0, \quad \tr_{C} (\Upsilon_{ABC}) = \ident_B \otimes \Upsilon_A, \quad \tr(\Upsilon_A) = 1\, . 
\end{gather}
Besides notational simplification, this relabelling has the additional advantage of providing an intuitive role that each of the `parties' play. $A$ corresponds to Alice measuring the system state at $t_1$, $B$ corresponds to Bob feeding forward a state at $t_1$ (\textit{after} Alice's measurement), and $C$ corresponds to Charlie measuring the final state of the system at time $t_2$ (see Fig.~\ref{fig::simplified_comb}). With this, for example, different kinds of bipartite entanglement in $\Upsilon_{ABC}$ correspond to different kinds of `control' that each party has over the correlations the other two share. Throughout this article, we will adopt the convention that the Choi states of maps are labeled in the temporal order in which the respective spaces appear.
\begin{figure}
    \centering
    \includegraphics[width=0.55\linewidth]{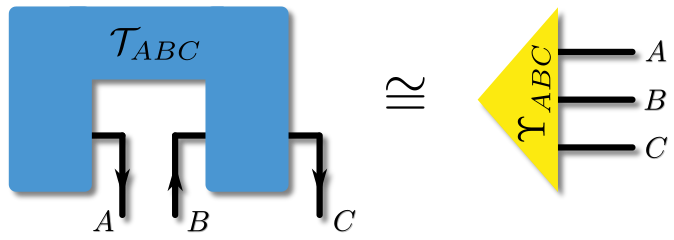}
    \caption{\textbf{Two-step process.} A two step process can be considered (up to normalization) a three-partite quantum state, where each of the parties has a different role. Alice (A) can measure the state of the system at time $t_1$, Bob (B) can feed forward a state at $t_1$, and Charlie (C) can measure the state of the system at $t_2$. To emphasize the causal ordering of the process, arrows have been added to the `legs' of the comb.}
    \label{fig::simplified_comb}
\end{figure}
\subsection{Preliminaries: Entanglement\label{prelim_ent}}
While structurally similar to \textit{spatial} quantum correlations, it is a priori unclear how the additional causality constraints of Eqs.~\eqref{eqn::Causality} affect the phenomena of multipartite entanglement that can persist in the temporal setting. Before studying this in detail we will first set the ground and review some important definitions and results from entanglement theory.

To begin with, states are entangled if they cannot be prepared with local operations (by potentially spatially separated parties) and classical communication among them (\acro{LOCC}). In the bipartite case this implies that they cannot be written in the form
\begin{gather} \label{eq_sep}
\rho^\text{sep}_{A:B}=\sum_ i p_i \rho^A_i\otimes \rho^B_i,
\end{gather}
where  here and in the following $\{p_i\}$ is a probability distribution and  $ \rho^A_i$ ($\rho^B_i$) are valid density matrices of party A (party B) respectively. If states are of this form they are called separable, and we will denote them by $\rho_{A:B}^\text{sep}$. An important example of a bipartite entangled state which is in fact maximally entangled  and which plays also an important role in the CJI is $\ket{\Phi^+}=1/\sqrt{2}(\ket{00}+\ket{11})$. Whether the Choi matrix is entangled or not provides information about the underlying channel. More precisely, the Choi matrix is separable if and only if the corresponding quantum channel is entanglement-breaking (see~\cite{EB_Choi} and references therein).

In the multipartite scenario the situation is more complex (see, e.g.~\cite{guhne_entanglement_2009}). The straightforward generalization of Eq.~(\ref{eq_sep}) yields fully separable states. However, it may also be that a multipartite state of $N$ parties contains at most  entanglement among $k$ parties for $k= 2, \ldots, N$. For the case that the state indeed contains $N$-partite entanglement  (in any possible decomposition) it is called genuine multipartite entangled (GME).
As mentioned before, we will mainly be interested in the three-party case. In this case if a  state is entangled one may either observe only bipartite entanglement or genuine tripartite entanglement. It may be for example that a state is separable with respect to some specific bipartition, e.g. $A:BC$. That is, it can be written as $\rho^\text{sep}_{A:BC}=\sum p_i \rho^A_i\otimes \rho^{BC}_i$ and it does not contain any entanglement between party $A$ and parties $B, C$ which we denote by $\Ecal(A:BC)=0$. Note that, however, in this case parties $B$ and $C$ can still share entanglement (in contrast to a fully separable state). Any biseparable state is then some mixture of biseparable states with respect to different bipartitions, i.e., it can be written in the form 
\begin{gather} \label{eq_bisep}
\rho= q_1 \rho^\text{sep}_{A:BC} + q_2 \rho^\text{sep}_{B:AC} +q_3 \rho_{C:AB}^{\text{sep}},
\end{gather}
where $\{q_i\}$ are probabilities that sum to unity. Any state that is not of this form is genuinely tripartite entangled. 

By definition any GME state has to contain bipartite entanglement across all cuts, i.e., $\Ecal(A:BC)>0$, $\Ecal(B:AC)>0$ and $\Ecal(C:AB)>0$. This does not necessarily hold true if one particle gets lost, i.e., the marginals may be separable or, stated differently, there exist GME states for which after performing a partial trace over $C$ the resulting state is separable, i.e., $\Ecal(A:B)=0$, and analogously for all other parties. For example the GHZ state, $\ket{GHZ}=(\ket{000}+\ket{111})/\sqrt{2}$ shows this property. Performing a partial trace is a very special case of a POVM measurement. As has been also shown one can find GME states for which any possible POVM measurement yields a separable state on the remaining parties~\cite{miklin_multiparticle_2016}. Assuming that for example party $C$ is performing the measurement, we will denote this conditional form of entanglement by either $\Ecal(A:B|c)>0$ in case there exists some post-measurement state on $A$ and $B$ which is entangled or $\Ecal(A:B|c)=0$ otherwise. Note that in the first case the measurements of party $C$ may be able to influence the resource $A$ and $B$ share. We will consider the implications of such an effect on temporal processes in Sec.~\ref{sec::AB|c}. Note further that the notion of conditional entanglement considered here is different from localizable entanglement~\cite{localizableent}, which is defined as the maximal entanglement among two parties that can be obtained on average via local measurements on the other parties.

There exist several criteria which may allow one to detect that a state is entangled. In particular, for the two-qubit case and the case of a qubit and a qutrit a necessary and sufficient condition for entanglement which can be easily evaluated exists.  It is known as the PPT criterion or Peres-Horodecki criterion~\cite{peres_separability_1996, horodecki_separability_1996} and states the following: A state with total dimension $d=d_Ad_B\leq 6$ is separable if and only if its partial transposition yields a  positive semi-definite operator, i.e., it has a positive partial transpose (\acro{PPT}). For higher dimensions any separable state is PPT however the converse is not true. Hence, the PPT criterion still constitutes a necessary criterion for separability and still allows to certify entanglement of bipartite states. 

Entanglement can also be detected by considering entanglement witnesses (EWs)~\cite{horodecki_separability_1996, EW}. These are operators which yield a non-negative expectation value for all separable states but which detect at least one entangled state via a negative expectation value. For any entangled state (independent of the number of parties or local dimensions) there exists an EW that heralds the presence of entanglement~\cite{horodecki_separability_1996}. It is, however, not clear how to construct it in general.

Matters are significantly simplified, if the set of separable states is relaxed to the set of states that are PPT mixtures, which are written as 
\begin{gather}
\label{eqn::PPTMixer}
    \rho = q_1 \rho^\text{ppt}_{A:BC} + q_2 \rho^\text{ppt}_{B:AC} +q_3 \rho_{C:AB}^{\text{ppt}}\, ,
\end{gather}
where each term in the above equation corresponds to a state with a PPT in the respective splitting. Any state that cannot be written in the form of Eq.~\eqref{eqn::PPTMixer} is GME, but there are GME states that are PPT mixtures. Importantly, using the concept of EWs, membership to the set of PPT mixtures can be decided by means of the following semi-definite program (\acro{SDP})~\cite{GMN}
\begin{gather}
 \begin{tabular}{r l}
$\min$ & \!$\tr(W\rho)$ \\
such that &\!$W=Q_M+P_M^{\mathrm{T}_M}\,\textrm{ with} \,\,\tr(W)=1$, \\&$P_M\geq 0, \textrm{ and} \,\, Q_M\geq 0 \,\,\forall M\in \{A, B, C\}$.
\end{tabular}
\label{eqn::PPTSDP}
\end{gather}
This can be easily seen as for any state $\rho$ of the form in Eq. (\ref{eq_bisep}) it holds that
\begin{gather}
\begin{split}
    \tr(W\rho)&=q_1 \tr(Q_A\rho^\text{sep}_{A:BC})+q_1\tr(P_A[\rho^\text{sep}_{A:BC}]^{\mathrm{T}_A}) \\&+ q_2 \tr(Q_B\rho^\text{sep}_{B:AC})+q_2\tr(P_B[\rho^\text{sep}_{B:AC}]^{\mathrm{T}_B}) \\&+q_3 \tr(Q_C\rho_{C:AB}^{\text{sep}})+q_3\tr(P_C[\rho_{C:AB}^{\text{sep}}]^{\mathrm{T}_C})\geq 0.
\end{split}
\end{gather}
Here we used that $\tr(\sigma P_M^{\mathrm{T}_M})=\tr(\sigma^{\mathrm{T}_M} P_M)$ and the inequality follows from the fact that $[\rho^\text{sep}_{A:BC}]^{\mathrm{T}_A}\geq 0$ (and analogously for the other splittings) and by definition $P_M, Q_M\geq 0$.
Hence, in case a negative value is observed the state is certified to be GME. Moreover, it has been shown that for any state that is GME there exists a witness of the form given in the SDP~\eqref{eqn::PPTSDP} which detects it~\cite{GMN}.

For certain classes of states the following analytical criterion is particularly useful to prove GME. It has been shown~\cite{entcrit} that for a biseparable $n$-qubit state, $\rho^{(n)}$, it holds that
\begin{gather}\label{biseparable}
|\rho^{(n)}_{(0\ldots 0,1\ldots 1)}|\leq \frac{1}{2}\sum_{|I|\in\{1,\ldots, n-1\}}  \sqrt{\rho_{(I,I)}^{(n)} \rho^{(n)}_{({\bar I},{\bar I})}},\end{gather}
where here we use the notation $\rho^{(n)}_{(i j k \ldots, \alpha \beta \gamma\ldots )}= \left< i j k \ldots|\rho^{(n)}|\alpha \beta \gamma\ldots\right>$, $I=(i_1, i_2 \ldots ,i_n)$ with $i_j\in\{0, 1\}$ and ${\bar I}$ is obtained from the tuple $I$ by swapping zeroes and ones. Moreover, we denote by $|I|$ the Hamming weight of the tuple and $\sum_{|I|=k}$ denotes a sum over all tuples which have Hamming weight $k$.  In case the inequality is violated the state is GME.

So far we considered the question among how many parties entanglement has to persist in order to create a state. However, one may also be interested in organizing states into different classes of entanglement. Such a classification can be achieved by considering stochastic local operations assisted by classical communication (SLOCC).  If one can transform with non-vanishing probability the pure state $\ket{\Psi} $ into another pure state  $\ket{\Phi} $  via LOCC and the reverse transformation from $\ket{\Phi} $ to $\ket{\Psi} $ is also possible via SLOCC, both states are within the same SLOCC class~\cite{3qubitSLOCC}. Mathematically, this corresponds to $\ket{\Phi} \propto A_1\otimes \ldots \otimes A_n \ket{\Psi}$ with $A_i$ being local invertible operators. For three qubits there exist two different SLOCC classes which are genuine tripartite entangled~\cite{3qubitSLOCC}, the W- class and the GHZ-class. Well-known representatives of these classes are the W state, $\ket{W}\propto \ket{001}+\ket{010}+\ket{100}$, and the GHZ state, $\ket{GHZ}\propto \ket{000}+\ket{111}$. 

The concept of SLOCC classes can be generalized from pure states to mixed states by defining such a class as the convex hull over all pure states within the closure of a SLOCC class~\cite{3qubitmixed}. As an example, consider the W-class whose closure also includes all biseparable and fully separable states. Hence, the W-class (for mixed states) contains all states that are convex combinations of pure states within the W-class, biseparable and fully separable states. As the closure of the GHZ class for three qubits contains all pure three-qubit states, all states are contained in the GHZ class for mixed states. Moreover, all states which are in GHZ$\backslash$ W have the property that in any decomposition at least one state is contained in the GHZ-class (of pure states). Such states can also be identified by using witness operators~\cite{3qubitmixed, SLOCCwitnesses}. An example of such a SLOCC witness is given by~\cite{3qubitmixed} $W=(3/4) \one-\kb{GHZ}{GHZ}$, which gives a positive expectation value for any state within the W-class but is able to detect, e.g., the GHZ state.
Another way to distinguish among different SLOCC classes can be by considering SL invariants~\cite{3qubitSLOCC,normalform}. The tangle~\cite{tangle} (which is also an entanglement measure) is such a quantity, it is non-zero for states within GHZ$\backslash$W and zero for all states in the W-class~\cite{3qubitSLOCC}. Below, we will use these criteria to show that processes of all SLOCC classes exist for the case of two times, i.e., three parties.

\subsection{Preliminaries: States vs. Processes} 
\label{sec::PrelStates}
As mentioned, via the CJI, all quantum combs can be considered as quantum states (up to normalization). Hence, the concepts and tools presented above can be also applied for their characterization. When doing so we will use the normalization for combs, i.e., $\tr (\Upsilon_{ABC}) = d_B$ and if necessary modify the criteria accordingly. We emphasize that the causality constraints prohibits that \textit{all} states can be considered as temporal processes. This already holds true for the case of two times $t_1, t_2$, i.e., the tripartite case with the involved Hilbert spaces $\Hcal_{A}, \Hcal_{B}$, and $\Hcal_{C}$. Here, for example, the GHZ state is a genuinely tripartite entangled quantum state, but \textit{not} (proportional to) a proper comb $\Upsilon_{ABC}$. More generally, it is straightforward to see that any process that is described by a \textit{pure} comb cannot be genuinely multipartite entangled, as it has to be of the form $\Upsilon_{n+1} = \ketbra{\Phi_s}{\Phi_s} \otimes V_1 \otimes \cdots \otimes V_n$, where $\{V_1,\dots,V_n\}$ are the (pure) Choi matrices of unitary maps that act on the system alone. 

This structural difference between states and processes also extends to the respective physical implications of entanglement in the spatial and the temporal setting. While for quantum states, entanglement is a statement about correlations between spacelike separated measurements, the correlations in a comb can be given a direct operational meaning in terms of causal structure of the underlying process. Discerning and determining different causal influences is the object in the field of (quantum) causal modelling~\cite{pearl_causality_2009, 1367-2630-18-6-063032}. For two parties, correlations between them can exist due to a \textit{common cause}, i.e., they are correlated, but cannot causally influence each other; or a \textit{direct cause} between them, i.e., one party's actions can influence the statistics of the other. 

Expressed in quantum mechanical terms, a common cause (between, say, parties $A$ and $C$) corresponds to the two parties sharing a quantum state, but no channel between them to communicate. As the state can be correlated, their measurement outcomes when probing said state can also be correlated, but they cannot influence each other. When the state that is shared is entangled, we will call it a \textit{quantum} common cause. 

On the other hand, a direct cause, say, between $B$ and $C$, implies that information can be sent from $B$ to $C$ (or vice versa), i.e., they share a communication channel between them. In this case, correlations between the two parties stem from a direct causal influence. We will call a direct cause \textit{quantum}, if the channel that is shared admits the sending of quantum information, i.e., if it is not entanglement breaking. 

For a process with three parties of the form discussed above (see Fig.~\ref{fig::simplified_comb}), both of these types of correlations can be read off directly from the comb $\Upsilon_{ABC}$. As Fig.~\ref{fig::simplified_comb} suggests, $B$ is the only party that can share direct cause correlations with the other two (Bob is the only party that can feed something into the process), while $A$ and $C$ can only share common cause correlations. More precisely, we will both consider the correlations across different splits in the full comb $\Upsilon_{ABC}$ as well as its (conditional) marginals. Then, depending on the respective comb,  the considered Choi state correspond to a channel between parties -- i.e., a direct cause -- or a state shared by different parties -- i.e., a common cause. 

The type of correlation then tells us directly, if the cause is quantum or not. For example, if $\Upsilon_{AC}$ is entangled, then Alice and Charlie share an entangled state, implying that they have a quantum common cause. On the other hand, if the considered Choi state is that of a channel, then entanglement in a cut means that the underlying channel is not entanglement breaking~\cite{horodecki_entanglement_2003}. Consequently, if, say, $\Upsilon_{BC}$ is entangled, then there is a quantum direct cause between Bob and Charlie. Any study of quantum correlations in combs has thus a direct operational implication for the process at hand that is investigated. 

Naturally, beyond the two-party paradigm, the landscape of causal structures becomes more complex. For example, as mentioned it has been shown that, in quantum mechanics, common cause and direct cause scenarios can be superposed~\cite{feix_quantum_2017, maclean_quantum-coherent_2017}. More generally, combs can display genuinely multipartite entanglement. While somewhat harder to interpret in terms of common and direct causes, following the definition of Choi states, a GME comb has nonetheless an operational interpretation: the corresponding process can be harnessed to create GME from a collection of mutually uncorrelated maximally (bipartite) entangled states by, respectively, acting on only one of their parts. 

Here, we will investigate in detail the entanglement structure of processes on two times and beyond; i.e., we investigate what types of entanglement, both bi- and multipartite, can exist. This, in turn, answers the question of what genuinely quantum phenomena processes in time can display. Additionally, we will study the basal building blocks required for the implementation of such different types of processes, providing a comprehensive characterization of the resources required for to exploit such genuine quantum features. Both of these results, in turn, provide insights into the multilayered nature of causal relations that persist in quantum processes.

\section{Bipartite entanglement in combs -- necessary conditions}
\label{sec::BipartiteNec}

As a first step, before considering genuine multipartite entanglement in quantum processes, we consider the simpler case of bipartite entanglement, and answer the questions what kinds of entanglement are possible, given the causality requirements of Eq.~\eqref{eqn::Simplified_Caus}, and what underlying resources are necessary for its creation. Put in terms of quantum causal modelling, entanglement of a comb in different splittings amounts to different types of quantum causal relations~\cite{feix_quantum_2017,maclean_quantum-coherent_2017}. Their analysis both  allows one to make inferences about the necessary resources required to create different types of entanglement in combs, and prepares the discussion of the genuinely tripartite case.

The bipartite case can be considered in two different ways. On the one hand, assuming a `global' position, entanglement in the splittings 
\begin{gather}
\{\Ecal(A:BC), \  \Ecal(B:AC), \ \Ecal(C:AB)\}
\end{gather}
is of interest. On the other hand, considering each leg in the comb in Fig.~\ref{fig::simplified_comb} as a party that either wants to communicate with another party, and/or tries to influence the communication of the other two, entanglement of the form 
\begin{gather}
\{\Ecal(A:B|c),\ \Ecal(A:C|b), \ \Ecal(B:C|a)\}    
\end{gather}
is relevant, where conditioning implies that the respective entanglement can depend on a measurement or preparation that the third party performed. Evidently, entanglement of the latter case implies entanglement in the former, global case, as no local operation can create entanglement. Here, we first investigate the necessary conditions for entanglement in the conditional case, while the unconditional one will be discussed in detail in the next section.

 \subsection{Quantum common cause -- Entanglement \texorpdfstring{$\Ecal(A:C|b) > 0$}{}}
 \label{sec::AC|b}
 Depending on the interplay of the underlying building blocks of the process, Alice and Charlie can -- potentially conditioned on the state that Bob inserts into the process -- share entanglement. Expressed in the language of causal modelling~\cite{pearl_causality_2009,1367-2630-18-6-063032}, such a scenario would (possibly depending on the state that Bob feeds forward) imply a \textit{quantum common cause}~\cite{allen_quantum_2017, feix_quantum_2017} between Alice and Charlie.
 
 Evidently, for this type of entanglement to persist, the initial system-environment needs to be entangled between the system $A$ and the environment $R$. Additionally, the subsequent system-environment map $\Lcal: \Bcal(\Hcal_B \otimes \Hcal_R) \rightarrow \Bcal(\Hcal_C)$ (with corresponding Choi matrix $L_{BRC}$ cannot destroy this entanglement (see Fig.~\ref{fig::two_step} for a graphical depiction of the considered process). These two requirements can be summarized in the following Proposition:
 \begin{proposition}
 \label{prop::1}
  If a process $\Upsilon_{ABC}$ satisfies $\Ecal(A:C|b) > 0$ for some state $\tau_B$ prepared by Bob, then the initial system-environment state $\rho_{AR}$ is entangled, and there exists a pure state $\ketbra{\Phi}{\Phi}_B \in \Bcal(\Hcal_B)$ such that $\ketbra{\Phi}{\Phi}_B \star L_{BRC} \in \Bcal(\Hcal_R \otimes \Hcal_C)$ is proportional to an entangled state (i.e., the corresponding map $\widetilde{\Lcal}: \Bcal(\Hcal_R) \rightarrow \Bcal(\Hcal_C)$ is not entanglement breaking).
 \end{proposition}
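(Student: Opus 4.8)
The plan is to prove the two necessary conditions separately, after first rewriting the object that $\Ecal(A:C|b)$ refers to transparently via the link product. The key preliminary step is to identify the state Alice and Charlie share once Bob feeds a state $\tau_B$ into the $B$ leg. Since Bob's preparation is an operation with trivial input and Choi matrix $\tau_B\in\Bcal(\Hcal_B)$, the conditional marginal is obtained by linking $\tau_B$ into the comb, $\Upsilon_{AC}^{(\tau_B)}:=\tau_B\star\Upsilon_{ABC}$. Using $\Upsilon_{ABC}=\rho_{AR}\star L_{BRC}$ together with associativity and commutativity of the link product, and the fact that $\tau_B$ and $\rho_{AR}$ live on disjoint spaces, I would rewrite this as
\begin{gather}
\Upsilon_{AC}^{(\tau_B)} = \rho_{AR}\star(\tau_B\star L_{BRC}) = (\ident_A\otimes\widetilde{\Lcal}_{\tau_B})[\rho_{AR}],
\end{gather}
where $\widetilde{L}_{RC}:=\tau_B\star L_{BRC}\in\Bcal(\Hcal_R\otimes\Hcal_C)$ is the Choi matrix of the channel $\widetilde{\Lcal}_{\tau_B}:\Bcal(\Hcal_R)\to\Bcal(\Hcal_C)$ that $\Lcal$ induces once its $B$-input is fixed to $\tau_B$. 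In words: conditioned on Bob's input, the $A$:$C$ marginal is just $\rho_{AR}$ with a local channel applied to its $R$ part. This identity is the crux of the argument, and getting the link-product bookkeeping right (in particular the transposition conventions in Eq.~\eqref{eqn::link}) is the step I would be most careful about; fortunately, any stray transposition maps Bob's set of accessible inputs onto itself and preserves separability, so it cannot affect the conclusions.

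Given this identity, the first conclusion is immediate. The map $\ident_A\otimes\widetilde{\Lcal}_{\tau_B}$ is a local operation acting on $R$ alone, and local operations cannot create entanglement across $A$:$R$. Hence if $\rho_{AR}$ were separable, writing $\rho_{AR}=\sum_i p_i\,\rho_A^{(i)}\otimes\rho_R^{(i)}$ would give $\Upsilon_{AC}^{(\tau_B)}=\sum_i p_i\,\rho_A^{(i)}\otimes\widetilde{\Lcal}_{\tau_B}[\rho_R^{(i)}]$, which is separable across $A$:$C$ for every $\tau_B$, contradicting $\Ecal(A:C|b)>0$. Thus $\rho_{AR}$ must be entangled.

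For the second conclusion I would reduce to pure inputs by convexity. Decomposing the entangling input as $\tau_B=\sum_k q_k\ketbra{\Phi_k}{\Phi_k}$ and using linearity of the link product in its first argument yields $\Upsilon_{AC}^{(\tau_B)}=\sum_k q_k\,\Upsilon_{AC}^{(\Phi_k)}$. Since the set of separable states is convex and $\Upsilon_{AC}^{(\tau_B)}$ is entangled, at least one summand $\Upsilon_{AC}^{(\Phi)}$ (with $\ket{\Phi}:=\ket{\Phi_k}$ pure) must itself be entangled. I then invoke the characterization recalled in Sec.~\ref{prelim_ent}: a channel is entanglement breaking iff its Choi matrix is separable, and an entanglement-breaking channel applied locally always outputs a separable state. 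Since $\Upsilon_{AC}^{(\Phi)}=(\ident_A\otimes\widetilde{\Lcal}_{\Phi})[\rho_{AR}]$ is entangled, the channel $\widetilde{\Lcal}_{\Phi}$ cannot be entanglement breaking, i.e., its Choi matrix $\ketbra{\Phi}{\Phi}_B\star L_{BRC}$ is proportional to an entangled state, which is exactly the claim.

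I expect the only genuinely delicate point to be the identity in the first paragraph --- establishing that the conditional $A$:$C$ marginal really is a local channel applied to $\rho_{AR}$. Everything downstream (no entanglement from local maps, convexity of the separable set, and the equivalence between entanglement breaking and a separable Choi matrix) is standard and was already recalled in the preliminaries, so the proof reduces to a careful application of the link-product formalism followed by these textbook facts.
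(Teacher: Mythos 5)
Your proposal is correct and takes essentially the same approach as the paper: the first part is the identical contrapositive (a separable $\rho_{AR}$ yields an $A\!:\!C$-separable conditional marginal for every input of Bob), and the second part is just a contrapositive repackaging of the paper's argument --- the paper assumes every pure-input conditional $\ketbra{\Phi}{\Phi}_B \star L_{BRC}$ is entanglement breaking and exhibits a separable output via the measure-and-prepare form, whereas you decompose $\tau_B$ into pure states and invoke convexity of the separable set, which amounts to the same two standard facts playing the same roles. Your explicit remark that stray link-product transpositions merely permute Bob's set of accessible input states (and thus cannot affect the conclusion) is a careful point the paper's proof leaves implicit.
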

\begin{proof}
The overall process $\Upsilon_{ABC}$ can be computed directly from its building blocks as 
\begin{gather}
\Upsilon_{ABC} = \rho_{AR} \star L_{BRC} = \tr_R(\rho_{AR} L_{BRC}^{\mathrm{T}_{R}})\, ,
\end{gather}
where we have omitted the respective identity matrices. Now, if $\rho_{AR}$ is separable, it can be decomposed as $\rho_{AR} = \sum_i p_i \rho_A^{(i)} \otimes \eta^{(i)}_R$, which leads to an overall comb of the form
\begin{gather}
    \Upsilon_{ABC} = \sum_i p_i \rho_A^{(i)} \otimes \tr_R(\eta^{(i)}_R L_{BRC}^{\mathrm{T}_{R}})\, ,
\end{gather}
which is -- up to normalization -- a quantum state that is separable in the splitting $A:C$, independent of what Bob does. 
 \begin{figure}
    \centering
    \includegraphics[width=0.45\linewidth]{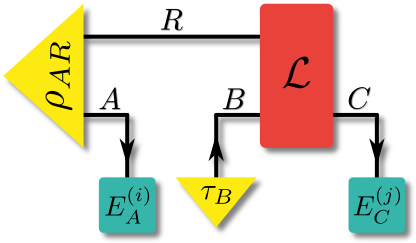}
    \caption{\textbf{Building blocks of a two-step process.} The structure of a two-step process depends on the respective properties of the initial system-environment state $\rho_{AR} \in \Bcal(\Hcal_A \otimes \Hcal_R)$, the map $\Lcal:  \Bcal(\Hcal_A \otimes \Hcal_R) \rightarrow \Bcal(\Hcal_C)$, and their interplay. Identifying each of the `legs' of the process with a different party, Alice and Charlie can measure their respective inputs (with corresponding POVM elements $E_A^{(i)}$ and $E_C^{(j)}$), while Bob can feed states $\tau_B$ into the process. }
    \label{fig::two_step}
\end{figure}

On the other hand, assuming that $\ketbra{\Phi}{\Phi}_B \star L_{BRC}$ is an entanglement breaking channel for all states $\ket{\Phi}_B$ implies that  $\tau_B \star L_{BRC}$ is entanglement breaking for any state $\tau_B$ that Bob feeds into the process. As such, we have $\tau_B \star L_{BRC} = \sum_{\alpha} E_R^{(\alpha)} \otimes \xi_C^{(\alpha)}$~\cite{EB_Choi, horodecki_entanglement_2003}, where $\{ E_R^{(\alpha)}\}$ is a POVM and $\{\xi_C^{(\alpha)}\}$ are quantum states. Then, one obtains for the resulting comb shared between Alice and Charlie:
\begin{gather}
\Upsilon_{AC|b} = \Upsilon_{ABC} \star \tau_B = \sum_{\alpha} \rho_{AR} \star \tau_B \star L_{BRC} = \sum_{\alpha}  \tr_R(\rho_{AR}^{\mathrm{T}_R} E_R^{(\alpha)}) \otimes \xi_C^{(\alpha)}\, , 
\end{gather}
which is a separable state on $\Bcal{(\Hcal_A \otimes \Hcal_C)}$.
\end{proof}
A simple example for a process that satisfies $\Ecal(A:C|b)>0$ is one where the initial system-environment is a maximally entangled state, and the action of the map $\Lcal$ is to swap the system and the environment, and subsequently discard the environment. In this case, the resulting state shared between Alice and Charlie would be the maximally entangled state $\Phi^+_{AC}$, independent of what state Bob feeds into the process. 

More generally, it is also possible that Bob can actively `switch' the entanglement between Alice and Charlie on or off by choosing his input state $\tau_B$ appropriately. To see this, consider a variation of the above channel, where, before the swap, depending on the input state of Bob, either an identity channel $\Ical$ or an entanglement breaking channel $\Lambda^{\mathrm{EB}}$ is implemented (see Fig~\ref{fig::Bob_controls}). Concretely, let the action of this map on a $BR$ state $\rho_{BR}$ be given by 
\begin{gather}
    \rho_{BR} \mapsto \ketbra{0_B}{0_B} \otimes \braket{0_B|\rho_{BR}|0_B} + \ketbra{1_B}{1_B} \otimes \Lambda^{\mathrm{EB}}[\braket{1_B|\rho_{BR}|1_B}]\, . 
\end{gather}
In this case, if Bob feeds $\tau_B = \ketbra{0_B}{0_B}$ into the process, then the resulting $AC$ state is equal to the maximally entangled state $\Phi^+_{AC}$, while, if he feeds forward $\tau_B = \ketbra{1_B}{1_B}$ into the process, then Alice and Bob share the separable state $\Lambda^{\mathrm{EB}} \otimes \Ical_A[\Phi^+_{AC}]$. 
\begin{figure}
    \centering
    \includegraphics[width=0.55\linewidth]{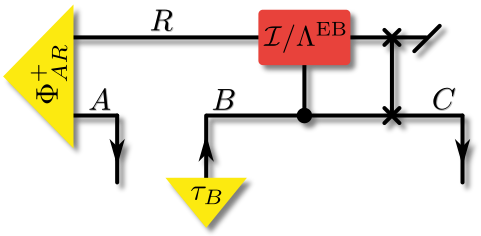}
    \caption{\textbf{Bob controls the entanglement between Alice and Charlie.} Controlled on Bob's input, no transformation (i.e., $\Ical$) takes place on the environment line, or an entanglement breaking map $\Lambda_{\mathrm{EB}}$ is performed. Afterwards, system and environment are swapped and the environment is discarded.}
    \label{fig::Bob_controls}
\end{figure}

The above discussion also provides a direct intuitive interpretation of $\Ecal(A:C|b) > 0$ for a comb $\Upsilon_{ABC}$; it implies that (possibly depending on Bob's input), Alice (i.e., an experimenter at time $t_1$) and Charlie (i.e., an experimenter at time $t_2$) share an entangled state, or, equivalently, there is a quantum common cause between Alice and Charlie, depending on what Bob does. Importantly -- unlike in the two cases that follow -- Bob can control the entanglement that is shared between Alice and Charlie \textit{deterministically}, i.e., no conditioning on measurement outcomes is necessary. Phrased with respect to the above example, Bob can decide beforehand, which of the possible states he wants Alice and Charlie to share. While there are infinitely many deterministic preparations (all states $\tau_B$ that Bob can feed into the process), there is only one deterministic POVM element~\cite{dariano_giacomo_mauro_causality_2018} (discarding the system). Consequently, Alice and Charlie can control the causal correlations of the respective other two parties, but will only know what particular correlations they controlled on \textit{after} their measurement. 
 
 \subsection{Quantum direct cause -- Entanglement \texorpdfstring{$\Ecal(B:C|a)$}{}}
 While $\Ecal(A:C|b)$ concerns the entanglement between two output legs, i.e., two parts of a quantum state, $\Ecal(B:C|a)$ concerns the entanglement between an input leg ($B$) and an output leg ($C$). Consequently, the property  $\Ecal(B:C|a) >0$ is directly related to the entanglement preserving properties of the map $\Lcal :\Bcal(\Hcal_B \otimes \Hcal_R) \rightarrow \Bcal(\Hcal_C)$ from time $t_1$ to time $t_2$. The presence (possibly conditioned on Alice's outcome) of this type of entanglement can thus be considered a \textit{quantum} direct cause between Bob and Charlie. In particular, we have the following Proposition: 
 \begin{proposition}
  If a process $\Upsilon_{ABC}$ satisfies $\Ecal(B:C|a) > 0$ for some POVM element $E_A$ in Alice's lab, then there exists a pure state $\ketbra{\Phi_R}{\Phi_R}$ on the environment, such that $\ketbra{\Phi_R}{\Phi_R} \star L_{BRC} \in \Bcal(\Hcal_B \otimes \Hcal_C)$ is proportional to an entangled state (i.e., the corresponding map $\bar{\Lcal}: \Bcal(\Hcal_B) \rightarrow \Bcal(\Hcal_C)$ is not entanglement breaking).
 \end{proposition}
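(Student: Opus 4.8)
The plan is to prove the contrapositive, mirroring the structure of the proof of Proposition~\ref{prop::1}. That is, I would assume that \emph{for every} pure environment state $\ketbra{\Phi_R}{\Phi_R}$ the Choi matrix $\ketbra{\Phi_R}{\Phi_R}\star L_{BRC}$ is separable on $\Bcal(\Hcal_B\otimes\Hcal_C)$ -- equivalently, that the induced map $\bar{\Lcal}$ is entanglement breaking for every pure input on $R$ -- and then show that this forces $\Ecal(B:C|a)=0$ for every POVM element $E_A$ in Alice's lab.

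First I would write down the conditional $BC$ comb explicitly. Conditioning on Alice obtaining the outcome associated with $E_A$ corresponds to a link product, so, using commutativity and associativity of $\star$ together with the fact that $E_A$ (supported on $\Hcal_A$) and $L_{BRC}$ (supported on $\Hcal_B\otimes\Hcal_R\otimes\Hcal_C$) live on disjoint spaces, I get
\begin{gather}
\Upsilon_{BC|a}=\Upsilon_{ABC}\star E_A=(\rho_{AR}\star E_A)\star L_{BRC}=:\sigma_R\star L_{BRC}\,,
\end{gather}
where $\sigma_R:=\rho_{AR}\star E_A=\tr_A[\rho_{AR}(E_A^{\mathrm{T}_A}\otimes\ident_R)]$. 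Since the link product of positive matrices is positive, $\sigma_R\geq 0$ is a (generically unnormalized) operator supported on the environment alone. The point of this reduction is that all $R$-dependence arising from Alice's conditioning has been collapsed into the single positive operator $\sigma_R$.

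Next I would decompose $\sigma_R$ spectrally as $\sigma_R=\sum_k\lambda_k\ketbra{\phi_k}{\phi_k}$ with $\lambda_k\geq 0$, and invoke linearity of the link product in its first argument to obtain
\begin{gather}
\Upsilon_{BC|a}=\sum_k\lambda_k\big(\ketbra{\phi_k}{\phi_k}\star L_{BRC}\big)\,.
\end{gather}
By the standing assumption each summand $\ketbra{\phi_k}{\phi_k}\star L_{BRC}$ is separable, and since a nonnegative combination of separable operators is again separable, $\Upsilon_{BC|a}$ is separable for the chosen $E_A$. As $E_A$ was arbitrary, this yields $\Ecal(B:C|a)=0$, which establishes the contraposed statement and hence the Proposition.

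The argument is essentially routine once the right reduction is in place, and the step I would regard as the crux is the first one: recognizing that conditioning on Alice funnels the environment degrees of freedom into a single positive operator $\sigma_R$, so that the whole $BC$ correlation content is governed by $\sigma_R\star L_{BRC}$. After that, the passage from an arbitrary $\sigma_R$ to pure environment states is guaranteed by the spectral theorem combined with the convexity of the cone of separable operators, while the identification of separability of $\ketbra{\Phi_R}{\Phi_R}\star L_{BRC}$ with $\bar{\Lcal}$ being entanglement breaking is exactly the Choi-state characterization of entanglement-breaking channels recalled in the preliminaries.
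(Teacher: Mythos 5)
Your proof is correct and takes essentially the same approach as the paper: the paper likewise proves the contrapositive by conditioning via the link product, noting that $E_A^{\mathrm{T}}\star\rho_{AR}$ is (up to normalization) a positive operator on $\Bcal(\Hcal_R)$, and concluding that $\Upsilon_{BC|a}$ cannot be entangled if no pure environment state linked with $L_{BRC}$ yields entanglement -- the spectral-decomposition-plus-convexity step you spell out is left implicit there. The only difference is a transpose-bookkeeping convention on where $E_A$ sits in the link product, which is immaterial since the set of POVM elements is closed under transposition.
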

 \begin{proof}
 Let the initial system-environment state be $\rho_{AR}$. If Alice performs a measurement with an outcome corresponding to the POVM element $E_A$, the remaining comb shared by Bob and Charlie is given by $E_A^{\mathrm{T}}\star \Upsilon_{ABC}$ (where the transpose on $E_A$ is necessary to comply with the definition of the link product) which, written in terms of its building blocks is equal to 
 \begin{gather}
 \label{eqn::Prop2eqn}
    \Upsilon_{BC|a} =  E_A^{\mathrm{T}} \star \rho_{AR} \star L_{BRC}
 \end{gather}
 Up to normalization, $E_A^{\mathrm{T}}\star \rho_{AR}$ is a quantum state on $\Bcal(\Hcal_R)$. If there exists no pure state $\ketbra{\Phi_R}{\Phi_R}$ such that $\ketbra{\Phi_R}{\Phi_R} \star L_{BRC}$ is entangled, then $\Upsilon_{BC|a}$ in Eq.~\eqref{eqn::Prop2eqn} cannot be proportional to an entangled state.
 \end{proof}
 As $\Upsilon_{BC|a}$ is (proportional to) the Choi matrix of a quantum channel from Bob to Charlie (where the proportionality constant is equal to the probability for Alice to measure the outcome corresponding to $E_A$), $\Ecal(B:C|a)>0$ implies that for some measurement outcome in Alice's laboratory, Bob has an entanglement preserving channel to Charlie at his disposal, i.e., he can send him quantum information. Unlike in the previous case, conditioning on an outcome in Alice's laboratory is in general not deterministic, and the probability for the POVM element $E_A$ to occur is given by $\tr(E_A\rho_{AR})$. As already mentioned, the only deterministic POVM element that Alice can perform is $\ident_A$, which amounts to discarding her part of the initial state $\rho_{AR}$. 
 
 It is straightforward to find examples of processes that satisfy $\Ecal(B:C|a)>0$ for all POVM elements $E_A$ (including $E_A = \ident_A$). One such example is a process without environment, with a pure state initial state $\ket{\Psi_A}$ and an identity map between Bob and Charlie, which yields the valid comb $\Upsilon_{ABC} = \ketbra{\Psi_A}{\Psi_A} \otimes \widetilde\Phi^+_{BC}$, where the unnormalized maximally entangled state $\widetilde \Phi^+_{BC}$ is the Choi matrix of the identity channel $\Ical_{B\rightarrow C}$ (see App.~\ref{app::Choi}). On the other hand, there are processes that yield an entanglement preserving channel between Bob and Charlie for each of the outcomes of Alice (for a given POVM), but do display an entanglement breaking channel if no conditioning takes place. 
 
 For example, let all systems (including the environment $R$) be qubits, the initial system-environment state a maximally entangled state $\Phi^+_{AR}$, the map $\Lcal$ be a controlled Pauli-$Z$ gate, with control on the environment, and Alice performs a measurement in the computational basis (see Fig.~\ref{fig::Z_circuit}). Then, for outcome $0$ of Alice's measurement (which occurs with probability $1/2$), the corresponding map between $B$ and $C$ is the identity map with corresponding Choi matrix $\widetilde \Phi^+_{BC}$, while for the outcome $1$ the corresponding map between $B$ and $C$ is given by the Pauli-$Z$ gate (with corresponding Choi matrix $\widetilde \Phi^-_{BC}$). Each of these maps is entanglement preserving, however, the average map (with corresponding Choi matrix $\widetilde \Phi^+_{BC} + \widetilde \Phi^-_{BC} = \ketbra{00}{00}_{BC} + \ketbra{11}{11}_{BC}$) is the completely dephasing map, which is entanglement breaking. Consequently, for a given process $\Upsilon_{ABC}$, Alice might be able to switch on and off Bob's capability to transmit quantum information to Charlie, but she cannot do so deterministically. 
 \begin{figure}
     \centering
     \includegraphics[width=0.45\linewidth]{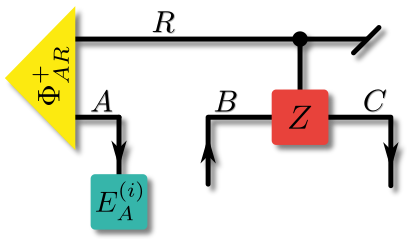}
     \caption{\textbf{Alice controls the entanglement between Bob and Charlie.} Alice measures in the computational basis. If she obtains outcome $0$, the channel between $B$ and $C$ is the identity channel. Otherwise, it is given by $\rho \mapsto \sigma_z\rho \sigma_z$, where $\sigma_z$ is the $Z$-Pauli matrix. Both of these channels are unitary -- and hence their Choi matrices are proportional to entangled states -- but on average, i.e., if Alice simply discards her system, the channel between $B$ and $C$ is the completely dephasing one, which is entanglement breaking.}
     \label{fig::Z_circuit}
 \end{figure}
 
 \subsection{Channels from the future to the past -- Entanglement \texorpdfstring{$\Ecal(A:B|c)$}{}}
 \label{sec::AB|c}
  So far, the causality constraints on $\Upsilon_{ABC}$ only played a minor role in the considerations of possible entanglement in different splittings. For example, as we have seen above, entanglement in the splitting $B:C$ can exist both deterministically (for the case that Alice's POVM element is $\ident_A$), as well as probabilistically (for $E_A \neq \ident_A$). This freedom no longer exists if the entanglement between Alice and Bob is considered. From the causality constraint~\eqref{eqn::Simplified_Caus}, we have $\tr_C(\Upsilon_{ABC}) = \ident_B \otimes \Upsilon_A$, implying that if Charlie discards the state he receives, then there are no correlations (quantum or classical) between Bob and Alice. Otherwise, the choice of Bob's input could influence the state that Alice receives, which is forbidden by causality. Consequently, $\Ecal(A:B|c)> 0$ means that, by conditioning on a measurement outcome in Charlie's lab (corresponding to the POVM element $E_C$), quantum information can be sent from Bob to Alice (i.e., `from the future to the past'~\cite{genkina_optimal_2012}). Such simulation scenarios by means of conditioning, have been discussed in different contexts in the literature~\cite{Bennett2005,lloyd_quantum_2011, lloyd_closed_2011,genkina_optimal_2012, 1367-2630-18-7-073037,milz_entanglement_2018}. For this to be possible, the initial system-environment state $\rho_{AR}$ has to be entangled, and the map $\Lcal$ has to be able to entangle $R$ and $B$. More precisely, we have the following Proposition: 
  \begin{proposition}
  \label{prop::3}
  If a process $\Upsilon_{ABC}$ satisfies $\Ecal(A:B|c)>0$, then the initial system-environment state $\rho_{AR}$ is entangled and there exists a pure state $\ketbra{\Phi_C}{\Phi_C}$ such that $L_{BRC} \star \ketbra{\Phi_C}{\Phi_C} \in \Bcal(\Hcal_R\otimes \Hcal_C)$ is proportional to an entangled state. 
  \end{proposition}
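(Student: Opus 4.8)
The plan is to mirror the contrapositive strategy of the two preceding propositions (on the quantum common and direct cause), establishing the two stated necessary conditions separately. I would first express the object shared by Alice and Bob after Charlie applies a POVM element $E_C$ directly in terms of the building blocks,
\begin{gather}
\Upsilon_{AB|c} = E_C^{\mathrm{T}} \star \rho_{AR} \star L_{BRC}\, ,
\end{gather}
where the transpose on $E_C$ complies with the link-product convention, exactly as for the direct-cause proposition. Up to normalization this is a positive operator (hence a state) on $\Bcal(\Hcal_A \otimes \Hcal_B)$. The goal is to show that whenever either condition fails, $\Upsilon_{AB|c}$ is separable across $A:B$ for \emph{every} choice of $E_C$, so that $\Ecal(A:B|c)=0$; the two conditions then follow as contrapositives.

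For the first condition I would assume $\rho_{AR}$ separable, $\rho_{AR}=\sum_i p_i\,\rho_A^{(i)}\otimes\eta_R^{(i)}$, and substitute. Since the $A$-register does not participate in the links with $L_{BRC}$ and $E_C$ (which act on $R$ and $C$), this yields
\begin{gather}
\Upsilon_{AB|c} = \sum_i p_i\, \rho_A^{(i)} \otimes \big(E_C^{\mathrm{T}}\star\eta_R^{(i)}\star L_{BRC}\big)\, ,
\end{gather}
where each bracketed operator lives on $\Hcal_B$ and is positive (a link of positive matrices). This is manifestly $A:B$-separable for every $E_C$, giving the contrapositive of the first claim.

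For the second condition I would argue by convexity. Suppose that for every pure $\ket{\Phi_C}$ the operator $L_{BRC}\star\ketbra{\Phi_C}{\Phi_C}\in\Bcal(\Hcal_B\otimes\Hcal_R)$ is separable in the $B:R$ cut (equivalently, the corresponding channel between $B$ and $R$ is entanglement breaking). Decomposing the transposed effect as $E_C^{\mathrm{T}}=\sum_k\lambda_k\ketbra{\phi_k}{\phi_k}$ with $\lambda_k\ge 0$ and using linearity of the link product, $L_{BRC}\star E_C^{\mathrm{T}}=\sum_k\lambda_k\,L_{BRC}\star\ketbra{\phi_k}{\phi_k}$ is a positive combination of $B:R$-separable operators, hence itself $B:R$-separable, say $\sum_\beta\sigma_B^{(\beta)}\otimes\tau_R^{(\beta)}$. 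Linking with $\rho_{AR}$ over $R$ then factorizes, with the $B$-register untouched by the contraction,
\begin{gather}
\Upsilon_{AB|c} = \sum_\beta \big(\rho_{AR}\star\tau_R^{(\beta)}\big)\otimes\sigma_B^{(\beta)}\, ,
\end{gather}
each $\rho_{AR}\star\tau_R^{(\beta)}$ being a positive operator on $\Hcal_A$. This is $A:B$-separable for all $E_C$, so $\Ecal(A:B|c)=0$, yielding the contrapositive of the second claim.

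I expect the only delicate point to be the link-product and transposition bookkeeping: verifying that the $B$-register genuinely factors out while the $R$-register is the one contracted against $\rho_{AR}$ (so that no spurious $A:B$ correlations appear), and noting that the conjugation induced by $E_C^{\mathrm{T}}$ only permutes pure effects and is therefore covered by the ``for all pure states'' hypothesis. Everything else is the same convexity-and-linearity reasoning as in the previous two propositions. I also note that the operator in the second condition naturally lives on $\Bcal(\Hcal_B\otimes\Hcal_R)$ — matching the interpretation that $\Lcal$ must be able to entangle $B$ and $R$ — so the space label should be read accordingly. The causality constraint of Eq.~\eqref{eqn::Simplified_Caus} is not needed in the argument itself; it enters only to justify, as in the surrounding discussion, that $\Ecal(A:B|c)$ must be conditioned on Charlie's outcome in the first place.
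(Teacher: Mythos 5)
Your proof is correct and takes essentially the same route as the paper's: the first condition is handled by the same contrapositive substitution of a separable $\rho_{AR}$ used for Prop.~\ref{prop::1} (which the paper simply invokes by analogy), and the second by decomposing $E_C^{\mathrm{T}}$ into pure effects and using linearity of the link product, exactly as in the paper. Your side remark that the operator in the second condition lives on $\Bcal(\Hcal_B\otimes\Hcal_R)$ rather than $\Bcal(\Hcal_R\otimes\Hcal_C)$ is also right -- the paper's own follow-up discussion defines $F_{BR}\in\Bcal(\Hcal_B\otimes\Hcal_R)$, confirming that the space label in the proposition statement is a typo.
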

  \begin{proof}
  Analogously to the proof of Prop.~\ref{prop::1}, it is straightforward to show that a separable initial state leads to a comb $\Upsilon_{ABC}$ that satisfies $\Ecal(A:B|c) = 0$. Now, focusing on the second part of the proposition, we assume that $L_{BRC} \star \ketbra{\Phi_C}{\Phi_C}$ is separable for all pure states $\ket{\Phi_C}$. As any POVM element $E_C$ can -- up to normalization -- be written as a convex combination of pure states, this implies 
  \begin{gather}
    \label{eqn:proof3}
  \Upsilon_{AB|c} = \rho_{AR} \star L_{BRC} \star E_C^\mathrm{T} = \rho_{AR} \star \sum_i p_i \eta^{(i)}_{R} \otimes \xi^{(i)}_B = \sum_i p_i \tr[\rho_{AR}  \eta^{(i)}_{R}] \otimes \xi^{(i)}_B\, ,
  \end{gather}
  where $\eta^{(i)}_{R}, \xi^{(i)}_B \geq 0$ and $\{p_i\}$ are probabilities that add up to one. Evidently, the last expression in Eq.~\eqref{eqn:proof3} is proportional to a separable state, which concludes the proof.
  \end{proof}
  Note that $L_{BRC} \star \ketbra{\Phi_C}{\Phi_C} =: F_{BR}$  corresponds to a POVM element on $\Bcal(\Hcal_B\otimes \Hcal_R)$. If it is entangled, then Charlie can `entangle' $R$ and $B$ (and, in turn, possibly $A$ and $B$) by conditioning on one of his outcomes, thus allowing Bob to send quantum information to Alice. In general, $F_{BR}$ is not proportional to the Choi matrix of a channel (i.e., a CPTP map), but of a trace non-increasing CP map. Consequently, the probability $p$ for Charlie to obtain an outcome corresponding to $\Phi_C$ depends on Bob's input state. On the other hand -- as has been noted in~\cite{genkina_optimal_2012} and, in a slightly different context, in~\cite{silva_connecting_2017,milz_entanglement_2018} -- if $F_{BR} = q \widetilde F_{BR}$ is proportional to the Choi matrix of a CPTP map $\widetilde F_{BR}$, then this probability $p$ does \textit{not} depend on Bob's input state $\tau_B$: 
\begin{gather}
      p = \tr[\rho_{AR} \star \tau_B \star L_{BRC}\star \ketbra{\Phi_C}{\Phi_C}] = q \tr[\widetilde F_{BR} \star (\rho_{AR} \otimes \tau_B)] = q\, ,
\end{gather}
  
  where we have used the fact that $F_{BR}$ is CPTP, and $\{\rho_{AR} \otimes \tau_B\}$ are quantum states. 
  To make this statement more concrete and to provide an explicit example, consider a variation of a teleportation scheme (without classical communication); let the system-environment map $\Lcal$ be a measure and prepare channel of the form 
  \begin{gather}
      \Lcal[\rho_{BR}] = \tr(F^{(0)}_{BR}\rho_{BR}) \ketbra{0}{0}_C + \tr(F^{(1)}_{BR}\rho_{BR}) \ketbra{1}{1}_C\, ,
  \end{gather}
  where $\{F^{(i)}\}$ are POVM elements that sum to identity, and the initial system-environment state is the maximally entangled state (see Fig.~\ref{fig::Entangling_Measurement}). Choosing $F^{(0)} = \ketbra{\Phi^+_{BR}}{\Phi^+_{BR}}$, we see that if Charlie measures in the computational basis, then $\Upsilon_{AB|0} = \frac{1}{2}\Phi^+_{AB}$ and $\Upsilon_{AB|1} = \frac{1}{2}(\ident_{AB} -  \Phi^+_{AB})$. Both of them are proportional to CPTP maps. In particular, $\Upsilon_{AB|0}$ is proportional to the identity channel from Bob to Alice, while $\Upsilon_{AB|0}$ is proportional to an entanglement breaking channel. The probability of simulation is independent of Bob's input state. For example, we have 
  \begin{gather}
      p(0) = \tr[(\Phi_{AR}^+ \otimes \tau_B)(\Phi^+_{BR} \otimes \ident_A)] = \frac{1}{4}\, \quad \forall \tau_B.
  \end{gather}
  This simulation probability of $1/4$ is the maximal achievable simulation probability for an identity channel from the future to the past~\cite{genkina_optimal_2012}. Having the above resources at hand, Alice, Bob, and Charlie can thus simulate quantum channels from the future to the past (albeit \textit{not} deterministically).

 \begin{figure}[t]
     \centering
     \includegraphics[width=0.55\linewidth]{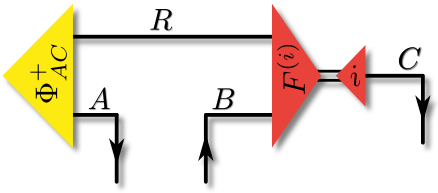}
     \caption{\textbf{Measurement in entangled basis.} The map $\Lcal$ measures the state on $\Bcal(\Hcal_B \otimes \Hcal_R)$ in the Bell basis and feeds forward $0$ if the result corresponds to $\Phi^+_{BR}$ and $1$ otherwise. If Charlie measures in the computational basis, conditioned on his outcome, Alice and Bob share a channel that can transmit quantum information (for outcome $0$) or not (for outcome $1$) from Bob to Alice.}
     \label{fig::Entangling_Measurement}
 \end{figure}
 
The necessary conditions for $\Ecal(A:C|b)>0, \Ecal(B:C|a)>0$, and $\Ecal(A:B|c)>0$ shed light on the underlying necessary resources as well as the intuitive interpretation of the respective cases. However, the requirements we found are not sufficient to ensure the existence of any of the three kinds of bipartite entanglement we discussed, and simple examples that satisfy the respective necessary conditions but do not display entanglement in the considered splitting can readily be constructed. Before advancing to the genuinely tripartite case, we now investigate bipartite entanglement from a more global perspective and link it to the entanglement breaking properties of certain maps.

\section{Sufficient condition for bipartite entanglement in AB:C and AC:B and channel steering}

\subsection{Sufficient condition for bipartite entanglement in processes for AB:C and AC:B }
\label{sec::Sufficient}
In the previous section we derived necessary conditions on a process not to be biseparable in some splitting. Here we will provide a sufficient condition on processes to be entangled in the splitting AB:C and AC:B. In particular, we will show that if a certain map is not entanglement breaking, one observes bipartite entanglement in $\Upsilon_{BC}$. As entanglement cannot be created by tracing out one party this implies that in this case the corresponding comb has to be entangled across the bipartite splittings  AB:C and AC:B. More precisely, we consider the  map 
\begin{align}\label{eqMap}
\Sigma[\omega]=\tr_R[\mathcal{L}(\rho_{R}\otimes \omega)]\, ,
\end{align}
where $\rho_R=\tr_A[\rho_{AR}]$ and $\mathcal{L}$ are given by the process. Then one can show the following Lemma:
\begin{lemma}
\begin{figure}[t]
    \centering
    \includegraphics[width=0.5\linewidth]{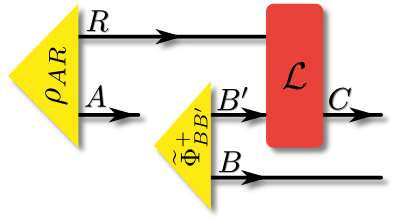}
    \caption{\textbf{CJI of a two-step process}. $\Upsilon_{ABC}$ is obtained by letting $\Lcal$ act on one half of an unnormalized maximally entangled state and the degrees of freedom denoted by $R$. Note that we let $\Lcal$ act on $\Bcal(\Hcal_{B'})$, so that $L_{BRC}\in \Bcal(\Hcal_B\otimes \Hcal_R\otimes \Hcal_C)$. As $\Hcal_{B'} \cong \Hcal_B$, this is merely a relabelling for notational purposes. }
    \label{fig::Choi_Simp}
\end{figure}
Let $\Upsilon_{ABC}$ be a process defined by $\rho_{AR}$ and $\mathcal{L}$,  $\Upsilon_{BC} = \tr_A (\Upsilon_{ABC})$  and $\Sigma$ be the CPTP map associated to this process via Eq. (\ref{eqMap}). Then $\Upsilon_{BC}$ is separable if and only if the map $\Sigma$ is entanglement breaking.
\end{lemma}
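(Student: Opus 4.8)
The plan is to identify the marginal $\Upsilon_{BC}$ with the Choi matrix of $\Sigma$ and then to invoke the equivalence, recalled in the Preliminaries, between entanglement breaking of a channel and separability of its Choi matrix. The conceptual heart of the argument is that tracing out $A$ simply replaces the initial joint state $\rho_{AR}$ by its environment marginal $\rho_R$, which is precisely the environment state that defines $\Sigma$.

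First I would compute $\Upsilon_{BC}$ directly from the building blocks. Writing $\Upsilon_{ABC} = \rho_{AR} \star L_{BRC}$ and noting that the space $A$ belongs only to the factor $\rho_{AR}$ and is untouched by the contraction over $R$, the partial trace over $A$ acts solely on that factor, so that
\begin{gather}
\Upsilon_{BC} = \tr_A(\rho_{AR} \star L_{BRC}) = \left[\tr_A(\rho_{AR})\right] \star L_{BRC} = \rho_R \star L_{BRC}.
\end{gather}
The key observation is that $\rho_R \star L_{BRC}$ is exactly the Choi matrix of the map $\Sigma[\omega] = \mathcal{L}(\rho_R \otimes \omega)$: contracting $L_{BRC}$ against the fixed environment state $\rho_R$ on the $R$-leg while leaving the $B$-leg open is the link-product realization of feeding $\rho_R$ into the environment input of $\mathcal{L}$, with $B$ playing the role of channel input and $C$ that of channel output. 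Since $\rho_R = \tr_A(\rho_{AR})$ is a normalized state and $\mathcal{L}$ is CPTP, the map $\Sigma$ is CPTP, and $\Upsilon_{BC}$ carries the correct Choi normalization $\tr(\Upsilon_{BC}) = d_B$.

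With the identification of $\Upsilon_{BC} = \rho_R \star L_{BRC}$ as the Choi matrix of $\Sigma$ in hand, the Lemma follows immediately from the cited characterization. If $\Sigma$ is entanglement breaking, then its Choi matrix $\Upsilon_{BC}$ is separable; conversely, if $\Upsilon_{BC}$ is separable, then, being the Choi matrix of the CPTP map $\Sigma$, the map $\Sigma$ is entanglement breaking. No further estimates are required.

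The step I expect to be the main obstacle is the first one: settling the transpose and input/output bookkeeping of the link product so as to confirm that $\rho_R \star L_{BRC}$ really is the Choi matrix of $\Sigma$ -- with $B$ in the role of input and $C$ in the role of output -- rather than a transposed or differently normalized operator. Once this is verified, the equivalence is a direct consequence of the known result that a channel is entanglement breaking if and only if its Choi matrix is separable, and the remainder of the argument is routine.
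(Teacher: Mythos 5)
Your proposal is correct and follows essentially the same route as the paper's proof: both identify $\Upsilon_{BC}$ with the Choi matrix of $\Sigma$ (the paper writes this as $\Upsilon_{BC}\propto \Ical_B\otimes\Sigma[\ketbra{\Phi^+}{\Phi^+}_{BB'}]$ after noting $\Upsilon_{ABC}=\tr_R[\mathcal{L}(\rho_{AR}\otimes\tilde{\Phi}^+_{BB'})]$, which is exactly your link-product identity $\tr_A(\rho_{AR}\star L_{BRC})=\rho_R\star L_{BRC}$ in circuit form) and then invoke the standard equivalence between entanglement breaking channels and separable Choi matrices. Your bookkeeping of the transpose on the $R$-leg and of the normalization $\tr(\Upsilon_{BC})=d_B$ is consistent with the paper's conventions, so no gap remains.
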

\begin{proof}
Recall that (see also Fig.~\ref{fig::Choi_Simp}) 
\begin{align}\Upsilon_{ABC}=\rho_{AR} \star L_{BRC}=\tr_R[\mathcal{L}(\rho_{AR}\otimes \tilde{\Phi}^+_{BB'})].\end{align}Hence, we have that $\Upsilon_{BC}\propto \Ical_B \otimes \Sigma [\ket{\Phi^+}_{BB'}]$. This yields a separable state if and only if $\Sigma$ is an entanglement breaking map  (see~\cite{EB_Choi} and references therein) which proves the Lemma.
\end{proof}
Note that this implies that in case $\Sigma$ is not entanglement breaking  $\Upsilon_{BC}$ is entangled and so is $\Upsilon_{ABC}$ across the bipartite splittings AB:C and AC:B. The converse is not necessarily true. By taking the partial trace entanglement may vanish and hence $\Upsilon_{ABC}$ may show bipartite entanglement even though $\Sigma$ is entanglement breaking. An example of a genuinely multipartite entangled process for which $\Upsilon_{BC}$ is separable is given in Eq. (\ref{processGHZ}) with $n=3$ (see Section~\ref{processesGME}).

In summary we have shown that if one disregards the initial state and inserts in the first time step a maximally entangled state and one still observes entanglement after the process took place then the quantum comb describing this dynamics has to be entangled in the bipartite splittings AB:C and AC:B.

\subsection{Channel steering and quantum combs}
We finish our discussion of sufficient conditions for bipartite entanglement in combs with the case $\Ecal(A:BC)>0$. While the corresponding conditions on their own are not very illuminating, it is nonetheless insightful to consider this type of bipartite quantum correlations and connect it to the concept of \textit{channel steering}~\cite{piani_channel_2015}. 

To this end, consider a quantum channel $\Kcal: \Bcal(\Hcal_B) \rightarrow  \Bcal(\Hcal_A) \otimes \Bcal(\Hcal_C)$ with corresponding Choi matrix $K_{BAC}$ (such channels with one input and two output spaces are also called 'broadcasting channels'~\cite{yard_quantum_2011}). As $\Kcal$ is assumed to be a channel, it satisfies $\tr_{AC}(K_{BAC}) = \ident_{B}$. From Eq.~\eqref{eqn::Simplified_Caus}, we see that any comb $\Upsilon_{ABC}$ also satisfies $\tr_{AC}(\Upsilon_{ABC}) = \ident_{B}$ (with the additional constraint $\tr_{C}(\Upsilon_{ABC}) = \ident_{B} \otimes \Upsilon_A$), and can thus be considered as a quantum broadcast channel (see Fig.~\ref{fig::Broadcast}). Consequently, while originally applied to general broadcast channels~\cite{piani_channel_2015}, all of the following considerations directly apply to combs.
\begin{figure}
    \centering
    \includegraphics[width=0.5\linewidth]{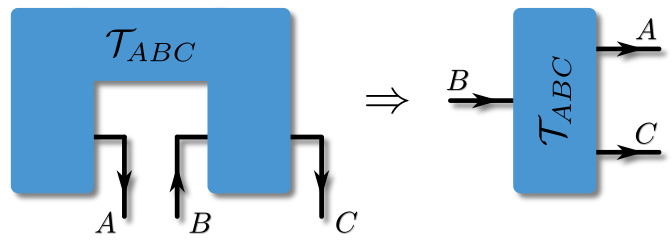}
    \caption{\textbf{Comb as a broadcasting channel.} Due to its properties, any comb $\Tcal_{ABC}$ can be understood as a channel with one input and two output spaces $\Tcal_{ABC}:\Bcal(\Hcal_B) \rightarrow \Bcal(\Hcal_C) \otimes \Bcal(\Hcal_A)$ with an additional trace condition (on its Choi state) that ensures causal ordering.}
    \label{fig::Broadcast}
\end{figure}

Given such a channel $K_{BAC}$, Alice could measure the system $A$, using a POVM $\{(E_A^{a|x})^\mathrm{T}\}_a$, where  $x$ denotes the choice of POVM, $a$ corresponds to the respective outcome, and the transpose is added to simplify the subsequent notation. Conditionally on each of Alice's outcomes, the mapping between Bob and Charlie would be given by (trace non-increasing) CP maps 
\begin{gather}
    K_{BC}^{a|x} := K_{BAC} \star E_A^{a|x}\,
\end{gather}
that add up to a CPTP map $K_{BC} = \sum_a  K_{BC}^{a|x}$. In this way, Alice could create a collection $\{K_{BC}^{x|a}\}_{a,x}$ of instruments that all add up to the same channel $K_{BC}$. Such a collection -- in close analogy with the theory of steering of quantum \textit{states}~\cite{wiseman_steering_2007, cavalcanti_quantum_2016, uola_quantum_2020} -- is called a 'channel assemblage' of the channel $K_{BC}$. A channel assemblage is said to be unsteerable, if there exists an instrument $\{K_{BC}^\lambda\}_\lambda$, probabilities $p(\lambda)$ and conditional probabilities $p(a|x,\lambda)$, such that for all $\{a,x\}$ one has\footnote{For simplicity, here we only consider the case of countably many indices $\lambda$. The generalization to more general sets is straightforward.}
\begin{gather}
    K_{BC}^{a|x} = \sum_\lambda p(\lambda) p(a|x,\lambda) K_{BC}^\lambda\, .
\end{gather}
Otherwise, the assemblage is steerable. If Alice can create a steerable assemblage, then the channel $K_{BAC}$ is said to be steerable. Evidently, channel steerability in the above sense is equivalent to  steerability of the (normalized) state $K_{BAC}$ by Alice~\cite{piani_channel_2015}. 

In our case, channel steerability implies that Alice, by performing measurements, can steer the mapping between Bob and Charlie. As entanglement is a prerequisite for steerability~\cite{wiseman_steering_2007}, here, entanglement of $\Upsilon_{ABC}$ in the splitting $A:BC$ is a prerequisite for Alice being able to steer Bob and Charlie. In line with our above considerations, we thus obtain the following Proposition: 
\begin{proposition}
If the comb $\Upsilon_{ABC}$ is steerable by Alice acting on $A$, then the initial system environment state $\rho_{AR}$ is steerable by Alice acting on A.
\end{proposition}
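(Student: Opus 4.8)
The plan is to prove the contrapositive: assuming $\rho_{AR}$ is \emph{unsteerable} by Alice, I would show that the channel assemblage $\{K_{BC}^{a|x}\}$ that Alice extracts from the comb admits a local-hidden-channel model, so that $\Upsilon_{ABC}$ is not steerable by Alice. The whole argument rests on the linearity of the link product, together with the observation already used in the Lemma via Eq.~\eqref{eqMap}: linking a fixed state on $R$ with $L_{BRC}$ yields a genuine $B\to C$ channel.

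First I would rewrite the assemblage element in terms of the building blocks. Since $\Upsilon_{ABC}=\rho_{AR}\star L_{BRC}$ and the link product is associative and commutative on the relevant spaces,
\begin{align}
K_{BC}^{a|x}=\Upsilon_{ABC}\star E_A^{a|x}=\left(\rho_{AR}\star E_A^{a|x}\right)\star L_{BRC}.
\end{align}
The object $\sigma_R^{a|x}:=\rho_{AR}\star E_A^{a|x}$ is precisely the (sub-normalized) conditional state on $R$ that Alice prepares by measuring $A$ on $\rho_{AR}$, i.e. her $R$-assemblage; the transpose attached to $E_A^{a|x}$ inside the link product is exactly cancelled by the convention that Alice's physical POVM is $\{(E_A^{a|x})^{\mathrm{T}}\}_a$, so this identification is routine bookkeeping. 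In other words, the comb's channel assemblage is obtained from Alice's state assemblage on $\rho_{AR}$ by post-processing each element with the \emph{fixed} map $L_{BRC}$.

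Next I would invoke the hypothesis. If $\rho_{AR}$ is unsteerable by Alice, there are fixed states $\{\sigma_R^\lambda\}$, a distribution $p(\lambda)$ and responses $p(a|x,\lambda)$ with $\sigma_R^{a|x}=\sum_\lambda p(\lambda)p(a|x,\lambda)\,\sigma_R^\lambda$. Linking both sides with $L_{BRC}$ and using linearity gives
\begin{align}
K_{BC}^{a|x}=\sum_\lambda p(\lambda)p(a|x,\lambda)\,K_{BC}^\lambda,\qquad K_{BC}^\lambda:=\sigma_R^\lambda\star L_{BRC}.
\end{align}
Each $K_{BC}^\lambda$ is the Choi matrix of $\omega\mapsto\tr_R[\Lcal(\sigma_R^\lambda\otimes\omega)]$, which is exactly the map $\Sigma$ of Eq.~\eqref{eqMap} for the fixed input $\sigma_R^\lambda$, and is CPTP because $\Lcal$ is trace preserving and $\sigma_R^\lambda$ is normalized. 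Thus $\{K_{BC}^\lambda\}_\lambda$ is a legitimate family of hidden channels and the display above exhibits the comb's channel assemblage as unsteerable; by definition this means $\Upsilon_{ABC}$ is not steerable by Alice, which is the contrapositive of the Proposition.

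I expect the only point requiring genuine care to be the verification that the $K_{BC}^\lambda$ are valid channels (trace preservation of $\Sigma$ for each $\sigma_R^\lambda$), since this is what makes the hidden-channel model admissible rather than a purely formal rewriting; the conceptual content is simply that appending a fixed quantum channel to the steered side of an assemblage cannot create steering. The transpose conventions in the link product are a secondary bookkeeping nuisance that raise no real difficulty.
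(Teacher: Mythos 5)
Your proof is correct and follows essentially the same route as the paper's: both argue the contrapositive by writing $K_{BC}^{a|x}=(\rho_{AR}\star E_A^{a|x})\star L_{BRC}$, pushing the local-hidden-state model for $\rho_{AR}$ through the fixed map $L_{BRC}$ by linearity of the link product, and observing that the resulting family $\{K_{BC}^\lambda\}$ is a legitimate instrument (the paper states this last point as ``easy to check''; your explicit CPTP verification via Eq.~\eqref{eqMap} is the same fact made explicit).
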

\begin{proof}
Using different instruments (labelled by $x$) with outcomes $a$, Alice can create a state assemblage $\{\rho_R^{a|x}\}_{a|x}$ of the state $\rho_R = \tr_A(\rho_{AR})$, i.e., $\rho_R^{a|x} = \rho_{AR}\star E_A^{a|x}$ and $\sum_a \rho_R^{a|x} = \rho_R$ for all choices of instruments $x$. Let us assume that the state $\rho_{AR}$ is unsteerable on Alice's side. This means that, for any state assemblage $\{\rho_R^{a|x}\}_{a|x}$, there exists a set $\{\rho_R^\lambda\}$ of subnormalized states, probabilities $p(\lambda)$ and conditional probabilities $p(a|x,\lambda)$, such that 
\begin{gather}
\label{eqn::unsteerable}
    \rho_R^{a|x} = \sum_\lambda p(\lambda) p(a|x,\lambda) \rho_R^\lambda\,.
\end{gather}
Using this identity, we see that for any conditional mapping $\Upsilon^{a|x}_{BC}$ between Bob and Alice, we have 
\begin{gather}
    \Upsilon^{a|x}_{BC} = \rho_{R}^{x|a} \star L_{BRC}  = \sum_\lambda p(\lambda) p(a|x,\lambda) \rho_R^\lambda \star L_{BRC} :=  \sum_\lambda p(\lambda) p(a|x,\lambda) L^\lambda_{BC}\, .
\end{gather}
It is easy to check that $\{L^\lambda_{BC}\}$ constitutes an instrument, implying that any channel assemblage $\{\Upsilon^{a|x}_{BC}\}$ is unsteerable if $\rho_{AR}$ is unsteerable on Alice's side. This, in turn, means that if the broadcast channel $\Upsilon_{ABC}$ is steerable by Alice, then so is $\rho_{AR}$.  
\end{proof}
Having analysed necessary and sufficient conditions on the dynamical building blocks for entanglement in different splittings of combs, as well as the interpretation of these quantum correlations, we finish our discussion of the bipartite case, providing a connection between biseparable combs and entanglement breaking channels.

\section{Separability and entanglement breaking channels}
\label{sec::SepEnt}

\begin{figure}
    \centering
    \includegraphics[width = 0.6\linewidth]{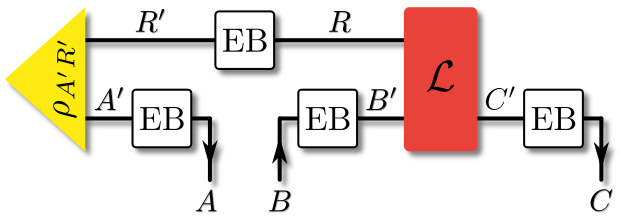}
    \caption{\textbf{Process with an entanglement breaking map on at least one of its spaces.} If the circuit of a process can be represented with an entanglement breaking (EB) channel on one of its wires, then the resulting comb $\Upsilon_{ABC}$ is separable in the corresponding cut. For example, an entanglement breaking channel on the environment $R$ implies that $\Upsilon_{ABC}$ is separable in the splitting $A:BC$. If there  are two entanglement breaking channels (independent of what two wires they act on), then the resulting comb is fully separable. For better tracking of the involved spaces, the input and output spaces of the EB channels are labelled differently.}
    \label{fig::Eb_circuit}
\end{figure}
As we have seen in Sec.~\ref{sec::Sufficient}, separability of a comb $\Upsilon_{ABC}$ in the splittings $AB:C$ and $AC:B$ is directly related to the entanglement breaking property of a map (given by Eq.~\eqref{eqMap}) that arises naturally from the building blocks of the comb. Such a relation between the separability of a comb $\Upsilon_{ABC}$ in a given splitting and entanglement breaking (EB) maps within the circuit that yields said comb can be made more concrete, complementing the results of Sec.~\ref{sec::Sufficient}. Specifically, here we investigate the question if the presence of an entanglement breaking map on one of the wires (see Fig.~\ref{fig::Eb_circuit} for a graphical representation) implies separability of $\Upsilon_{ABC}$ in a given splitting, and vice versa. We give an affirmative answer for the splittings $C:AB$ and provide a partial resolution for the cases $A:BC$ and $B:AC$. 

\subsection{Entanglement breaking channels imply separable combs}
First, it is straightforward to show that the presence of an EB channel on any of the wires leads to a comb $\Upsilon_{ABC}$ that is separable with respect to a particular splitting. Concretely, we will say that a comb $\Upsilon_{ABC}$ can be represented with an EB channel on one of its wires, if it admits a decomposition with an EB channel. For example, for the wire $R$, this would mean that $\Upsilon_{ABC}$ can be written as $\Upsilon_{ABC} = \rho_{AR'} \star N^{\mathrm{EB}}_{R'R} \star L_{BRC}$ (see Fig.~\ref{fig::Eb_circuit}), where $N^{\mathrm{EB}}_{R'R}$ is the Choi state of an entanglement breaking channel. With this, we have the following Proposition:
\begin{proposition}\label{proposition_eb}
If a comb $\Upsilon_{ABC}$ can be represented with an entanglement breaking channel on one of its wires, then it is separable in at least one possible splitting. In particular, an entanglement breaking channel on $A$ or $R$ implies $\Ecal(A:BC) = 0$, on $B$ implies $\Ecal(B:AC) = 0$, and on $C$ implies $\Ecal(C:AB) = 0$.  
\end{proposition}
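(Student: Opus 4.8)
The plan is to handle all four cases by a single mechanism, exploiting the fact (recalled in Sec.~\ref{prelim_ent}) that a channel is entanglement breaking precisely when its Choi matrix is separable, together with the linearity, associativity, and positivity properties of the link product. Concretely, I would write the offending EB channel's Choi matrix as $N^{\mathrm{EB}} = \sum_\alpha \Lambda^{(\alpha)} \otimes \Theta^{(\alpha)}$ with $\Lambda^{(\alpha)}, \Theta^{(\alpha)} \geq 0$, where the first factor is supported on the wire's input space and the second on its output space, and then substitute this decomposition into the link-product expression for $\Upsilon_{ABC}$ with the EB channel inserted on the relevant wire.

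The core of each case is then pure bookkeeping with the link product. I would take the wire $R$ as the template: writing $\Upsilon_{ABC} = \rho_{AR'} \star N^{\mathrm{EB}}_{R'R} \star L_{BRC}$ and substituting $N^{\mathrm{EB}}_{R'R} = \sum_\alpha \Lambda^{(\alpha)}_{R'} \otimes \Theta^{(\alpha)}_{R}$, linearity of $\star$ gives $\Upsilon_{ABC} = \sum_\alpha (\rho_{AR'} \star \Lambda^{(\alpha)}_{R'}) \star (\Theta^{(\alpha)}_R \star L_{BRC})$. Here $\rho_{AR'} \star \Lambda^{(\alpha)}_{R'}$ contracts $R'$ and leaves a positive operator $\tilde\rho^{(\alpha)}_A$ supported only on $\Hcal_A$, while $\Theta^{(\alpha)}_R \star L_{BRC}$ contracts $R$ and leaves a positive operator $D^{(\alpha)}_{BC}$ on $\Hcal_B \otimes \Hcal_C$. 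Since these now live on disjoint spaces, the residual link product collapses to a tensor product, yielding $\Upsilon_{ABC} = \sum_\alpha \tilde\rho^{(\alpha)}_A \otimes D^{(\alpha)}_{BC}$, manifestly separable in $A:BC$; positivity of every factor follows because link products of positive operators are positive, so this is a genuine separable decomposition up to normalisation.

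The remaining cases follow from the same template, only changing which factor of the separable EB Choi is isolated and on which final space it lands. An EB channel on the external leg $A$ again isolates a factor $\mu^{(\alpha)}_A$ on $\Hcal_A$ while the rest of the circuit is absorbed (after contracting $R$) into a $BC$ operator, giving $\Ecal(A:BC)=0$; an EB channel on $B$ isolates $\nu^{(\alpha)}_B$ on $\Hcal_B$ with the remainder reducing to an $AC$ operator, giving $\Ecal(B:AC)=0$; and an EB channel on $C$ isolates $\zeta^{(\alpha)}_C$ on $\Hcal_C$ with the remainder an $AB$ operator, giving $\Ecal(C:AB)=0$.

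I expect the only real subtlety, rather than a genuine obstacle, to be the careful tracking of which Hilbert spaces each tensor factor occupies after the successive contractions: one must verify that in every case exactly one factor of the separable EB decomposition survives on the isolated party's space while the others are fully consumed by the link products with $\rho_{AR}$ and $L_{BRC}$. Once the spaces are correctly accounted for, disjointness turns the residual link product into a tensor product and the separable form is immediate.
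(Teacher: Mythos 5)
Your proposal is correct and takes essentially the same route as the paper's proof in App.~\ref{app_eb}: there, too, the EB channel's Choi matrix is expanded in a separable decomposition (written in measure-and-prepare form $\sum_\alpha E^{(\alpha)\mathrm{T}}\otimes\xi^{(\alpha)}$, equivalent to your $\sum_\alpha \Lambda^{(\alpha)}\otimes\Theta^{(\alpha)}$) and substituted into the link-product expression for $\Upsilon_{ABC}$, with linearity and positivity of the link product collapsing the result into a manifestly separable sum across the relevant cut. The only cosmetic difference is that the paper works the $C$ and $A$ wires explicitly while you use $R$ as the template, and both treat the remaining cases as analogous.
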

The proof can be found in App.~\ref{app_eb}. For the case of an EB channel on $R$, within the study of quantum memory, a similar Proposition has been shown in~\cite{giarmatzi_witnessing_2018}. There, processes on two times that only display classical memory were defined as those that have an EB channel on $R$. Additionally, related results can be found in~\cite{chen_entanglement-breaking_2019}, where entanglement breaking superchannels and their representations are analysed. Here, we provide direct simple proofs that will also allow us to show the converse for one of the cases and to pinpoint the difficulties with proving the converse of the other two.

From the above Proposition, it is evident, that a circuit with at least two entanglement breaking channels in its representation (as long as they are not on $A$ \textit{and} $R$) is separable with respect to two distinct splittings. While this fact in itself does not yet imply that it is fully separable, we have the following Lemma:   
\begin{lemma}
\label{lem::FullSep}
If a circuit can be represented with an EB channel on any two of the wires $\{A/R,B,C\}$ it is fully separable.
\end{lemma}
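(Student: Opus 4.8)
\textbf{Proof plan for Lemma~\ref{lem::FullSep}.}
The plan is to leverage Proposition~\ref{proposition_eb} twice, being careful about the distinction between \emph{separability in a splitting} and \emph{full separability}. By Proposition~\ref{proposition_eb}, each EB channel yields separability in one of the three splittings $A:BC$, $B:AC$, or $C:AB$, and the hypothesis that two of the three wires (excluding the forbidden combination $A$ \emph{and} $R$, which both certify the same splitting $A:BC$) carry EB channels guarantees separability in two \emph{distinct} splittings. The key observation I would exploit is that the proof of Proposition~\ref{proposition_eb} does more than just certify biseparability: an EB channel $N^{\mathrm{EB}} = \sum_\alpha E^{(\alpha)} \otimes \xi^{(\alpha)}$ inserted on a wire decomposes the comb into a sum of terms in which that wire is literally cut, i.e.\ it produces an explicit decomposition $\Upsilon_{ABC} = \sum_\alpha \rho_M^{(\alpha)} \otimes D_{\overline{M}}^{(\alpha)}$ with all factors positive. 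The idea is that this structural decomposition survives the insertion of a \emph{second} EB channel.

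First I would fix notation: say without loss of generality the two EB channels sit on wires certifying the splittings $M_1:\overline{M_1}$ and $M_2:\overline{M_2}$ with $M_1 \neq M_2$ (using that the excluded pair $A,R$ is ruled out by hypothesis). Inserting the first EB channel, as in the proof of Proposition~\ref{proposition_eb}, writes $\Upsilon_{ABC} = \sum_\alpha \rho_{M_1}^{(\alpha)} \otimes D_{\overline{M_1}}^{(\alpha)}$, where each $D_{\overline{M_1}}^{(\alpha)}$ is itself computed by a link product of the remaining building blocks, one of which still carries the second EB channel. Then I would apply the \emph{same} link-product manipulation to each $D_{\overline{M_1}}^{(\alpha)}$ individually: since the second EB channel lies entirely within the support of $\overline{M_1}$ (because $M_2 \in \overline{M_1}$ and the wire on which it sits connects only to factors in $\overline{M_1}$), its measure-and-prepare form $\sum_\beta F^{(\beta)} \otimes \zeta^{(\beta)}$ factorizes each $D_{\overline{M_1}}^{(\alpha)}$ across $M_2:\overline{M_1 M_2}$, giving $D_{\overline{M_1}}^{(\alpha)} = \sum_\beta \sigma_{M_2}^{(\alpha,\beta)} \otimes G_{\overline{M_1 M_2}}^{(\alpha,\beta)}$ with positive factors.

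Combining the two decompositions yields
\begin{gather}
\Upsilon_{ABC} = \sum_{\alpha,\beta} \rho_{M_1}^{(\alpha)} \otimes \sigma_{M_2}^{(\alpha,\beta)} \otimes G_{\overline{M_1 M_2}}^{(\alpha,\beta)}\, ,
\end{gather}
where $\{M_1, M_2, \overline{M_1 M_2}\}$ is a partition of $\{A,B,C\}$ into three singletons (since there are only three parties and $M_1 \neq M_2$), and every tensor factor is a positive operator. This is precisely the definition of a fully separable comb, which concludes the argument. The main obstacle I anticipate is purely bookkeeping rather than conceptual: one must verify that the second EB channel genuinely acts \emph{within} the block $\overline{M_1}$ for every admissible pair of wires, so that its factorization commutes with the first decomposition and does not reconnect spaces across the $M_1$ cut. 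This requires checking the wire-to-party incidence for each allowed configuration of the two EB channels among $\{A/R, B, C\}$ — tedious but routine, and guaranteed to work precisely because the excluded case ($A$ and $R$) is the only one in which both EB channels certify the \emph{same} splitting and hence fail to produce two independent cuts.
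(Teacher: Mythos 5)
Your proposal is correct and takes essentially the same route as the paper's proof in App.~\ref{app::LemFullSep}: both arguments insert the measure-and-prepare decompositions $\sum_\alpha E^{(\alpha)} \otimes \xi^{(\alpha)}$ of the two EB channels into the link-product representation of $\Upsilon_{ABC}$ and read off a sum of triple tensor products of positive operators, i.e., full separability (with positivity of the residual factors guaranteed by positivity of the link product). The only difference is organizational: the paper substitutes both decompositions simultaneously for the representative case of EB channels on $A$ and $B$, noting the remaining wire configurations follow analogously, whereas you nest the two substitutions sequentially and make the case bookkeeping explicit — the same computation.
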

The proof can be found in App.~\ref{app::LemFullSep}. While an EB channel on one of the wires always leads to a comb that is separable in the corresponding splitting, it is \textit{a priori} unclear, if every separable comb $\Upsilon_{ABC}$ has a representation that contains an EB channel on the correct wire.

\subsection{\texorpdfstring{$\Ecal(C:AB) = 0$}{} implies EB channel on C}

First, consider the case $\Ecal(C:AB) = 0$. We have the following Proposition: 
\begin{proposition}
If a proper comb $\Upsilon_{ABC}$ satisfies $\Ecal(C:AB) = 0$, then there exists an initial state $\rho_{AR}$, a CPTP map $L_{BRC'}$ and an EB channel $N_{C'C}^{\text{EB}}$, such that $\Upsilon_{ABC} = \rho_{AR} \star L_{BRC'} \star N_{C'C}^{\text{EB}}$
\end{proposition}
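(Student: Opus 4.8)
The plan is to build the required circuit directly from the separable decomposition guaranteed by $\Ecal(C:AB)=0$, interposing a classical register on the wire $C'$ that is subsequently read out by a measure-and-prepare (hence entanglement breaking) channel; this makes the statement the exact converse of Prop.~\ref{proposition_eb}.

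First I would use $\Ecal(C:AB)=0$ to write $\Upsilon_{ABC}=\sum_\alpha \Theta_{AB}^{(\alpha)}\otimes \xi_C^{(\alpha)}$ with $\Theta_{AB}^{(\alpha)}\geq 0$ and the $\xi_C^{(\alpha)}$ normalized states (absorbing weights into $\Theta_{AB}^{(\alpha)}$ and dropping null terms). Tracing out $C$ and invoking the causality constraint of Eq.~\eqref{eqn::Simplified_Caus}, $\tr_C(\Upsilon_{ABC})=\ident_B\otimes\Upsilon_A$, then fixes $\sum_\alpha \Theta_{AB}^{(\alpha)}=\ident_B\otimes\Upsilon_A$. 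This identity is the only place the causal structure enters, and it is exactly what is needed for the intermediate object below to be a genuine comb.

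Next I would introduce a register $\Hcal_{C'}$ with orthonormal basis $\{\ket{\alpha}\}$ and set $\Upsilon_{ABC'}=\sum_\alpha \Theta_{AB}^{(\alpha)}\otimes\ketbra{\alpha}{\alpha}_{C'}$. This is positive, and the previous step gives $\tr_{C'}(\Upsilon_{ABC'})=\ident_B\otimes\Upsilon_A$, so it is a valid two-time comb. I would then define the channel $\Ncal^{\mathrm{EB}}[\sigma]=\sum_\alpha \braket{\alpha|\sigma|\alpha}\,\xi_C^{(\alpha)}$, whose Choi matrix $N^{\mathrm{EB}}_{C'C}=\sum_\alpha \ketbra{\alpha}{\alpha}_{C'}\otimes\xi_C^{(\alpha)}$ is manifestly separable (hence entanglement breaking) and which is CPTP because $\{\ket{\alpha}\}$ is an orthonormal basis and the $\xi_C^{(\alpha)}$ are normalized. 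A direct link-product computation, using $\ketbra{\alpha}{\alpha}^{\mathrm{T}}=\ketbra{\alpha}{\alpha}$ and orthonormality, then yields $\Upsilon_{ABC'}\star N^{\mathrm{EB}}_{C'C}=\sum_\alpha \Theta_{AB}^{(\alpha)}\otimes\xi_C^{(\alpha)}=\Upsilon_{ABC}$.

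Finally, since $\Upsilon_{ABC'}$ is a bona fide comb, I would invoke the standard realization theorem for quantum combs to obtain an initial state $\rho_{AR}$ and a CPTP map $L_{BRC'}$ with $\Upsilon_{ABC'}=\rho_{AR}\star L_{BRC'}$; associativity of the link product then gives $\Upsilon_{ABC}=\rho_{AR}\star L_{BRC'}\star N^{\mathrm{EB}}_{C'C}$, as claimed. The only non-elementary ingredient is this realization theorem, which already underlies the $\rho_{AR}\star L$ parametrization used throughout the paper; the genuine content is the observation that causality forces $\sum_\alpha \Theta_{AB}^{(\alpha)}$ into the product form $\ident_B\otimes\Upsilon_A$, which is precisely what allows the classical register $C'$ to be appended so that the leftover map onto $C$ can be taken entanglement breaking. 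I expect the main subtlety to be bookkeeping: verifying the trace/causality conditions on $\Upsilon_{ABC'}$ and confirming the link product reproduces $\Upsilon_{ABC}$ with the correct conventions.
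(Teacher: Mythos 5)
Your proposal is correct and takes essentially the same route as the paper's proof: decompose $\Upsilon_{ABC}$ via $\Ecal(C:AB)=0$, replace the $C$ factors by orthogonal classical flags $\ketbra{\alpha}{\alpha}_{C'}$ to form an intermediate object, note that the causality constraint $\tr_C(\Upsilon_{ABC})=\ident_B\otimes\Upsilon_A$ makes it a proper comb, recover $\Upsilon_{ABC}$ by linking with the measure-and-prepare channel $N^{\mathrm{EB}}_{C'C}=\sum_\alpha \ketbra{\alpha}{\alpha}_{C'}\otimes\xi^{(\alpha)}_C$, and finally invoke the comb realization theorem of Ref.~\cite{chiribella_theoretical_2009} to produce $\rho_{AR}$ and $L_{BRC'}$. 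If anything, your write-up is slightly more careful than the paper's, which verifies the comb property of $\widetilde\Upsilon_{ABC'}$ only implicitly and nominally applies the realization step to $\Upsilon_{ABC}$ rather than to the intermediate comb $\widetilde\Upsilon_{ABC'}$, where it is actually needed.
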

\begin{proof}
If $\Ecal(C:AB) = 0$, then $\Upsilon_{ABC}$ is of the form 
\begin{gather}
   \Upsilon_{ABC} =  d_B \sum_{\alpha} p(\alpha)\xi_{AB}^{(\alpha)} \otimes \eta_C^{(\alpha)}\, ,
\end{gather}
where $\{p(\alpha)\}$ are probabilities that sum up to one, $\{\xi^{(\alpha)}_{AB},\eta_{C}^{(\alpha)}\}$ are states on the respective spaces, and $d_B = \text{dim}(\Hcal_B)$. Here, and in the following proofs, for clarity of the exposition, we make the normalization of the comb evident by factoring out the factor $d_B$. If $\Upsilon_{ABC}$ is a proper comb, then so is 
\begin{gather}
 \widetilde{\Upsilon}_{ABC'} =  d_B \sum_{\alpha} p(\alpha) \xi_{AB}^{(\alpha)} \otimes \ketbra{\alpha}{\alpha}_{C'}\, , 
\end{gather}
where $\braket{\alpha|\alpha'}_{C'} = \delta_{\alpha\alpha'}$. Now, choosing $N_{C'C}^{\text{EB}} = \sum_{\alpha} \ketbra{\alpha}{\alpha}_{C'} \otimes \eta^{(\alpha)}_C$, we can concatenate $\widetilde{\Upsilon}_{ABC'}$ and $N_{C'C}^{\text{EB}}$ to obtain
\begin{gather}
\begin{split}
    \widetilde{\Upsilon}_{ABC'} \star N_{C'C}^{\text{EB}} &= \sum_{\alpha} \braket{\alpha_{C'}| \widetilde{\Upsilon}_{ABC'} | \alpha_{C'}} \otimes \rho^{(\alpha)}_C\\& = \Upsilon_{ABC}\, ,
\end{split}
\end{gather}
which implies that $\Upsilon_{ABC}$ can be understood as coming from a process given by $\widetilde{\Upsilon}_{ABC'}$, with an additional entanglement breaking channel on the final output leg $C'$. As  $\Upsilon_{ABC}$ is a proper comb, there is an underlying circuit with initial state $\rho_{AR}$ and CPTP map $L_{BRC'}$ that leads to it. This concludes the proof. 
\end{proof}
While the above Proposition provides a constructive way to obtain a circuit with an entanglement breaking map on $C$ for any comb $\Upsilon_{ABC}$ that is separable in the splitting $C:AB$, it comes with a caveat. In principle, depending on how many terms make up $d_B \sum_{\alpha} p(\alpha)\rho_{AB}^{(\alpha)} \otimes \eta_C^{(\alpha)}$, the dimension of $C'$ can be much bigger than the dimension of $C$. It remains an open question, if there is also always a realization with an EB channel that satisfies $\text{dim}(\Hcal_{C'}) = \text{dim}(\Hcal_{C})$.

Before advancing, let us emphasize the operational meaning of the absence of entanglement in the $C:AB$ splitting in terms of the action of the corresponding comb on a bipartite state $\rho_{BB'}$ that is fed into the process. The resulting tripartite state on $AB'C$ is given by 
\begin{gather}
    \Upsilon_{ABC} \star \rho_{BB'}  =  d_B \sum_{\alpha} p(\alpha)(\xi_{AB}^{(\alpha)}\star \rho_{BB'}) \otimes \eta_C^{(\alpha)} =: \sum_{\alpha} p(\alpha)(\kappa_{AB'}^{(\alpha)} \otimes \eta_C^{(\alpha)}), 
\end{gather}
which is separable in the splitting $C:AB'$, but potentially entangled in the $A:BC'$ splitting. Consequently, any comb that satisfies $\Ecal(C:AB) = 0$ breaks the original entanglement of \textit{any} state $\rho_{BB'}$ that is fed into the process (otherwise, entanglement between $B'$ and $C$ would have to be present in the above state), but potentially entangles it with another system (here, the system $A$). In this sense, entanglement of an input state can merely be swapped, but not be distributed between all three parties. We will see below that similar operational statements also apply to combs that are separable in different splittings. 

\subsection{\texorpdfstring{$\Ecal(A:BC) = 0$}{} and EB channels on A}
The case of combs that are separable in the splitting $A:BC$ has been discussed in~\cite{giarmatzi_witnessing_2018} as a potential superset of the set of processes that only display classical memory (i.e., processes where only classical information is fed forward on the environment line $R$), and the question was left open, if it constitutes a proper superset. Phrased in our nomenclature, this question amounts to the question if every comb that is separable with respect to the splitting $A:BC$ can be represented as a comb with an EB channel on $A$ (or, equivalently, $R$). Here, we provide a partial answer to this question. To do so, we need the following Lemma: 
\begin{lemma}
\label{lem::propChan}
Any proper comb of the form $\Upsilon_{ABC} = \sum_\alpha \rho_A^{(\alpha)} \otimes G_{BC}^{(\alpha)}$, where $\{\rho_A^{(\alpha)}\}$ is a set of quantum states and $G_{BC}^{(\alpha)} \geq 0$ for all $\alpha$ can be represented with an EB channel on $A$ if all $G_{BC}^{(\alpha)}$ are proportional to CPTP maps.   
\end{lemma}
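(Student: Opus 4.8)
The plan is to upgrade the purely algebraic decomposition $\Upsilon_{ABC} = \sum_\alpha \rho_A^{(\alpha)} \otimes G_{BC}^{(\alpha)}$ into a genuine circuit whose environment carries only a classical label $\alpha$, so that the separability in the cut $A:BC$ can be traced back to an entanglement breaking channel on the $A$ wire. First I would use the hypothesis to write $G_{BC}^{(\alpha)} = c_\alpha \widetilde{G}_{BC}^{(\alpha)}$, where each $\widetilde{G}_{BC}^{(\alpha)}$ is the Choi matrix of a CPTP map $\widetilde{\Gcal}^{(\alpha)} : \Bcal(\Hcal_B) \rightarrow \Bcal(\Hcal_C)$, so that $\tr_C \widetilde{G}_{BC}^{(\alpha)} = \ident_B$ and $c_\alpha \geq 0$. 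Feeding this into the causality constraint $\tr_C(\Upsilon_{ABC}) = \ident_B \otimes \Upsilon_A$ gives $\tr_C(\Upsilon_{ABC}) = \big(\sum_\alpha c_\alpha \rho_A^{(\alpha)}\big) \otimes \ident_B$, whence $\Upsilon_A = \sum_\alpha c_\alpha \rho_A^{(\alpha)}$ and, taking the trace, $\sum_\alpha c_\alpha = 1$. Thus the $c_\alpha$ form a probability distribution; this is the one place where the proper-comb property enters, and it is exactly what guarantees that the initial state constructed below is normalized.

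Next I would introduce an environment $\Hcal_R$ with orthonormal basis $\{\ket{\alpha}_R\}$ (of dimension equal to the number of terms) and set
\begin{gather}
\rho_{AR} = \sum_\alpha c_\alpha\, \rho_A^{(\alpha)} \otimes \ketbra{\alpha}{\alpha}_R, \qquad \Lcal[\sigma_{BR}] = \sum_\alpha \widetilde{\Gcal}^{(\alpha)}\big[\bra{\alpha}_R \sigma_{BR} \ket{\alpha}_R\big]\, .
\end{gather}
Here $\rho_{AR}$ is a normalized, separable initial state (classical on $R$), and $\Lcal$ is the controlled channel that reads $R$ in the basis $\{\ket{\alpha}\}$ and applies $\widetilde{\Gcal}^{(\alpha)}$ from $B$ to $C$. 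I would check that $\Lcal$ is CPTP: complete positivity is clear termwise, and trace preservation follows from $\tr_C \widetilde{G}_{BC}^{(\alpha)} = \ident_B$, since $\tr_C \Lcal[\sigma_{BR}] = \sum_\alpha \tr_B(\bra{\alpha}_R \sigma_{BR} \ket{\alpha}_R) = \tr(\sigma_{BR})$. A short computation shows that the Choi matrix is $L_{BRC} = \sum_\alpha \ketbra{\alpha}{\alpha}_R \otimes \widetilde{G}_{BC}^{(\alpha)}$; since both factors are diagonal in the register basis, the link product over $R$ collapses all cross terms, giving $\rho_{AR} \star L_{BRC} = \sum_\alpha c_\alpha\, \rho_A^{(\alpha)} \otimes \widetilde{G}_{BC}^{(\alpha)} = \Upsilon_{ABC}$, so this circuit realizes the comb.

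Finally, since $\rho_{AR}$ is separable, I would exhibit the entanglement breaking channel explicitly on the wire $A$: taking the measure-and-prepare channel $N^{\mathrm{EB}}_{A'A}[\,\cdot\,] = \sum_\alpha \bra{\alpha} \cdot \ket{\alpha}\, \rho_A^{(\alpha)}$ together with the classical pre-state $\rho'_{A'R} = \sum_\alpha c_\alpha \ketbra{\alpha}{\alpha}_{A'} \otimes \ketbra{\alpha}{\alpha}_R$, one has $(N^{\mathrm{EB}}_{A'A} \otimes \Ical_R)[\rho'_{A'R}] = \rho_{AR}$, so that $\Upsilon_{ABC} = \rho'_{A'R} \star N^{\mathrm{EB}}_{A'A} \star L_{BRC}$ is a representation with an EB channel on $A$ (equivalently, since $\rho_{AR}$ is classical on $R$, the dephasing in $\{\ket{\alpha}\}$ furnishes such a channel on $R$). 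The only real obstacle is the trace preservation of $\Lcal$: a generic positive $G_{BC}^{(\alpha)}$ would yield merely a CP, not CPTP, controlled map, and the decomposition could then fail to arise from any proper comb. It is precisely the hypothesis that each $G_{BC}^{(\alpha)}$ be proportional to a channel that makes the classical-environment circuit legitimate, after which the EB-channel representation is immediate.
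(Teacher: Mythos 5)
Your proof is correct and follows essentially the same route as the paper: derive that the proportionality constants form a probability distribution from the causality constraint, realize the comb via a classical-flag environment $\rho_{AR} = \sum_\alpha c_\alpha \rho_A^{(\alpha)} \otimes \ketbra{\alpha}{\alpha}_R$ with the controlled channel $L_{BRC} = \sum_\alpha \ketbra{\alpha}{\alpha}_R \otimes \widetilde{G}_{BC}^{(\alpha)}$, and then factor the initial state through a measure-and-prepare channel on $A$. If anything, your version is slightly more careful than the paper's, which omits the weights $c_\alpha$ from its pre-state $\widetilde{\rho}_{A'R}$ (a minor slip, since without them the link product would not reproduce $\Upsilon_{ABC}$), whereas your $\rho'_{A'R} = \sum_\alpha c_\alpha \ketbra{\alpha}{\alpha}_{A'} \otimes \ketbra{\alpha}{\alpha}_R$ makes the composition exact.
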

\begin{proof}
By assumption, we have $\tr_C (G_{BC}^{(\alpha)}) = \mu_\alpha \ident_B$, where $\mu_\alpha > 0$ (as $G_{BC}^{(\alpha)}\geq 0$). From $\tr_{AC} (\Upsilon_{ABC}) = \ident_B$, it follows that $0 < \mu_\alpha \leq 1$ and $\sum_\alpha \mu_\alpha = 1$, implying that $\{\mu_\alpha\}_\alpha$ is a probability distribution. Now, choose an initial system-environment state 
\begin{gather}
\label{eqn::sep_init}
    \rho_{AR} = \sum_\alpha \mu_\alpha \rho_{A}^{(\alpha)} \otimes \ketbra{\alpha}{\alpha}\,,
\end{gather} 
with $\braket{\alpha|\alpha'} = \delta_{\alpha\alpha'}$, and a map $L_{BRC} = \sum_{\alpha} \ketbra{\alpha}{\alpha}_R \otimes \widetilde G_{BC}^{(\alpha)}$, where $\widetilde G_{BC}^{(\alpha)}$ is the CPTP map $G_{BC}^{(\alpha)}/\mu_\alpha$. Intuitively, $L_{BRC}$ is the Choi matrix of a map that measures the environment $R$ in the computational basis and, controlled on the outcome performs the CPTP map $\widetilde G_{BC}^{(\alpha)}$ on $B$. As the channel $L_{BRC}$ given above is positive and satisfies $\tr_{C}(L_{BRC}) = \ident_{BR}$ it is a CPTP map. With these choices, by construction we have $\Upsilon_{ABC} = \rho_{AR} \star L_{BRC}$. Now, defining $\widetilde \rho_{A'R} = \sum_\alpha \ketbra{\alpha}{\alpha}_{A'} \otimes \ketbra{\alpha}{\alpha}_{R}$, and an EB channel on $A'$ of the form $N_{AA'}^\text{EB} = \sum_{\alpha} \ketbra{\alpha}{\alpha}_{A'} \otimes \rho_{A}^{(\alpha)}$, it is easy to see that $\Upsilon_{ABC} = \widetilde \rho_{A'R} \star N_{AA'}^\text{EB}  \star L_{BRC}$. 
\end{proof}
While in the above proof, we consider the case of representing the comb $\Upsilon_{ABC}$ by means of an EB channel on $A$, in closer correspondence with the considerations of~\cite{giarmatzi_witnessing_2018}, we could also have considered a representation by means of an EB channel on $R$. Concretely, choosing an EB channel $N_{RR'}^\text{EB} = \sum_{\alpha} \ketbra{\alpha}{\alpha}_{R'} \otimes \ketbra{\alpha}{\alpha}_{R}$ on the environment, one sees that, with the appropriate relabeling,  $\rho_{AR} = N_{RR'}^\text{EB} \star \rho_{AR'}$, where $\rho_{AR} \cong \rho_{AR'}$, implying that inserting the EB channel $N_{RR'}^\text{EB}$ on $R$ does not change the resulting comb $\Upsilon_{ABC}$. 

Using Lem.~\ref{lem::propChan}, we can show that a proper subset of combs that are separable in the $A:BC$ cut admit a representation that includes an EB channel on $A$. 
\begin{proposition}
\label{prop::A_BC}
A comb of the form $\Upsilon_{ABC} = \sum_\alpha \rho_A^{(\alpha)} \otimes G_{BC}^{(\alpha)}$, where $\{\rho_A^{(\alpha)}\}$ is a set of quantum states and $G_{BC}^{(\alpha)} \geq 0$ for all $\alpha$, can be represented by a circuit containing an EB channel on $A$ if all states in $\{\rho_A^{(\alpha)}\}_\alpha$ are linearly independent in $\Bcal(\Hcal_{A})$. 
\end{proposition}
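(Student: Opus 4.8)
The plan is to reduce the proposition to Lem.~\ref{lem::propChan} by verifying its hypothesis: under the stated linear-independence assumption, every $G_{BC}^{(\alpha)}$ in the given decomposition must be proportional to the Choi matrix of a CPTP map. The causality constraint will force this, and linear independence is precisely what lets me extract a condition on each individual term rather than just on the sum.

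First I would invoke the causality constraint of Eq.~\eqref{eqn::Simplified_Caus}. Applying $\tr_C$ to $\Upsilon_{ABC} = \sum_\alpha \rho_A^{(\alpha)} \otimes G_{BC}^{(\alpha)}$ gives $\sum_\alpha \rho_A^{(\alpha)} \otimes \tr_C(G_{BC}^{(\alpha)}) = \Upsilon_A \otimes \ident_B$. I would then compute the reduced state from the same decomposition: writing $g_\alpha := \tr(G_{BC}^{(\alpha)})$ and using the comb normalization $\tr(\Upsilon_{ABC}) = d_B$, a further partial trace over $B$ yields $\Upsilon_A = \tfrac{1}{d_B}\sum_\alpha g_\alpha \rho_A^{(\alpha)}$. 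Substituting this back and rearranging gives the single operator identity $\sum_\alpha \rho_A^{(\alpha)} \otimes \big(\tr_C(G_{BC}^{(\alpha)}) - \tfrac{g_\alpha}{d_B}\ident_B\big) = 0$.

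The main step is then purely linear-algebraic. Since the $\{\rho_A^{(\alpha)}\}$ are linearly independent in $\Bcal(\Hcal_A)$, there is a dual family of linear functionals $\{f_\beta\}$ on $\Bcal(\Hcal_A)$ with $f_\beta(\rho_A^{(\alpha)}) = \delta_{\alpha\beta}$. Applying $f_\beta \otimes \mathrm{id}_B$ to the vanishing sum annihilates every term except the $\beta$-th and forces $\tr_C(G_{BC}^{(\beta)}) = \tfrac{g_\beta}{d_B}\ident_B$ for each $\beta$. Hence each $G_{BC}^{(\alpha)}$ has a partial trace over $C$ proportional to $\ident_B$ and is therefore, after discarding any vanishing terms so that $g_\alpha > 0$, proportional to a CPTP Choi matrix from $B$ to $C$. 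At this point the hypothesis of Lem.~\ref{lem::propChan} is satisfied, and invoking that lemma immediately produces a representation of $\Upsilon_{ABC}$ with an EB channel on $A$, completing the argument.

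The delicate point --- and the only place where linear independence is genuinely used --- is this dual-functional step: without it, the vanishing sum could hold with nonzero coefficient operators, so the individual $G_{BC}^{(\alpha)}$ would not be pinned down as proportional to CPTP maps and the term-by-term application of Lem.~\ref{lem::propChan} would fail. This is exactly the structural gap that obstructs a proof of the full converse for arbitrary $A:BC$-separable combs, which is why the statement is restricted to linearly independent $A$-components.
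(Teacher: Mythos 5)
Your proof is correct, and it reaches the hypothesis of Lem.~\ref{lem::propChan} by a more direct route than the paper's. The paper first establishes the operational fact that for any POVM element the conditional object $F_{BC}^{(j)} = E_A^{(j)}\star\Upsilon_{ABC}$ is proportional to a CPTP Choi matrix (itself a consequence of the same causality constraint you use), then picks an \emph{informationally complete} POVM, expands the dual matrices $\Delta_A^{(\alpha)}$ of the linearly independent family as real combinations $\Delta_A^{(\alpha)} = \sum_j c_j^{(\alpha)} E_A^{(j)\mathrm{T}}$, and concludes $\tr_C(G_{BC}^{(\alpha)}) = \sum_j c_j^{(\alpha)} q_j\, \ident_B$. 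You bypass the IC-POVM machinery entirely: taking $\tr_C$ of the causality constraint, computing $\Upsilon_A = \tfrac{1}{d_B}\sum_\alpha g_\alpha \rho_A^{(\alpha)}$, and pairing the vanishing tensor identity $\sum_\alpha \rho_A^{(\alpha)}\otimes\bigl(\tr_C(G_{BC}^{(\alpha)}) - \tfrac{g_\alpha}{d_B}\ident_B\bigr) = 0$ with dual functionals yields $\tr_C(G_{BC}^{(\alpha)}) = \tfrac{g_\alpha}{d_B}\ident_B$ directly. The essential content is the same in both arguments --- duals to a linearly independent family isolate each term, and causality forces each $B$-marginal to be proportional to the identity --- but your version is shorter, makes the proportionality constants explicit and manifestly non-negative (in the paper, non-negativity of $\sum_j c_j^{(\alpha)} q_j$ only follows after invoking $G_{BC}^{(\alpha)}\geq 0$, a point left implicit), and you handle the degenerate $g_\alpha = 0$ terms, which the paper glosses over. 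What the paper's detour buys is an operational reading: each $G_{BC}^{(\alpha)}$ is exhibited as a real combination of the conditional channels Alice can induce by measuring, tying the proposition to the surrounding discussion of conditional maps. Your closing remark about where linear independence is genuinely used matches the paper's own commentary on why the converse fails in general.
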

\begin{proof}
Let $\{E_A^{(j)\mathrm{T}}\}_j$ be an informationally complete POVM on $A$. If Alice performs a measurement on her system, with an outcome corresponding to the POVM element $E_A^{(j)\mathrm{T}}$, then the resulting mapping $F_{BC}^{(j)}$ between Bob and Charlie is given by 
\begin{gather}
    F_{BC}^{(j)} = E_A^{(j)} \star \Upsilon_{ABC}\, .
\end{gather}
Up to a normalization constant, given by the probability $q_j = 1/d_B \cdot \tr(E_A^{(j)} \star \Upsilon_{ABC})$,  $F_{BC}^{(j)}$ is a CPTP map, and we set $F_{BC}^{(j)} = q_j \widetilde F_{BC}^{(j)}$. This property can now be used to show that every $G_{BC}^{(\alpha)}$ is proportional to a CPTP map. To this end, we remark that for every set of linearly independent matrices $\{\rho_A^{(\alpha)}\}$, there exists a set of \textit{dual} matrices $\{\Delta_A^{(\beta)}\}$, such that $\tr(\rho_A^{(\alpha)}\Delta_A^{(\beta)\dagger}) = \delta_{\alpha\beta}$~\cite{modi_positivity_2012, milz_introduction_2017}. If all matrices  $\{\rho_A^{({\alpha})}\}$ are Hermitian, then so are all the duals. However, they are in general not positive -- even if they are the duals of positive matrices -- making their physical interpretation difficult. As they are Hermitian, each dual can be written as a real linear combination $\Delta_B^{(\beta)} = \sum_j c_{j}^{(\beta)} E_A^{(j)\mathrm{T}}$ of the informationally complete POVM elements $\{E_A^{(j)\mathrm{T}}\}_j$. With this, we have 
\begin{gather}
    G_{BC}^{(\alpha)} = \tr(\Upsilon_{ABC} \Delta_A^{(\alpha)\dagger}) = \sum_j c_{j}^{(\alpha)} E_A^{(j)} \star \Upsilon_{ABC} = \sum_j c_{j}^{(\alpha)} q_j \widetilde F_{BC}^{(j)}\, .
\end{gather}
Consequently, for all $\alpha$, $\tr_C(G_{BC}^{(\alpha)}) = \sum_j c_{j}^{(\alpha)} q_j \ident_{B}$ holds, which implies that all matrices in $\{G_{BC}^{(\alpha)}\}$ are proportional to CPTP maps. By Lem.~\ref{lem::propChan} the comb $\Upsilon_{ABC}$ can then be represented with an EB channel on $A$. 
\end{proof}
As for the previous Lemma, the above proof also holds for an EB channel on $R$. Importantly, though, not every $\Upsilon_{ABC}$ that is separable in the $A:BC$ splitting can necessarily be represented by means of linearly independent states $\{\rho_{A}^{(\alpha)}\}_\alpha$. If this requirement is not fulfilled, then the matrices $G_{BC}^{(\alpha)}$ can not be `addressed' anymore by means of duals, and the above proof fails to work. In particular, as causality constraints only require that $\sum_\alpha G_{BC}^{(\alpha)}$ is CPTP, there might exist proper combs $\Upsilon_{ABC}$ that can be decomposed as $\Upsilon_{ABC} = \sum_\alpha \rho_A^{(\alpha)} \otimes G_{BC}^{(\alpha)}$, but cannot be represented as a separable matrix with all terms on $BC$ proportional to CPTP maps. 

As at the end of the previous section, we can, again, give an operational meaning to separability of a comb in the $A:BC$ splitting in terms of its action on an input state $\rho_{BB'}$. By direct insertion, we see that for any input state $\rho_{BB'}$ the resulting tripartite state $\Upsilon_{ABC} \star \rho_{BB'}$ is separable in the $A:B'C$ splitting and can thus at most be entangled between $B'$ and $C$, but not between $B'$ and $A$. This, in turn implies, that a process described by a comb that satisfies $\Ecal(A:BC)=0$ allows one to pertain entanglement, but cannot entangle input states with other degrees of freedom.

\subsection{\texorpdfstring{$\Ecal(B:AC) = 0$}{} and EB channels on B}
As for the case of combs that are separable in the splitting $A:BC$, the causality constraints that $\Upsilon_{ABC}$ has to satisfy only allow for a partial result concerning the connection of EB channels on $B$ and the separability of $\Upsilon_{ABC}$ in the splitting $B:AC$. We have the following Proposition: 

\begin{proposition}
A comb of the form $\Upsilon_{ABC} = d_B \sum_\alpha p(\alpha) E_B^{(\alpha)} \otimes G_{AC}^{(\alpha)}$, with $E_B^{(\alpha)}, G_{AC}^{(\alpha)} \geq 0$ and $\tr (E_B^{(\alpha)}) = \tr (G_{AC}^{(\alpha)}) = 1$ can be represented by a circuit containing an EB channel on $B$ if all matrices in $\{E_B^{(\alpha)}\}_\alpha$ are linearly independent. 
\end{proposition}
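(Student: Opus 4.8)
The plan is to mirror the proof of Proposition~\ref{prop::A_BC}, using the causality constraints to pin down the marginals of the blocks $G_{AC}^{(\alpha)}$ and then explicitly building a circuit whose $B$-wire carries a measure-and-prepare (hence entanglement breaking) channel. First I would extract two structural facts from Eq.~\eqref{eqn::Simplified_Caus}. Tracing out $AC$ and using $\tr(G_{AC}^{(\alpha)})=1$ gives $d_B\sum_\alpha p(\alpha)E_B^{(\alpha)} = \tr_{AC}(\Upsilon_{ABC}) = \ident_B$, so the operators $\widetilde E_B^{(\alpha)} := d_B\,p(\alpha)E_B^{(\alpha)}$ form a POVM on $B$ and $\Upsilon_{ABC} = \sum_\alpha \widetilde E_B^{(\alpha)}\otimes G_{AC}^{(\alpha)}$. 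Next, tracing out only $C$ gives $\sum_\alpha \widetilde E_B^{(\alpha)}\otimes\tr_C(G_{AC}^{(\alpha)}) = \ident_B\otimes\Upsilon_A$; here I use linear independence exactly as in Prop.~\ref{prop::A_BC}: since the $\{\widetilde E_B^{(\alpha)}\}$ are linearly independent there exist dual matrices $\{\Delta_B^{(\beta)}\}$ with $\tr(\widetilde E_B^{(\alpha)}\Delta_B^{(\beta)}) = \delta_{\alpha\beta}$, and pairing the $B$-factor with $\Delta_B^{(\beta)}$ yields $\tr_C(G_{AC}^{(\beta)}) = \tr(\Delta_B^{(\beta)})\,\Upsilon_A$, which upon taking the trace forces $\tr(\Delta_B^{(\beta)})=1$ and hence $\tr_C(G_{AC}^{(\alpha)}) = \Upsilon_A$ for every $\alpha$. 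So linear independence guarantees that all $AC$-blocks share the common $A$-marginal $\Upsilon_A$.

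With these facts in hand I would construct the circuit. Take the initial system--environment state to be a purification $\rho_{AR} = \ketbra{\psi}{\psi}_{AR}$ of $\Upsilon_A$, so that $\tr_R(\rho_{AR}) = \Upsilon_A$. On the $B$-wire place the measure-and-prepare channel $N^{\mathrm{EB}}_{BB'}[\tau] = \sum_\alpha \tr(\widetilde E_B^{(\alpha)}\tau)\,\ketbra{\alpha}{\alpha}_{B'}$, which implements the POVM $\{\widetilde E_B^{(\alpha)}\}$ and records the outcome in the computational basis of $B'$; being measure-and-prepare it is entanglement breaking. Finally let $\Lcal:\Bcal(\Hcal_{B'}\otimes\Hcal_R)\to\Bcal(\Hcal_C)$ read the classical label $\alpha$ off $B'$ and apply a channel $\Lambda^{(\alpha)}:\Bcal(\Hcal_R)\to\Bcal(\Hcal_C)$ to the environment, i.e.\ $\Lcal[\omega_{B'R}] = \sum_\alpha \Lambda^{(\alpha)}[\bra{\alpha}\omega_{B'R}\ket{\alpha}_{B'}]$, which is CPTP whenever each $\Lambda^{(\alpha)}$ is. Concatenating these blocks reproduces, for every input $\tau_B$, the state $\sum_\alpha \tr(\widetilde E_B^{(\alpha)}\tau_B)\,(\mathrm{id}_A\otimes\Lambda^{(\alpha)})[\rho_{AR}]$, so the circuit realises $\Upsilon_{ABC}$ with an EB channel on $B$, \emph{provided} each $\Lambda^{(\alpha)}$ steers the purification correctly, $(\mathrm{id}_A\otimes\Lambda^{(\alpha)})[\rho_{AR}] = G_{AC}^{(\alpha)}$.

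The crux, and the step I expect to be the main obstacle, is establishing the existence of these steering channels $\Lambda^{(\alpha)}$ from a \emph{single} fixed initial state $\rho_{AR}$. This is precisely where the common-marginal fact is indispensable: any channel on $R$ leaves the $A$-marginal equal to $\tr_R(\rho_{AR})=\Upsilon_A$, so a common initial state could never reproduce blocks with differing $A$-marginals. Given $\tr_C(G_{AC}^{(\alpha)})=\Upsilon_A$, each $G_{AC}^{(\alpha)}$ is an extension of $\Upsilon_A$, and the freedom in purifications (the Schr\"odinger--Hughston--Jozsa--Wootters steering argument) then yields the required $\Lambda^{(\alpha)}$: purifying $G_{AC}^{(\alpha)}$ to $\ket{\Phi^{(\alpha)}}_{ACR'}$ gives a second purification of $\Upsilon_A$, necessarily related to $\ket{\psi}_{AR}$ by an isometry $V^{(\alpha)}:\Hcal_R\to\Hcal_C\otimes\Hcal_{R'}$ (the purifying system $CR'$ has dimension at least $\mathrm{rank}(\Upsilon_A)=\dim\Hcal_R$), and $\Lambda^{(\alpha)}[\,\cdot\,] = \tr_{R'}[V^{(\alpha)}(\cdot)\,V^{(\alpha)\dagger}]$ does the job. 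A routine check that $\tr_C(L_{B'RC}) = \ident_{B'R}$ confirms $\Lcal$ is a proper CPTP map and that the overall object is a valid comb, completing the argument. As in Prop.~\ref{prop::A_BC}, linear independence cannot simply be dropped: without it the $G_{AC}^{(\alpha)}$ need not share a common $A$-marginal, and then no single purification-based initial state can steer to all of them.
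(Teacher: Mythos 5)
Your proof is correct, and it rests on the same two pillars as the paper's: causality forces $d_B\sum_\alpha p(\alpha)E_B^{(\alpha)}=\ident_B$ (so the rescaled $\widetilde E_B^{(\alpha)}$ form a POVM), and linear independence, via dual matrices, forces all blocks to share the marginal $\tr_C(G_{AC}^{(\alpha)})=\Upsilon_A$. The execution differs in two places, both to your credit in terms of self-containedness. First, your derivation of the common marginal is more direct: you pair the duals straight against the causality constraint $\tr_C(\Upsilon_{ABC})=\ident_B\otimes\Upsilon_A$ and read off $\tr(\Delta_B^{(\beta)})=1$ by taking a trace, whereas the paper expands each dual in a tomographically complete set of input states $\{\tau_B^{(j)}\}$ and invokes the fact that all conditional states $\rho_{AC}^{(j)}=\Upsilon_{ABC}\star\tau_B^{(j)*}$ share the same $A$-marginal; the two arguments are equivalent, but yours is shorter. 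Second, and more substantively, the paper stops at the comb level: it defines the classically flagged object $\widetilde\Upsilon_{AB'C}=\sum_\alpha \ketbra{\alpha}{\alpha}_{B'}\otimes G_{AC}^{(\alpha)}$, checks it is positive and satisfies Eq.~\eqref{eqn::Simplified_Caus} (this is exactly where the common marginal enters, giving $\tr_C(\widetilde\Upsilon_{AB'C})=\ident_{B'}\otimes\Upsilon_A$), writes $\Upsilon_{ABC}=\widetilde\Upsilon_{AB'C}\star N^{\text{EB}}_{BB'}$ with $N^{\text{EB}}_{BB'}=d_B\sum_\alpha p(\alpha)E_B^{(\alpha)}\otimes\ketbra{\alpha}{\alpha}_{B'}$, and then appeals to the general theorem that every proper comb admits a circuit realization. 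You instead construct the dilation explicitly: a fixed purification $\ket{\psi}_{AR}$ of $\Upsilon_A$, the measure-and-prepare channel on $B$, and steering channels $\Lambda^{(\alpha)}$ obtained from the Hughston--Jozsa--Wootters relation between purifications. This buys an explicit circuit and makes transparent \emph{why} the common-marginal property is indispensable (no channel on $R$ can alter the $A$-marginal of a fixed $\rho_{AR}$), at the cost of some bookkeeping the paper avoids -- e.g., your identification $\dim\Hcal_R=\operatorname{rank}(\Upsilon_A)$ holds only because you may choose the purification minimal, which is worth stating explicitly, and the equality of your circuit's Choi matrix with $\Upsilon_{ABC}$ follows from agreement on a tomographically complete set of inputs $\tau_B$, which you use implicitly. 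Neither point is a gap; your closing remark on why linear independence cannot simply be dropped also matches the paper's discussion following Prop.~\ref{prop::A_BC}.
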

\begin{proof}
First, using the causality constraints on $\Upsilon_{ABC}$, we see that the reduced state in Alice's lab is independent of the state that Bob feeds into the process, i.e., for any quantum states $\{\tau_B,\tau_B'\}$, 
\begin{gather}
\label{eqn::same_reduced}
    \Upsilon_{ABC} \star \tau_B \star \ident_C =  \Upsilon_{ABC} \star \tau_B' \star \ident_C =: \rho_A\, 
\end{gather}
holds, where $\ident_C$ is the Choi matrix of the operation that discards the system, the only trace preserving operation Charlie can perform~\cite{dariano_giacomo_mauro_causality_2018}. Additionally, the causality constraint implies 
\begin{gather}
\label{eqn::normalization}
    d_B \sum_\alpha p(\alpha) E_B^{(\alpha)} = \ident_B\, .
\end{gather}

Now, analogous to the proof of Prop.~\ref{prop::A_BC}, let $\{\Delta_B^{(\beta)}\}$ be the dual set of $\{E_B^{(\alpha)}\}$. Choosing a set of $d_B^2$ linearly independent states $\{\tau_B^{(j)}\}$, each dual can be written as $\Delta^{(\alpha)\dagger}_B = \sum_j c_j^{(\alpha)} \tau_B^{(j)}$. From Eq.~\eqref{eqn::normalization}, it is easy to see that \begin{gather}
\label{eqn::normalization2}
    \tr(\Delta_B^{(\alpha)\dagger}) = \sum_jc_j^{(\alpha)} = d_B p(\alpha)\, .
\end{gather} With this, we have 
\begin{gather}
G_{AC}^{(\alpha)} = \tfrac{1}{d_Bp(\alpha)}\Upsilon_{ABC} \star \Delta^{(\alpha)*}_B 
= \tfrac{1}{d_Bp(\alpha)} \sum_j c_j^{(\alpha)}  \Upsilon_{ABC} \star \tau_B^{(j)*}
=: \tfrac{1}{d_Bp(\alpha)} \sum_j c_j^{(\alpha)} \rho_{AC}^{(j)}\, , 
\end{gather}
where the complex conjugate is required to comply with the definition of the link product. From Eq.~\eqref{eqn::same_reduced} it follows that all $\rho_{AC}^{(j)}$ have the same reduced state, i.e., $\tr_C(\rho_{AC}^{(j)}) = \rho_A$. Together with the fact that $\tr (G_{AC}^{(\alpha)}) = 1$ and Eq.~\eqref{eqn::normalization2}, this implies that $\tr_C (G_{AC}^{(\alpha)}) = \rho_A$ for all $\alpha$. Then, it is easy to see that the matrix 
\begin{gather}
    \widetilde \Upsilon_{AB'C} =  \sum_\alpha \ketbra{\alpha}{\alpha}_{B'} \otimes G_{AC}^{(\alpha)}
\end{gather}
is also a proper comb, as it is positive and satisfies the causality constraints of Eq.~\eqref{eqn::Simplified_Caus}. Now, defining the EB channel $N_{BB'}^{\text{EB}} = d_B\sum_\alpha p(\alpha) E_B^{(\alpha)} \otimes \ketbra{\alpha}{\alpha}_{B'}$, one sees that 
\begin{gather}
   \Upsilon_{ABC} =  \widetilde \Upsilon_{AB'C} \star N_{BB'}^{\text{EB}}\, .
\end{gather}
This concludes the proof.
\end{proof}

Again, as for the other two possible separable combs, there is a direct operational meaning to separability in the $B:AC$  splitting. Specifically, it implies that the resulting tripartite state $\Upsilon_{ABC} \star \rho_{BB'}$ is separable in the $B':AC$ splitting for any input state $\rho_{BB'}$. In turn, this means that the corresponding process breaks any entanglement of its input states (as $B'$ is not entangled with the other degrees of freedom of the output state), but can be a source of entanglement between different degrees of freedom (here, $A$ and $C$).

Having discussed the bipartite case in detail, we now turn to the case of genuine multipartite entanglement, and its structure for the case of quantum processes. 

\section{Example of a biseparable process which shows non-zero bipartite entanglement in all splittings\label{examples}}

Any genuinely multipartite entangled process has to be in the intersection between $\Ecal(A:BC)>0$, $\Ecal(C:AB)>0$ and $\Ecal(B:AC)>0$ as by definition it cannot be separable with respect to some bipartition. However, the conditions $\Ecal(A:BC)>0$, $\Ecal(C:AB)>0$ and $\Ecal(B:AC)>0$ do not imply  genuine multipartite entanglement. 
\begin{figure}
    \centering
   \includegraphics[width=0.4\linewidth]{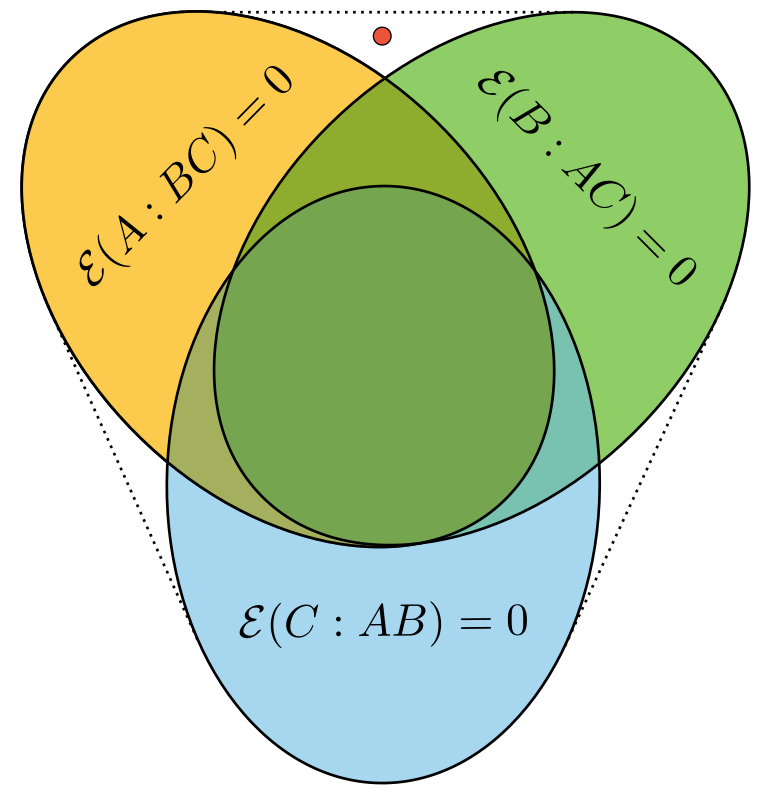}   \caption{\textbf{Different sets of separable combs.} All states outside the convex hull of separable combs are GME. As we show in the main text, there are processes that are not bi-separable, but also not GME (the red dot in the figure would correspond to such a process.}
   \label{fig::Venn_unconditional}
\end{figure}
This is well known (and straightforward to see) for states and holds also true for processes as the following example shows.
The process 
\begin{gather}
K_{ABC} = 2 p \ketbra{0}{0}_A\otimes \ketbra{\Phi^+}{\Phi^+}_{BC} +(1-p) \one_B\otimes \ketbra{\Phi^+}{\Phi^+}_{AC}
\end{gather}
is not genuinely multipartite entangled. However, for $1>p>1/3$ it has the following property: $\Ecal(A:C)>0$ and $\Ecal(B:C)>0$, (as the reduced states are not PPT). This implies that also $\Ecal(A:BC)>0$ and $\Ecal(B:AC)>0$. Moreover, the process is also not PPT with respect to the splitting $C:AB$ and hence $\Ecal(C:AB)>0$. This shows that there exist biseparable processes in the intersection of  $\Ecal(A:BC)>0$, $\Ecal(C:AB)>0$ and $\Ecal(B:AC)>0$. Analogously, it also implies that $\Ecal(A:C)>0$ and $\Ecal(B:C)>0$ do not guarantee GME for processes either (note that $\Ecal(A:B)=0$ for any process due to the causality constraints). While we have that for this process $\Ecal(A:C|b)>0$ and $\Ecal(B:C|a)>0$ (for e.g.  trivial measurements on the respective parties), it is straightforward to see that there exists no measurement such that $\Ecal(A:B|c)>0$. 

One may wonder whether the intersection of $\Ecal(A:B|c)>0$, $\Ecal(A:C|b)>0$ and $\Ecal(B:C|a)>0$ contains also biseparable processes and the answer is yes. An example for such a biseparable process is given by 
\begin{gather}
\begin{split}
K_{ABC} = &2 p \ketbra{0}{0}_{A}\otimes \ketbra{\Phi^+}{\Phi^+}_{BC}+p \one_B\otimes \ketbra{\Phi^+}{\Phi^+}_{AC}
+\frac{1-2p}{2}(\ketbra{\Phi^+}{\Phi^+}_{AB}\otimes\ket{0}_{C}\bra{0}\\
&+\ketbra{\Phi^-}{\Phi^-}_{AB}\otimes \ketbra{1}{1}_{C}+\ketbra{010}{010} +\ketbra{101}{101})
\end{split}
\end{gather}
with $1/12(\sqrt{33}-3)<p<1/2$. That this process is indeed in the intersection can be easily proven by considering the post-measurement state obtained by measuring $ \ketbra{0}{0}_{X}$ for $X\in \{A,B,C\}$ respectively and showing that its partial transpose has a negative eigenvalue, i.e., it is entangled.

\section{Genuinely multipartite entangled processes\label{processesGME}}
So far we investigated conditions on processes to show (conditional) bipartite entanglement. In this section we will consider genuinely multipartite  entangled processes, i.e., the set of processes that lie outside the convex hull of biseparable ones. This discussion is similar in spirit to the one conducted in ~\cite{feix_quantum_2017, maclean_quantum-coherent_2017}, where processes that cannot be understood as a mixture of direct and common cause processes were analysed. Our emphasis, however, lies on the genuine multipartite entanglement properties. 

In particular, we will focus on the three-qubit case and we will provide examples of processes within each of the SLOCC classes, more precisely an example inside the W-class and one for a process  in GHZ$\backslash$W. This implies that (at least in the three qubit case) all types of mixed-state entanglement can occur for processes. Let us finally note for completeness that also biseparable processes that are not biseparable with respect to some specific splitting but some mixtures of such states can occur (see Section~\ref{examples}).

The first example of a genuinely multipartite entangled process is given by
\begin{gather}\label{processW}
\Upsilon_{ABC}=\kb{s_1}{s_1}+\kb{s_2}{s_2}
\end{gather}
with 
\begin{gather}
\ket{s_1}=\frac{\ket{001}}{\sqrt{2}}+\frac{\ket{010}+\ket{100}}{2} \quad \textrm{and} \quad \ket{s_2}=\frac{\ket{110}}{\sqrt{2}}+\frac{i(\ket{101}-\ket{011})}{2}.
\end{gather}
The fact that this process is indeed GME can be straightforwardly checked by means of the SDP in Eq.~\eqref{eqn::PPTSDP}. Additionally, it constitutes a valid process due to 
\begin{gather}
\tr_C (\Upsilon_{ABC} )=\frac{\ident}{2}\otimes\ident.
\end{gather}
As the comb of Eq.~\eqref{processW} satisfies the causality constraints of~\eqref{eqn::Simplified_Caus}, there exists a quantum circuit with a pure initial state and a unitary system-environment dynamics that leads to it. We provide this circuit in App.~\ref{app::Circuit} and show explicitly that its building blocks satisfy the properties laid out in the previous sections.

The process given in Eq.~\eqref{processW} has the following properties: a) It has rank 2, which is the smallest possible rank for a genuinely multipartite entangled process. The latter can be easily seen by noting that there exists no genuinely multipartite entangled pure three-qubit state for which the causality constraint can be fulfilled (as already discussed in Sec.~\ref{prelim_ent}). b) It is in the W-class as it is a convex combination of two pure states within the W-class. c) It can easily be shown that this process is in the intersection of  $\Ecal(A:B|c)>0$, $\Ecal(C:A|b)>0$ and $\Ecal(B:C|a)>0$. Note that this is not necessarily true for any GME process. In Sec.~\ref{sec::Emergent} and App.~\ref{app::ConditionalZero} we provide an example for a process with vanishing conditional entanglement for all splittings.

The second example is  a construction of genuinely multipartite entangled processes for $n$ qubits and for which there exists in the three-qubit case no decomposition into pure fully separable, biseparable and W-class states, i.e., this process is not in the W-class.
As we will show next for arbitrary $n$ the following processes are genuinely multipartite entangled:

\begin{gather}
\begin{split}
\Upsilon^{(n)} &= \frac{1}{2^{\frac{n-3}{2}}}\ketbra{GHZ_n}{GHZ_n} \\
&\phantom{=}+ \frac{1}{2^{\frac{n-1}{2}}}(\ident_{n-1} -\kb{0\ldots0}{0\ldots0} -\kb{1\ldots1}{1\ldots1})\otimes \kb{0}{0},
\label{processGHZ}
\end{split}
\end{gather}
where $\ket{GHZ_n}=1/\sqrt{2} (\ket{0\ldots 0}+ \ket{1\ldots 1})$ and $\one_{k}$ is the identity acting on $k$ qubits.
It can be straightforwardly checked that they are valid processes as performing the partial trace over the last qubit results in 
\begin{gather}
\tr_n (\Upsilon^{(n)})= \frac{\one_{n-1}}{2^{\frac{n-1}{2}}}
\end{gather}
and hence all causality constraints are satisfied. Note that also all other marginals are separable. This implies that even though the process is genuinely multipartite entangled, all entanglement vanishes as soon as one of the parties is discarded. In order to show that these processes are genuinely multipartite entangled we use a necessary condition for biseparability first  introduced in~\cite{entcrit} which states that any biseparable $n$-qubit state, $\rho^{(n)}$ fulfills (see Sec.~\ref{prelim_ent} for further details)
\begin{gather}\label{biseparable2}
|\rho^{(n)}_{(0\ldots 0,1\ldots 1)}|\leq \frac{1}{2}\sum_{|I|\in\{1,\ldots, n-1\}}  \sqrt{\rho_{(I,I)}^{(n)} \rho^{(n)}_{({\bar I},{\bar I})}}.\end{gather}
As can easily be seen it holds for $\Upsilon^{(n)}$ that $\Upsilon_{(I,I)}^{(n)} \Upsilon^{(n)}_{({\bar I},{\bar I})}=0$ for $|I|\in\{1,\ldots, n-1\}$. However, we have that $|\Upsilon^{(n)}_{(0\ldots 0,1\ldots 1)}|=\frac{1}{2^{\frac{n-1}{2}}}$. Therefore, the necessary condition for biseparability in Eq. (\ref{biseparable2}) is violated and the state is genuinely multipartite entangled.

Moreover, it should be noted that for $n=3$ this process has non-zero tangle $\tau$ (see App.~\ref{AppGHZ}) and is therefore not in the W-class. Hence, all different types of mixed three-qubit entanglement can occur for processes.  Note that for the process it holds that  $\Ecal(A:C)=0$ and $\Ecal(B:C)=0$ in spite of being genuinely multipartite entangled. Note further that it can be easily verified the state is in the intersection of $\Ecal(A:B|c)>0$, $\Ecal(C:A|b)>0$ and $\Ecal(B:A|c)>0$. 

In App.~\ref{app::4partite} we provide a further simple example of a genuinely multipartite entangled comb on four parties. Specifically, the corresponding circuit only requires the two-fold application of $\sqrt{\mathcal{S}}[\rho] = \sqrt{S}\rho \sqrt{S}^\dagger$, where $S$ is the Swap gate (see Fig.~\ref{fig::SqrtSwap1}). In the appendix we show that the corresponding comb is indeed genuinely multipartite entangled. 

Finally, returning to the operational meaning of entanglement in combs, as alluded to in Sec.~\ref{sec::PrelStates}, it is clear that a GME comb is a resource that allows one to transform bipartite entanglement, present in an input state $\rho_{BB'}$, to genuine multipartite entanglement. This, naturally, extends to the arbitrary times, in the sense that, starting from $k-1$ bipartite entangled states, a GME comb on $k$ times enables the creation of a genuinely multipartite quantum state.
\begin{figure}
    \centering
    \includegraphics[width=0.4\linewidth]{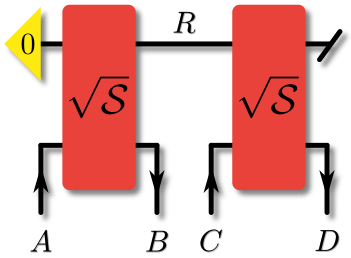}
    \caption{\textbf{Circuit that leads to a four-partite GME comb.} Note that using a Swap gate instead of $\sqrt{S}$ would \textit{not} yield a GME comb (see App.~\ref{app::4partite}.)}
    \label{fig::SqrtSwap1}
\end{figure}

\section{GME Processes with no creatable entanglement}
\label{sec::Emergent}
We have seen in Sec.~\ref{examples} that there exist processes that are entangled in any bipartition, yet not genuinely multipartite entangled. While \textit{all} GME processes are entangled in any bipartite splitting, the same does not hold true for their reduced processes; in particular, we have $\tr_C (\Upsilon_{ABC}) = \ident_B \otimes \Upsilon_{A}$. However, as discussed in Sec.~\ref{sec::BipartiteNec}, there can be conditional entanglement in any of the reduced processes. This begs the question of whether there are GME processes that do not display conditional entanglement in \textit{any} of their reduced versions, \textit{i.e.}, GME processes where no party can induce entanglement between the remaining two parties by means of a measurement. 

This question has been answered for states in~\cite{miklin_multiparticle_2016}, where a GME state with separable conditional marginals was provided, and, in addition, it was shown that the GME of the state can be detected from the separable marginals. As such a state is GME but does not even conditionally display any entanglement in its marginals, it has no creatable entanglement in its subsystem. Here, we investigate whether this phenomenon also exists for processes, i.e., whether there are proper combs $\Upsilon_{ABC}$ that are GME but have no creatable entanglement (this situation is depicted in Fig.~\ref{fig::Venn_conditional}).

To find such a process -- defined on three qubit Hilbert spaces $\Hcal_A$, $\Hcal_B$, and $\Hcal_C$ -- we follow the procedure detailed in~\cite{miklin_multiparticle_2016}; there, GME states with vanishing conditional entanglement have been found by means of a see-saw SDP. Starting from some initial state $\rho_0$, first, an optimal witness for GME is found by means of the SDP in Eq.~\eqref{eqn::PPTSDP}. Then, employing a second SDP, an optimal GME state is found for this witness, with the additional requirement, that the conditional marginals of said state are separable for a `sufficient' (${\sim1000}$ in~\cite{miklin_multiparticle_2016}) number of projective measurements. More precisely, the conditional marginals of said state (with respect to the projective measurements) are required to be \textit{strictly} positive under partial transpose. Iterating this procedure yields a good guess $\rho$ for a GME state with vanishing conditional entanglement, and it can then be shown numerically that the found state actually possesses the desired properties (see~\cite{miklin_multiparticle_2016} for more details on the employed algorithm). Here, we can use the same algorithm, with the additional constraint that the `states' we consider satisfy the causality constraint of Eq.~\eqref{eqn::Simplified_Caus}. 
\begin{figure}
    \centering
    \includegraphics[width=0.4\linewidth]{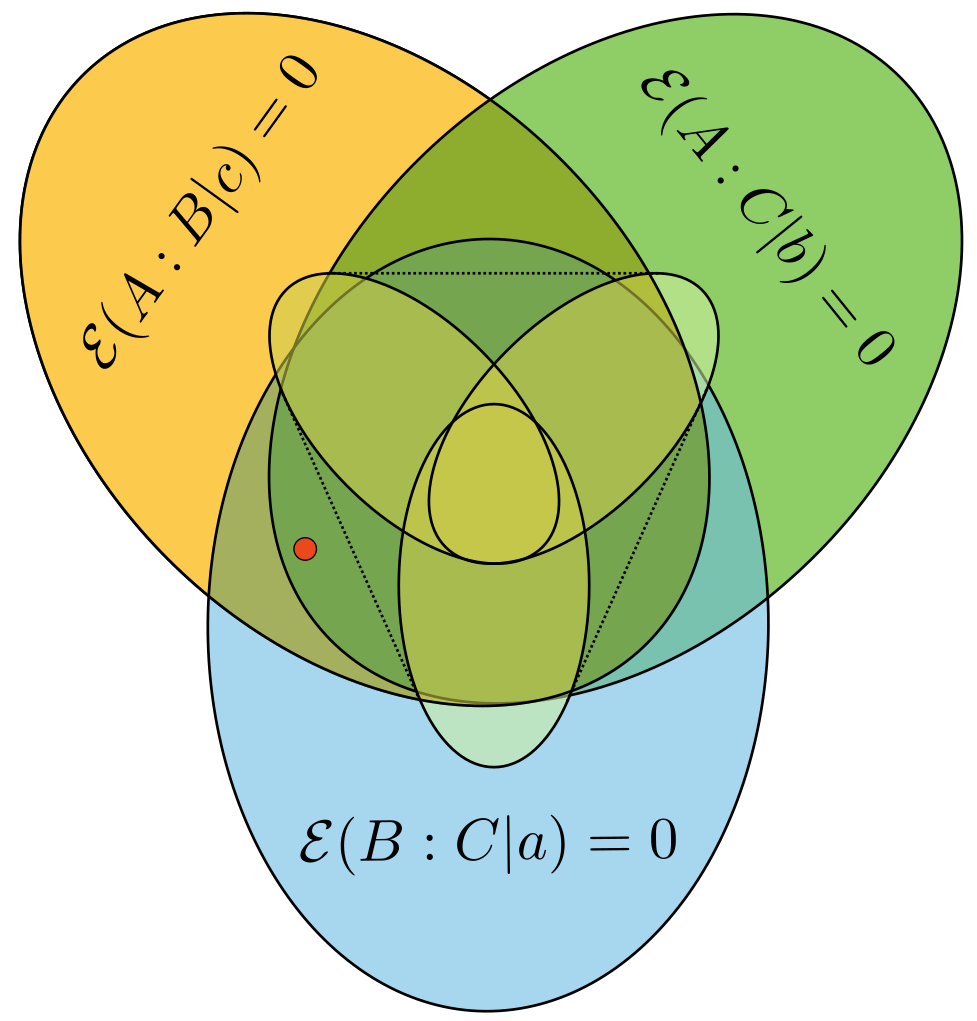}
    \caption{\textbf{GME combs without creatable entanglement.} In the main text, we provide a comb that lies outside the convex hull of separable combs, i.e., it is  GME, but does not display any conditional entanglement. Exemplarily, this processes is depicted by a red dot.}
    \label{fig::Venn_conditional}
\end{figure}
With this, we find a candidate process, for which we can verify numerically that it has vanishing conditional entanglement all measurements. We provide this process in App.~\ref{app::ConditionalZero}. 

Given that an action of any of the parties destroys it, we conclude that the GME such processes possess is somewhat inaccessible; if Bob feeds forward any state, Alice and Charlie do not share an entangled state, while any measurement on Alice's (Charlie's) system will only allow for an entanglement breaking channel between Bob and Charlie (Alice).

\section{Conclusion and outlook}
In this work -- using the quantum comb framework -- we studied the entanglement properties of temporal processes. Despite the vast body of work on the structure and stratification of bi- and multipartite entanglement, due to the constraints imposed by causality, \textit{a priori}, it had not been clear what types of entanglement and well known phenomena from entanglement theory can also persist in the temporal case. Here, we provided an overview of the types of entanglement that can occur in quantum combs, and furbished these different types of temporal quantum correlations with clear-cut operational interpretations; both in terms of the implications for the properties of the corresponding temporal process, and in terms of the building blocks of the underlying quantum circuits. 

First, we derived both necessary and sufficient conditions for bipartite (conditional or unconditional) entanglement and discussed its relation to entanglement-breaking channels and channel steering (for the case $\Ecal(BC:A) > 0)$. Based on these results, a more refined notion of entanglement for temporal processes suggests itself. While here we considered the standard definition of entanglement, i.e., based on convex combinations of product states, for processes an alternative definition in terms of product \textit{processes} seems equally reasonable. Then, a process would be entangled if it could not be written as a convex combination of processes that factorize in at least one splitting. Investigating the operational and structural implication of such a more fine-grained definition of entanglement for processes is subject to future research. 

Next, we demonstrated that there exist genuinely multipartite entangled processes for processes with an arbitrary number of steps, and all types of three-qubit mixed state entanglement can occur in the case where only two times are considered. Our investigation showed that there exist genuinely multipartite entangled processes with separable marginals. Finally, we provided an example of a GME process that does not display creatable entanglement, mirroring analogous results for the case of spatial entanglement. 

Along the way, we related the separability of quantum combs in different splittings to the presence of entanglement breaking channels in their circuit representation. While the presence of an entanglement breaking channel in the underlying dynamics leads to a separable comb, the converse is not necessarily true for all possible splittings. We provided partial results on when the converse actually holds. Besides this interpretation in dynamical terms, we also gave direct operational interpretations of separable combs. Depending on the splitting, separable combs break and/or swap entanglement of bipartite input states, thus extending the corresponding results for the case of quantum channels to the multi-time case.

Our research sheds light on the interplay between the underlying physical process and the entanglement properties of the corresponding comb. There are two follow-up questions that suggest themselves: Can entanglement in time be understood on a deeper structural level, and how can genuine quantum correlations in time be exploited for the enhancement of quantum information processing tasks? 

The former question concerns the peculiar structure of how entanglement in time is `created'. The only system that allows for the interaction between non-adjacent parties is the environment, which, in the end, is discarded. The environment `collides' with all the parties, and in doing so, transmits correlations between them. Such collision models have, for example, been used to investigate the creation of multipartite (spatial) entanglement~\cite{cakmak_robust_2019}. The entanglement structure of such processes inevitably will affect their quantum complexity. A detailed investigation of the implications for the strength and distribution that entanglement in combs can display is subject to future work. 

Considering the latter question, we have seen that many of the observed entanglement phenomena also exist in the spatial setting. It is important though, to emphasize that combs describe fundamentally different experimental scenarios than quantum states; quantum states allow for the computation of correlations for the case where spatially separated, non-communicating parties perform independent measurements. Quantum combs on the other hand describe communication scenarios, where spatial \textit{and} temporal correlation are concurrently of importance. The roles of the different parties are not limited to measurements, but consist of a read-out of information \textit{as well} as a feed-forward of it. Quantum stochastic processes with nontrivial entanglement structure, in an operational setting, can be considered as a resource~\cite{berk_resource_2019,gour_dynamical_2020}, much in the same way as usual quantum channels~\cite{resourcetheoryofasymmetricdistinguishability, bauml_resource_2019}. Consequently, having developed an understanding of the entanglement structure of spatio-temporal processes, a natural next step will be -- based on the results we provided -- to investigate how such genuine  quantum correlations can be of use in complex quantum information tasks that require both common cause \textit{and} direct cause causal relations for their successful implementation.

Finally, it is in principle possible to drop the requirement of causal order without a priori creating paradoxical situations. Such scenarios are described by process matrices, mathematical objects akin to quantum combs, but with the fundamental difference that they do not ensure global, but only local causality. While it has been shown that such more general processes can lead to advantages in quantum games~\cite{OreshkovETAL2012, branciard_simplest_2015}, the explicit origin of this advantage is hard to pin-point. Approaching this question based on the correlations that cannot persist in causally ordered combs but can occur in process matrices, might shed new light on the resourcefulness of these exotic objects.
\\ \\ \\
\noindent \textbf{Note.} \ Upon publication of this paper we became aware of Ref.~\cite{nery_simple_2021}, where the authors proved the existence of separable combs that cannot be represented by means of a circuit containing entanglement breaking channels, thus completing our considerations of Sec.~\ref{sec::SepEnt}.

\section*{Acknowledgements}
We thank Marco T{\'u}lio Quintino for valuable discussions. We acknowledge funding from the DAAD Australia-Germany Joint Research Cooperation Scheme through the project 57445566. SM acknowledges funding from the Austrian  Science  Fund  (FWF): ZK3 (Zukunftskolleg) and  Y879-N27 (START  project), the European  Union's Horizon 2020 research and innovation programme under the Marie Sk{\l}odowska Curie grant agreement No 801110, and the Austrian Federal Ministry of Education, Science and Research (BMBWF). KM is supported through Australian Research Council Future Fellowship FT160100073 and Australian Research Council's Discovery Project DP210100597. ZX is supported through the Alexander von Humboldt Foundation. CS acknowledges support by the Austrian Science Fund (FWF): J 4258-N27 (Erwin-Schrödinger Programm) and Y879-N27 (START project), the ERC (Consolidator Grant 683107/TempoQ) and the Austrian Academy of Sciences. OG is supported 
by the ERC (Consolidator Grant 683107/TempoQ) and the Deutsche Forschungsgemeinschaft (DFG, German Research Foundation - 
Projektnummern 447948357 and 440958198).

\begin{appendix}
 
 \section{Choi-Jamio{\l}kowski isomorphism and the link product}
 \label{app::Choi}
In this Appendix, we review the basic properties of the CJI for maps and combs, as well as the link product. For any map $\Mcal: \Bcal(\Hcal_X) \rightarrow \Bcal(\Hcal_Y)$, the CJI consists of letting $\Mcal$ act on one half of an (unnormalized) maximally entangled state to map it onto a positive (Choi) matrix 
\begin{gather}
    \label{eqn::CJI}
    M_{YX} := (\Mcal \otimes \Ical)[\widetilde\Phi_X^+] \in \Bcal(\Hcal_Y \otimes \Hcal_X)\, ,
\end{gather}
where $\widetilde \Phi^+ = \sum_{i,j = 1}^{d_j} \ketbra{ii}{jj} \in \Bcal(\Hcal_X \otimes \Hcal_{X'})$, and $\Hcal_X \cong \Hcal_{X'}$. Frequently encountered Choi matrices are the Choi matrix of the identity channel $\Ical_{X\rightarrow Y}[\rho] = \rho$, which is given by $\widetilde \Phi^+_{YX}$, and the Choi matrix of the partial trace $\tr_X$, which is given by $\ident_X$. It is straightforward to see that the map $\Mcal$ is CPTP iff its Choi matrix $M_{YX}$ satisfies $M_{YX}\geq 0$ and $\tr_Y (M_{YX}) = \ident_X$. 

In a similar vein the CJI exists for combs, with the difference that at each time $t_j$, one half of a (unnormalized) maximally entangled state is fed into the process (see Fig.~\ref{fig::Choi_Comb}).
\begin{figure}
    \centering
    \includegraphics[width=0.65\linewidth]{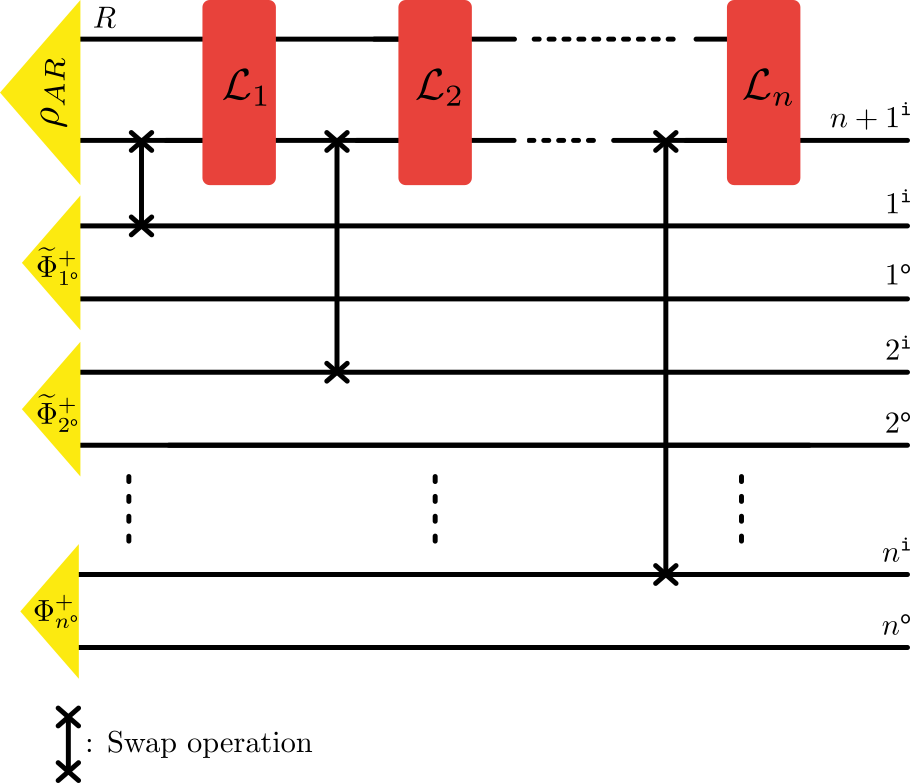}
    \caption{\textbf{CJI for combs.} At each time $t_j$, one half of a maximally entangled state it fed/swapped into the process. The resulting multipartite quantum state $\Upsilon_{n+1}' \in \Bcal(\Hcal_{1}^\inp \otimes \Hcal_{1}^\out \otimes \cdots \otimes \Hcal_{n}^\out \otimes \Hcal_{{n+1}}^\inp)$ contains all spatio-temporal correlations of the underlying process. For simpler notation, the CJI is generally phrased in terms of unnormalized maximally entangled states, i.e., $\Upsilon_{n+1} = d_{1}^\out\cdot d_{2}^\out\cdots d_{n}^\out \Upsilon_{n_1}'$}
    \label{fig::Choi_Comb}
\end{figure}
It is straightforward to check that the resulting quantum state satisfies the hierarchy of trace conditions of Eq.~\eqref{eqn::Causality}, while it has been shown in~\cite{chiribella_theoretical_2009} that any positive matrix that satisfies these conditions can indeed be considered the Choi matrix of a quantum process. Finally, by simply inserting the respective definitions, one obtains 
\begin{gather}
\Tcal_{n+1}[\Mcal_{x_1},\dots,\Mcal_{x_{n_1}}] = \tr[\Upsilon_{n+1}(M_{x_1}^{\mathrm{T}} \otimes \cdots \otimes M_{x_{n+1}}^{\mathrm{T}})]\, , 
\end{gather}
 the Born rule for temporal processes. 
 
 It is convenient, to not only phrase the respective maps in terms of their Choi matrices, but also to translate the concatenation of maps in terms of the respective Choi states. For example, the action of a map $\Mcal: \Bcal(\Hcal_X) \rightarrow \Bcal(\Hcal_Y)$ on a state $\rho_X \in \Bcal(\Hcal_X)$ can be written in terms of Choi states~\cite{chiribella_transforming_2008} as
 \begin{gather}
 \label{eqn::Action_map}
    \Mcal[\rho_X] = \tr_X[M_{YX} (\ident_Y \otimes \rho_X^\mathrm{T})] = M_{YX} \star \rho_X\, ,
 \end{gather}
 which, again, can be seen by direct insertion of the definition of $M_{XY}$. These considerations are generalised by the \textit{link product} $\star$~\cite{chiribella_theoretical_2009}, which translates the concatenation $\Gcal \circ \Fcal$ of two maps $\Fcal: \Bcal(\Hcal_X) \rightarrow \Bcal(\Hcal_Y)$ and $\Gcal: \Bcal(\Hcal_Y \otimes \Hcal_Z)$ to an operation on their respective Choi states $F_{YX}$ and $G_{ZY}$. The Choi state of $\Gcal \circ \Fcal$ is then obtained via 
\begin{gather}
F_{YX}\star G_{ZX} = \tr_Y[(F_{YX}\otimes \ident_Z)(G_{ZY}^{\mathrm{T}_Y} \otimes \ident_{X})]\, .
\end{gather}
For example, Eq.~\eqref{eqn::Action_map} can be understood as the Choi state of the concatenation $\Mcal \circ \Rcal$, where $\Rcal: \mathbbm{C} \rightarrow \Bcal(\Hcal_X)$ is the CPTP preparation map that prepares the state $\rho_X$ (i.e., the Choi matrix of $\Rcal$ is equal to $\rho_X$). Put somewhat intuitively, the link product traces Choi matrices over the Hilbert spaces they are both defined on, and corresponds to a tensor product on the remaining spaces. From the definition, it can be directly seen that it satisfies
\begin{gather}
    F \star G \star H =  (F \star G) \star H = F \star (G \star H)\, ,
\end{gather}
and it is commutative if each Hilbert space the respective matrices are defined on appears at most twice in the product~\cite{chiribella_theoretical_2009} (which is always the case in this article). Additionally, the link product of positive matrices is positive as well. To simplify tracking of the involved spaces, it is helpful to add subscripts to the respective Choi matrices. For example, then, it is easy to see that a product of the form 
\begin{gather}
    F_{XYZ} \star G_W \star H_Y \star L_Z\, 
\end{gather}
is defined on $\Bcal(\Hcal_X \otimes \Hcal_W)$ (as all the other spaces are traced over), and must be of the form $P_X \otimes G_W$, where $P_X = F_{XYZ} \star H_Y \star L_Z$. While every operation in this article could be written in terms of link products, it is sometimes more insightful to mix notation. For example, a term of the form $\tr(F_{XYZ} \star H_Y)$ could equivalently be written as $\ident_{XZ} \star F_{XYZ} \star H_Y$ (since $\ident_{XZ}$ is the Choi state of $\tr_{XZ}$). 
\section{Proof of Proposition~\ref{proposition_eb}\label{app_eb}}
In the following we provide the proof of Proposition~\ref{proposition_eb}:
\\ \\
{\noindent\it{\bf Proposition~\ref{proposition_eb}.}
If a comb $\Upsilon_{ABC}$ can be represented with an entanglement breaking channel on one of its wires, then it is separable in at least one possible splitting. In particular, an entanglement breaking channel on $A$ or $R$ implies $\Ecal(A:BC) = 0$, on $B$ implies $\Ecal(B:AC) = 0$, and on $C$ implies $\Ecal(C:AB) = 0$. }
\begin{proof}
As the proofs for all cases proceed along the same lines, we explicitly show the Proposition~\ref{proposition_eb} for EB channels on $C$ and $A$, with the other two cases following in a similar vein. First, for better book-keeping of the involved spaces, we will label the respective input and output spaces of the entanglement breaking maps differently, i.e, we have $\Ncal_{C'\rightarrow C}^\text{EB}: \Bcal(\Hcal_{C'}) \rightarrow  \Bcal(\Hcal_{C})$ and $\Ncal_{A'\rightarrow A}^\text{EB}: \Bcal(\Hcal_{R'}) \rightarrow  \Bcal(\Hcal_{R})$. Their Choi matrices are then denoted by $N^{\text{EB}}_{C'C}$ and $N_{R'R}^\text{EB}$, respectively. Additionally, in order for the resulting comb $\Upsilon_{ABC}$ to be defined on the spaces $ABC$, we add some primes to the spaces the building blocks of $\Upsilon_{ABC}$ are defined on (see Fig.~\ref{fig::Eb_circuit}). These relabellings are a mere notational aid and have no bearing on the results.

As each entanglement breaking channel can be represented as a measurement followed by a repreparation, we have $N^{\text{EB}}_{XX'} = \sum_\alpha E_{X}^{(\alpha)\mathrm{T}} \otimes \xi_{X'}^{(\alpha)}$, where $\{E_{X}^{(\alpha)\mathrm{T}}\}_\alpha$ is a POVM, $\{\xi_{X'}^{(\alpha)}\}_\alpha$ is a collection of quantum states, and $X\in\{C,A\}$. With this, a comb $\Upsilon_{ABC}$ with an entanglement breaking channel on $C'$ (and building blocks $\rho_{AR}$, $L_{BRC'}$) is given by
\begin{gather}
    \label{eqn::Eb_C}
\Upsilon_{ABC} = \rho_{AR} \star L_{BRC'} \star N^{\text{EB}}_{C'C} = \sum_\alpha  (\rho_{AR} \star L_{BRC'} \star E_{C'}^{(\alpha)}) \otimes \xi_C^{(\alpha)}\, ,
\end{gather}
which, as  $\rho_{AR} \star L_{BRC'} \star E_{C'}^{(\alpha)} > 0$ is separable in the splitting $C:AB$.
Analogously, a comb $\Upsilon_{ABC}$ resulting from a circuit with an entanglement breaking channel on the wire $A$ is given by 
\begin{gather}
\label{eqn::Eb_A}
\Upsilon_{ABC} = \rho_{A'R} \star N^{\text{EB}}_{A'A} \star L_{BRC}  = \sum_\alpha  \xi_A^{(\alpha)} \otimes (\rho_{A'R} \star E_{A'}^{(\alpha)}\star L_{BRC}) \, ,
\end{gather}
which -- for the same reason as above -- is separable in the splitting $A:BC$. The remaining cases follow in a similar vein. 
\end{proof}
Note that Eqs.~\eqref{eqn::Eb_C} and~\eqref{eqn::Eb_A} provide the most general form of a comb that contains an EB channel on the respective wire.

\section{Proof of Lem.~\ref{lem::FullSep}}
\label{app::LemFullSep}
Here, we provide a proof of Lem.~\ref{lem::FullSep} in the main text: 
\\ \\
{\noindent\it{\bf Lemma~\ref{lem::FullSep}.}
If a circuit can be represented with an EB channel on any two of the wires $\{A/R,B,C\}$ it is fully separable.}
\begin{proof}
We present an explicit proof for the case of entanglement breaking channels on the wires $A$ and $B$. the other cases follow analogously. In this case, the resulting comb $\Upsilon_{ABC}$ can be written as (see Fig.~\ref{fig::Eb_circuit}):
\begin{gather}
    \Upsilon_{ABC} = \rho_{A'R} \star N^{\mathrm{EB}}_{AA'} \star L_{B'RC} \star M^{\mathrm{EB}}_{BB'}\, , 
\end{gather}
where $N^{\mathrm{EB}}_{A'A}$ and $M^{\mathrm{EB}}_{BB'}$ are entanglement breaking channels. Consequently, each of them can be decomposed as $N^{\mathrm{EB}}_{A'A} = \sum_{\alpha} E_{A'}^{(\alpha)\mathrm{T}} \otimes \eta_A^{(\alpha)}$ and $M^{\mathrm{EB}}_{BB'} = \sum_{\beta}  F_{B}^{(\beta)}\otimes \tau_{B'}^{(\beta)}$, where $\{E_{A'}^{(\alpha)\mathrm{T}}\}_\alpha$ and $\{F_{B}^{(\beta)}\}_\beta$ are POVMs, and $\{\eta_A^{(\alpha)}\}_\alpha$ and $\{\tau_{B'}^{(\beta)}\}$ are sets of states. Inserting this into the above equation yields the fully separable comb
\begin{gather}
    \Upsilon_{ABC} = \sum_{\alpha\beta} p(\alpha) \eta^{(\alpha)}_{A} \otimes F_{B}^{(\beta)} \otimes \xi^{(\alpha, \beta)}_C\, ,
\end{gather}
where $p(\alpha) = \tr(\rho_{A'R}E_{A'}^{(\alpha)\mathrm{T}})$, and $\xi^{(\alpha,\beta)}_C = (L_{B'RC}\star E_{A'}^{(\alpha)} \star \rho_{A'R} \star \tau_B^{(\beta)})/p(\alpha)$. 
\end{proof}

\section{Circuit for a tripartite GME comb \texorpdfstring{$\Upsilon_{ABC}$}{}}
\label{app::Circuit}
 Here, we provide an explicit construction for the quantum circuit that leads to the comb $\Upsilon_{ABC}$ provided in Eq~\eqref{processW}, which  is of the form 
\begin{gather}
\label{eqn::Tripartite_app}
\Upsilon_{ABC}=\ketbra{s_1}{s_1}+\ketbra{s_2}{s_2}
\end{gather}
with 
\begin{gather}
\ket{s_1}=\frac{\ket{001}}{\sqrt{2}}+\frac{\ket{010}+\ket{100}}{2} \quad \textrm{and} \quad   \ket{s_2}=\frac{\ket{110}}{\sqrt{2}}+\frac{i(\ket{101}-\ket{011})}{2}\, .
\end{gather}
As mentioned in the main text, we have $\tr_C(\Upsilon_{ABC}) = \frac{1}{2} \ident_A\otimes \ident_B$. The circuit that leads $\Upsilon_{ABC}$ can be found directly by means of the Stinespring dilation; to this end, first, consider the purification 
\begin{gather}
\label{eqn::PurUpsilon}
    \ket{\Upsilon_{ABCD}} = \ket{s_1}\ket{0_D} + \ket{s_2}\ket{1_D}\, ,
\end{gather}
where $D$ is the auxiliary two-dimensional purification system. By construction, $\ket{\Upsilon_{ABCD}}$ is a purification of $\frac{1}{2} \ident_A\otimes \ident_B$, and so is $\Phi^+_{AA'} \otimes \tilde{\Phi}^+_{BB'}$. As the latter purification is minimal, there is an isometry $V_{A'B' \rightarrow DC}:=V$ on the purification spaces, such that 
\begin{gather}
\label{eqn::Isometry}
    \ketbra{\Upsilon_{ABCD}}{\Upsilon_{ABCD}} =  V(\Phi^+_{AA'} \otimes \tilde{\Phi}^+_{BB'})V^\dagger\, . 
\end{gather}
Consequently, we have 
\begin{gather}
\Upsilon_{ABC} =  \tr_D[V(\Phi^+_{AA'} \otimes \tilde{\Phi}^+_{BB'})V^\dagger]\, . 
\end{gather}
This equation corresponds to a quantum circuit of an open system dynamics, where one half of an unnormalized maximally entangled state is fed in. Concretely, said circuit starts in an initial state $\Phi^+_{AA'}$, the $A$ part of it swapped out of the process, while the $B'$ part of the state $\tilde{\Phi}^+_{BB'}$ is fed in. The $A'B'$ degrees of freedom undergo the unitary evolution $V$, and finally, the system $D$ is traced out (see Fig.~\ref{fig::Circuit} for a graphical representation).
\begin{figure}
    \centering
    \includegraphics[width=0.4\linewidth]{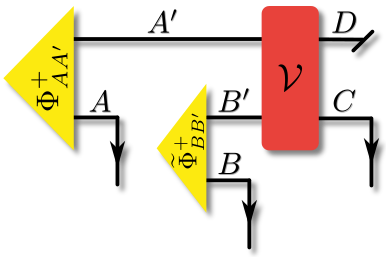}
    \caption{\textbf{Circuit that leads to $\Upsilon_{ABC}$.} See text for further details.}
    \label{fig::Circuit}
\end{figure}
This corresponds to the Choi matrix of a comb with building blocks $\Phi^+_{AA'}$ (as the initial system-environment state) and $V$ (as the dynamics between $t_1$ and $t_2$).  For the discussion of their properties, we relabel  $\Phi^+_{AA'} \rightarrow \Phi^+_{AR}$ and $B' \rightarrow B$ to abide by the notational convention set up in the main text. First, note that $\Phi^+_{AR}$ is entangled between the system $A$ and the environment $R$. The isometry $V$ can be constructed directly from Eqs~\eqref{eqn::PurUpsilon} and~\eqref{eqn::Isometry}. It is straightforward to see that
\begin{gather}
\begin{split}
    &V\ket{00} = \ket{10}\, , \, \, V\ket{01} = \frac{1}{\sqrt{2}}(\ket{00} + \iu \ket{11})\, \\
    &V\ket{10} = \frac{1}{\sqrt{2}}(\ket{00} - \iu \ket{11})\, , \, \, V\ket{11} = \ket{01}\, ,
\end{split}
\end{gather}
where the order of the spaces is $BR$ ($CD$), e.g., the first equation reads $V\ket{0_B0_R} = \ket{1_C0_D}$. With this, we have 
\begin{gather}
    V = \ketbra{10}{00} + \frac{1}{\sqrt{2}}(\ket{00} + \iu\ket{11})\bra{01} + \frac{1}{\sqrt{2}}(\ket{00} - \iu \ket{11})\bra{10} + \ketbra{01}{11}\, ,
\end{gather}
which is not just an isometry, but a unitary matrix. The corresponding Choi matrix of the map $\Vcal[\cdot] = V(\cdot)V^\dagger$ is proportional to a pure state, i.e., it is of the form $\ketbra{v}{v}$, with 
\begin{gather}
    \ket{v} = \ket{1000} + \frac{1}{\sqrt{2}}(\ket{00} + \iu\ket{11})\ket{01} \\
    + \frac{1}{\sqrt{2}}(\ket{00} - \iu \ket{11})\ket{10} + \ket{0111})\, ,
\end{gather}
where, the order of spaces is $CDBR$ (i.e., for example, we have the term $\ket{1_C0_D0_B0_R}$). Finally, to be able to compare this map to the ones we discussed in Props.~\ref{prop::1}~--~\ref{prop::3} the purification space $D$ needs to be traced over, yielding 
\begin{gather}
\begin{split}
\label{eqn::Lapp}
    L_{BRC} &:= \tr_D(\ketbra{v}{v}) \\
    &= \ketbra{001}{001} + \frac{1}{2}\ketbra{010}{010} + \frac{1}{2}\ketbra{011}{011} \\
    &+\frac{1}{2}\ketbra{100}{100} + \frac{1}{2}\ketbra{101}{101} + \ketbra{110}{110}\\
    &+ \frac{1}{\sqrt{2}}(\ketbra{001}{010}+ \text{h.c.}) + \frac{1}{\sqrt{2}} (\ketbra{001}{100}  + \text{h.c.}) \\
    &+ \frac{\iu}{\sqrt{2}}(\ketbra{011}{110} - \text{h.c.}) + \frac{\iu}{\sqrt{2}}(\ketbra{110}{101} - \text{h.c.}) \\
    &+ \frac{1}{2}(\ketbra{010}{100} + \text{h.c.}) - \frac{1}{2}(\ketbra{011}{101} + \text{h.c.})].
\end{split}
\end{gather}
Here, and in what follows, we have switched the ordering of spaces to $BRC$ (i.e., for example, we have the term $\ketbra{0_B1_R1_C}{1_B0_R1_C}$) to be consistent with the subscript of $L_{BRC}$.

As already mentioned, this tripartite entangled $\Upsilon_{ABC}$ can satisfy $\Ecal(X:Y|z) > 0$ for all possible bipartitions. Here, we can directly check that its underlying building blocks indeed satisfy the necessary and sufficient conditions presented in the main text. Firstly, it is easy to check that, e.g., $\ketbra{0}{0}_B \star L_{BRC}$ is proportional to an entangled state on $\Hcal_R\otimes \Hcal_C$, which, according to Prop.~\ref{prop::1} is necessary for entanglement of the form $\Ecal(A:C|b)$. Analogously, both $\ketbra{0}{0}_R \star L_{BRC}$ and $\ketbra{0}{0}_C \star L_{BRC}$ are proportional to entangled states, respectively. Consequently, the building blocks $\{\Phi^+_{AR},L_{BRC}\}$ of $\Upsilon_{ABC}$ that we derived satisfy all the necessary conditions of Props.~\ref{prop::1}~--~\ref{prop::3}. While the Stinespring dilation of a comb -- and as such its building blocks --is not unique~\cite{chiribella_theoretical_2009}, any pair $\{\rho_{AR},L_{BRC}\}$ that satisfies $\rho_{AR} \star L_{BRC} = \Upsilon_{ABC}$ for the comb of Eq.~\eqref{eqn::Tripartite_app} would satisfy the same conditions. 

Additionally, we can show directly, that $\Upsilon_{ABC}$ satisfies the sufficient conditions for entanglement in the splittings $AB:C$ and $AC:B$ laid out in Sec.~\ref{sec::Sufficient}. There, we showed that if the Choi matrix $L_{BRC}\star [\tr_A (\rho_{AR})]$ is entangled, then $\Upsilon_{ABC}$ is entangled with respect to both of the splittings $AB:C$ and $AC:B$. In our case, $L_{BRC}\star \tr_A (\rho_{AR})$ reads (with spaces arranged in the order $BC$)
\begin{gather}
\tr_D(\ketbra{v}{v}) \star \tr_A(\Phi^+_{AR})=  \frac{1}{2}\tr_{DR}(\ketbra{v}{v})\,. 
\end{gather}
Then, from Eq.~\eqref{eqn::Lapp} we have 
\begin{gather}
\begin{split}
L_{BRC}\star \tr_A (\rho_{AR}) &= \frac{1}{4}\ketbra{00}{00} +\frac{3}{4}\ketbra{10}{10} + \frac{3}{4}\ketbra{01}{01} + \frac{1}{4}\ketbra{11}{11} \\
&\phantom{=}+ \frac{1+\iu}{2\sqrt{2}}\ketbra{01}{10} + \frac{1-\iu}{2\sqrt{2}}\ketbra{10}{01}\, ,
\end{split}
\end{gather}
which is (proportional to) an entangled state. Consequently, the maximally tripartite entangled comb $\Upsilon_{ABC}$ we found satisfies all of the necessary and sufficient conditions for bipartite entanglement we discussed in the main text. 

\section{Non-zero tangle for the process in Eq. (\ref{processGHZ}) with \texorpdfstring{$n=3$}{}\label{AppGHZ}}
In order to see that the process in Eq.~\eqref{processGHZ}  has indeed a non-zero tangle note first that if $\tau(\rho)=0$, then also $\tau(A \rho A^\dagger/\tr(A^\dagger A\rho))=0$ with $A=\otimes_i A_i$ and $A_i\in GL(2,\Cn)$~\cite{3qubitSLOCC,normalform}. However, as we will show  there exist local invertible operators that transform $\Upsilon^{(3)}$ to a state which has non-zero tangle and therefore also $\tau(\Upsilon^{(3)})\neq 0$. These operators can be chosen to be $A_1=A_2=\textrm{Diag}(1/\sqrt{\alpha},\sqrt{\alpha})$ and $A_3=\textrm{Diag}(\alpha,1/\alpha)$ with $0<\alpha< 1/\sqrt{3}$ and $\textrm{Diag}(x,y)$ denoting a diagonal matrix with entries $x$ and $y$.
The transformed state is of the form
\begin{gather}
\frac{A \Upsilon^{(3)} A^\dagger}{\tr( A \Upsilon^{(3)}A^\dagger)} = \frac{1}{1+\alpha^2}\kb{GHZ_3}{GHZ_3} + \frac{\alpha^2}{2(1+\alpha^2)}(\kb{01}{01}+\kb{10}{10})\otimes \kb{0}{0}.
\end{gather}
Note that for convenience we used here the normalization for states.
This state can be detected for example for $\alpha< 1/\sqrt{3}$ to be not in the W-class by using the witness~\cite{3qubitmixed} $W=3/4 \one-\kb{GHZ_3}{GHZ_3}$ and therefore has non-zero tangle. Hence, also $\Upsilon^{(3)}$ is not in the W-class.

\section{Circuit for a GME comb on three steps}
\label{app::4partite}
Here, we provide an explicit example for a simple circuit that yields a genuinely multipartite entangled comb on three times, i.e., on four Hilbert spaces. In the case we discuss, the system and the environment are a qubit, respectively, and the four-legged resulting comb has an initial input leg ($A$) at time $t_1$, a middle slot (BC) at $t_2$, where general CP maps can be performed, and a final output leg ($D$) at time $t_3$ (see Fig.~\ref{fig::SqrtSwap}). 
\begin{figure}
    \centering
    \includegraphics[width=0.4\linewidth]{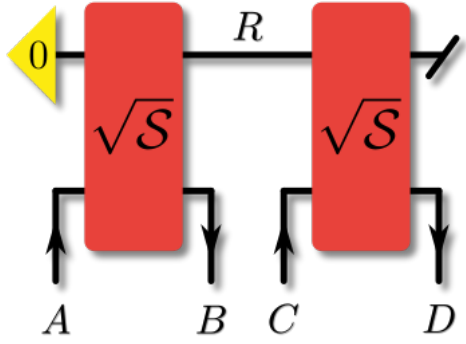}
    \caption{\textbf{$4$-partite genuinely multipartite entangled comb.} See text for details.}
    \label{fig::SqrtSwap}
\end{figure}
As an intermediate system environment in between times we choose the square root of the swap operator $\sqrt{\Scal}[\rho] = \sqrt{S} \rho {\sqrt{S}}^\dagger$, where $S$ is the Swap gate and $\sqrt{S}$ is represented by 
\begin{gather}
    \sqrt{S} = \frac{1}{2}\left(
    \begin{array}{c c c c} 
    2 & 0& 0& 0 \\ 
    0 & 1+\iu & 1-\iu & 0 \\
    0 & 1-\iu & 1+\iu & 0 \\
    0 & 0 & 0 & 2
    \end{array} \right)
\end{gather}
in the standard system-environment basis $\{\ket{00}, \ket{01}, \ket{10}, \ket{11}\}$. Choosing $\ket{0}$ for the initial state of the environment, and tracing out the environmental degrees of freedom at $t_3$ then yields the comb
\begin{gather}
\label{eqn::SwapComb}
    \Upsilon_{ABCD} = \frac{1}{4}(5\ketbra{s_1}{s_1} + 3\ketbra{s_2}{s_2})\, ,
\end{gather}
where 
\begin{align}
    &\ket{s_1} \propto -2\iu \ket{0000} + (1-\iu) \ket{0011} - \ket{1001} + (1-\iu) \ket{1100} + \ket{1111}\, , \\
    &\ket{s_2} \propto (1-\iu) \ket{0010} + \ket{1000} - \ket{1001} + (1-\iu) \ket{1011} + \ket{1110}\, ,
\end{align}
and the ordering of spaces is $ABCD$, i.e, for example, $\ket{1_A0_B0_C1_D}$. It is straightforward to see that the comb $\Upsilon_{ABCD}$ of Eq.~\eqref{eqn::SwapComb} is positive and satisfies the causality constraints of Eq.~\eqref{eqn::Causality}. Finally, using the SDP~\eqref{eqn::PPTSDP}, it is easy to check that $\Upsilon_{ABCD}$ is indeed genuinely multipartite entangled. 

Importantly, choosing the swap operator $\Scal$ instead of $\sqrt{S}$ would not have lead to a GME comb. Rather, the Swap would lead to identity channels between the respective inputs and outputs, leading to an overall non-GME comb of the form 
\begin{gather}
    \Upsilon^{(\Scal)}_{ABCD} = 2\Phi^+_{AD} \otimes \ketbra{0}{0}_B \otimes \ident_C\, ,
\end{gather}
where $2\Phi^+_{AD}$ is the Choi state of the identity channel between $A$ and $D$. 

\section{GME processes with vanishing conditional entanglement}
\label{app::ConditionalZero}
Following the steps laid out in Sec.~\ref{sec::Emergent}, we find the following GME comb with vanishing conditional entanglement: 
\begin{gather}
\label{eqn::GMEnoCond}
\begin{split}
&\Upsilon_{ABC}\\
&\approx 10^{-2}
\left(\begin{smallmatrix}
  13.8& 2.1 + 2 \iu& 1.2 + 1 \iu& -11.5 - 4.8 \iu& 
  4 + 1.5 \iu& 10.9 - 5.6 \iu& -0.1 + 0.3 \iu& 
  2.1 - 2.7 \iu \\
2.1 - 2 \iu& 
  36.2& -15.7 + 1.2 \iu& -1.2 - 0.1 \iu& -15.4 + 
   3.5 \iu& -0.4 - 1.5 \iu& -4.5 - 2.6 \iu& 
  0.1 - 0.3 \iu \\
1.2 - 0.1 \iu& -15.7 - 1.2 \iu& 24.9& 
  1.2 + 0.3 \iu& -3.2 - 11.9 \iu& 
  1.7 + 0.1 \iu& -0.4 - 0.3 \iu& 
  12.1 + 1.5 \iu \\
-11.5 + 4.8 \iu& -1.2 + 0.1 \iu& 
  1.2 - 0.3 \iu& 25.1& -0.3 - 1.6 \iu& 3.2 + 11.9 \iu& 
  14.9 - 5.2 \iu& 
  0.4 + 0.3 \iu \\
0.4 - 1.5 \iu& -15.4 - 3.5 \iu& -3.2 + 
   11.9 \iu& -0.3 + 1.6 \iu& 25& 
  0.3 + 1.4 \iu& -0.3 - 0.5 \iu& -11.9 + 2.6 \iu \\
10.9 + 
   5.6 \iu& -0.4 + 1.5 \iu& 1.7 - 0.1 \iu& 3.2 - 11.9 \iu& 
  0.3 - 1.4 \iu& 25& 15.7 + 0.7 \iu& 
  0.3 + 0.5 \iu \\
-0.1 - 0.3 \iu& -4.5 + 2.6 \iu& -0.4 + 
   0.3 \iu& 14.9 + 5.2 \iu& -0.3 + 0.5 \iu& 15.7 - 0.7 \iu& 35.8&
   3.2 + 3.7 \iu \\
02.1 + 2.7 \iu& 0.1 + 0.3 \iu& 
  12.1 - 1.5 \iu& 0.4 - 0.3 \iu& -11.9 - 2.6 \iu& 0.3 - 0.5 \iu&
   3.2 - 3.7 \iu& 14.2
\end{smallmatrix}
\right),
\end{split}
\end{gather}
where $\Upsilon_{ABC}$ is represented in the standard product basis, i.e., $\ket{\bar{0}} = \ket{0_A0_B0_C}, \ket{\bar{1}} = \ket{0_A0_B1_C}, \ket{\bar{2}} = \ket{0_A1_B0_C}, \dots $. Up to numerical precision, the above $\Upsilon_{ABC}$ is a proper comb, as it is positive semidefinite, satisfies $\tr (\Upsilon_{ABC}) = d_B = 2$, and we have $\tr_C(\Upsilon_{ABC}) = \tfrac{1}{2}\ident_{AB}$.

$\Upsilon_{ABC}$ is a good candidate for a GME comb that has vanishing conditional entanglement. As all the involved subsystems are qubits, the fact that this is indeed the case can be shown by applying the PPT criterion to the (normalised) post-measurement states. Any pure qubit state $\ket{\Psi}$ can be parameterized in terms of Pauli matrices as 
\begin{gather}
\begin{split}
    \ketbra{\Psi}{\Psi} = &\tfrac{1}{2}[\ident + \cos(\vartheta)\cos(\varphi) \sigma_x \\
    &+ \cos(\vartheta)\sin(\varphi) \sigma_y + \sin(\vartheta)\sigma_z]\, .
\end{split}
\end{gather}
With this, the respective conditioned combs $\tr_{X}(\Upsilon_{ABC}\ketbra{\Psi}{\Psi}_X)$ for arbitrary (pure) projective `measurements' on $X\in\{A, B,C\}$ can be computed\footnote{Even though, here, we call it a measurement, the respective action on $B$ corresponds to Bob \textit{preparing} a state $\ket{\Psi}_B$ and feeding it forward, \textit{not} to performing a measurement.}.

As the resulting conditioned combs are defined on two qubits, their entanglement can be decided by means of the PPT criterion; 
while it cannot be straightforwardly shown analytically that the resulting conditioned combs are indeed PPT, we can check the positivity of their partial transpose numerically, for a sufficiently large number of angles. Here, we randomly choose $5\times 10^5$ uniformly distributed pairs $(\varphi,\vartheta) \in [0,2\pi] \times [0,\pi]$ and compute the minimal eigenvalue of the (normalised) partial transpose $\rho_{YZ}^{\mathrm{T}_Y}(\vartheta, \varphi)$ of the respective reduced states conditioned states. The minimal obtained eigenvalues we found are given by
\begin{gather}
    \lambda_{\min}^{AB} = 0.0124, \lambda_{\min}^{BC} = 0.0187, \text{ and } \lambda_{\min}^{AC} = 0.0193. 
\end{gather}
Given that each of these values is well above zero, the resulting conditional states are all separable, and since the angles we sampled cover the relevant parameter space sufficiently finely, we conclude that the conditional states are indeed separable for \textit{all} projective measurements. Since any POVM element (or -- in the case of $B$ -- any state) can be written up to normalization as a convex combination of pure states, this then implies that the GME comb of Eq.~\eqref{eqn::GMEnoCond} displays vanishing conditional entanglement for \textit{all} conceivable measurements (or preparations, in the case of $B$).
\end{appendix}
\bibliography{referencesSciPost.bib}


\end{document}